\newtheorem{theorem}{Theorem}
\newtheorem{corollary}{Corollary}
\newtheorem{definition}{Definition}
\newtheorem{proposition}{Proposition}
\newtheorem{remark}{Remark}
\newcommand{\Tr}{\operatorname{Tr}}
\begin{document}

\title{Harnessing Causal Indefiniteness for Accessing Locally Inaccessible Data}

\author{Sahil Gopalkrishna Naik}
\affiliation{Department of Physics of Complex Systems, S. N. Bose National Center for Basic Sciences, Block JD, Sector III, Salt Lake, Kolkata 700106, India.}

\author{Samrat Sen}
\affiliation{Department of Physics of Complex Systems, S. N. Bose National Center for Basic Sciences, Block JD, Sector III, Salt Lake, Kolkata 700106, India.}

\author{Ram Krishna Patra}
\affiliation{Department of Physics of Complex Systems, S. N. Bose National Center for Basic Sciences, Block JD, Sector III, Salt Lake, Kolkata 700106, India.}

\author{Ananya Chakraborty}
\affiliation{Department of Physics of Complex Systems, S. N. Bose National Center for Basic Sciences, Block JD, Sector III, Salt Lake, Kolkata 700106, India.}

\author{Mir Alimuddin}
\affiliation{ICFO-Institut de Ciencies Fotoniques, The Barcelona Institute of Science and Technology, Av. Carl FriDRich Gauss 3, 08860 Castelldefels (Barcelona), Spain.}

\author{Manik Banik}
\affiliation{Department of Physics of Complex Systems, S. N. Bose National Center for Basic Sciences, Block JD, Sector III, Salt Lake, Kolkata 700106, India.}

\author{Pratik Ghosal}
\affiliation{Department of Physics of Complex Systems, S. N. Bose National Center for Basic Sciences, Block JD, Sector III, Salt Lake, Kolkata 700106, India.}

\begin{abstract}
Recent studies suggest that physical theories can exhibit indefinite causal structures, where the causal order of events is fundamentally undefined yet logically consistent. Beyond its foundational appeal, causal indefiniteness has also emerged as a novel information-theoretic resource, offering advantages in various information processing tasks. Here, we investigate its utility in the classical Data Retrieval (DR) task. In its simplest version, a referee encodes classical messages into bipartite quantum states and distributes the local parts to two distant parties, ensuring that neither can independently extract any information about the encoded message. To retrieve their assigned data, parties must collaborate, and we show that those embedded in an indefinite causal structure generally outperform those operating within a definite causal framework. For the bipartite case, we establish a duality between the DR task and the well known {\it Guess Your Neighbour’s Input} game and derive a criterion analogous to the Peres-Horodecki separability test to identify quantum processes that yield nontrivial success in the DR task. We also report an intriguing super-activation phenomenon, where two quantum processes, each individually inefficient for the DR task, become useful when combined. Extending the analysis to tripartite case, we show that classical causally inseparable processes can outperform quantum bi-causal processes in the DR task. Our study, thus, reveals several unexplored aspects of causal indefiniteness, inviting deeper investigation.  
\end{abstract}

\maketitle
\section{Introduction}\label{intro}
Classical physics, as described by Newton's mechanics and Maxwell's electrodynamics, is a deterministic theory formulated under the assumption that events are embedded in a predefined and definite causal structure \cite{Earman1977}. Quantum theory, on the other hand, is inherently probabilistic but still assumes a fixed and definite causal structure \cite{Peres1993}. For instance, two local quantum operations, $\mathcal{E}$ and $\mathcal{F}$, described by completely positive trace-preserving maps acting on their respective local Hilbert spaces \cite{Kraus1983}, are always assumed to be in a definite causal order: they are either time-like separated, with $\mathcal{E}$ in the causal past of $\mathcal{F}$ or vice versa, or they are space-like separated. General relativity, being a deterministic theory, allows a dynamically evolving causal structure determined by the underlying spacetime geometry \cite{Wald2010}. A crucial insight by L. Hardy suggests that in quantum gravity, the causal structure, which represents the dynamical degrees of freedom of gravity, should exhibit indefiniteness, similar to other dynamical degrees of freedom in ordinary quantum theory \cite{Hardy2005, Hardy2007, Hardy2009}. This subsequently motivated several approaches to study the notion of causal indefiniteness \cite{Chiribella2013, Oreshkov2012, Perinotti2017, Oeckl2019}. 

Beyond its profound foundational implications, more recently, causal indefiniteness has found novel advantages in various information protocols. The standard formulation of information theory, as established in Shannon's seminal 1948 work \cite{Shannon1948}, employs classical systems to store, transmit, and process information. Its quantum generalization, the quantum Shannon theory, exploits nonclassical features of quantum states, such as coherence, entanglement \textit{etc}, to obtain advantages in information processing \cite{Deutsch1985,Bennett1992,Bennett1993,Shor94,Grover1996}. However, in quantum Shannon theory, operations on quantum systems are assumed to occur in a definite causal order. Embedding them in an indefinite causal structure promises even further advantages in information processing. These include testing properties of quantum channels \cite{Chiribella2012,Araujo2014}, winning non-causal games \cite{Oreshkov2012}, reducing quantum communication complexity \cite{Guerin2016}, enhancing the precision of quantum metrology \cite{Zhao2020}, achieving thermodynamic advantages \cite{Felce2020,Guha2020,Zhu2023}, and improving classical and quantum information transmission rates through noisy quantum channels \cite{Ebler2018,Chiribella2021,Bhattacharya2021}. Moreover, in the context of quantum gravity, the emergence of indefinite causal order due to the spatial superposition of massive objects \cite{zych2019} and its resourcefulness for local implementation of nonlocal quantum operations on distributed quantum systems has also been studied \cite{Ghosal2023}. Along with these theoretical proposals, the demonstration of an indefinite causal primitive, called the quantum SWITCH, and its several advantages have also been reported in different experiments \cite{Procopio2015,Rubino2017,Goswami2018,Wei2019,Guo2020,Taddei2021,vanderLugt2023}.

In this article, we consider the task of retrieving classical information encoded in ensembles of multipartite quantum states by spatially separated parties and explore whether causal indefiniteness can empower the parties in their data-retrieval task compared to their counterparts embedded in a definite causal structure. Generally, spatial separation limits the amount of information that distant parties can access from such ensembles, even if they are allowed unlimited classical communication \cite{Badziag2003}. On the other hand, if the parties are allowed unlimited quantum communication, they can access as much information as they would if they were not spatially separated and could perform global measurements on the states. However, if additional restraints are imposed on quantum communication, the amount of accessible information may fall short of what would be obtainable with unlimited quantum communication. One such restraint, for example, could be limiting the number of communication rounds: each party can receive communication from others only once and can send communication only once, but only to those parties from whom they have not yet received any prior communication. This scenario is often referred to as the \textit{single-opening setup}, where each party's laboratory opens only once, during which they can receive a physical system in their lab, perform an operation on it, and then send it out, with each step occurring only once \cite{Oreshkov2012}. In a definite causal structure, for two parties, say Alice and Bob, this implies that either Alice can communicate to Bob, or Bob can communicate to Alice. In other words, defining the entire process during which the lab is open as a single event, either Alice’s event is in the causal past of Bob’s, or vice versa. These are the only two mutually exclusive possibilities in a definite causal structure that allows communication between Alice and Bob. In contrast, in an indefinite causal structure, the events of the parties can have an indefinite causal ordering, allowing for more exotic communication scenarios, even within the single-opening setup. This raises the central question of our investigation: \textit{Can causal indefiniteness allow the retrieval of more information than what is accessible in a definite causal structure through the restricted quantum communication protocol, as well as through the unlimited classical communication protocol?}

Specifically, we consider the following data retrieval (DR) task: A referee encodes an $N$-dit string (a `dit' is a $d$-level classical system) into an $N$-partite quantum state chosen from an ensemble of $d^N$ possible states and distributes it among $N$ different parties, with each dit assigned to a specific party. The encoding ensures that each party individually has zero information about the encoded string. The objective of each party is to retrieve their assigned dit through mutual collaboration. Clearly, the success probability---as their accessible information---depends on the type of protocol the parties are allowed to adopt. Since we restrict ourselves to the single-opening setup, the \textit{process-matrix framework} \cite{Oreshkov2012} serves as the natural candidate to describe indefinite causal structure. We show that in this scenario, the success probability is generally higher when the parties share causally inseparable processes, as compared to when they share causally separable processes.

In the bipartite scenario, we focus on a specific encoding scheme where two-bit strings are encoded into the two-qubit Bell basis states. We establish a strict duality between the success probability of the data retrieval task from Bell states (DR-B) and the success probability of the \textit{Guess-Your-Neighbour's-Input} (GYNI) game, which has been studied extensively to obtain causal inequalities \cite{Branciard2015}. In particular, we show that a protocol with quantum processes yielding nontrivial success in one can be suitably adapted to achieve same success in the other. This duality clearly establishes the advantage of causally inseparable processes over causally separable ones in the DR-B task. Furthermore, as we will discuss, this duality promises an alternative approach to addressing the question of obtaining the maximum quantum violation---the Tsirelson-like bound \cite{Liu2024}---of causal inequalities.

Despite the duality, we find that not all causally inseparable bipartite processes that yield nontrivial success in the GYNI game are advantageous for the DR-B task on their own, implying that mere causal inseparability is not a sufficient criterion. This naturally raises the question: \textit{Which class of quantum processes are then useful for the DR-B task?} We propose a stricter necessary condition: a Peres-like criterion that better characterizes the useful bipartite processes.\footnote{Recall that the celebrated positive-partial-transposition (PPT) condition turns out to be a necessary criterion for a bipartite quantum state $\rho_{AB}\in\mathcal{D}(\mathbb{C}^{d_A}\otimes\mathbb{C}^{d_B})$ to be separable \cite{Peres1996}. As shown by Horodecki {\it et al.} this also turns out to be the sufficient criterion for the quantum systems associated with Hilbert spaces $\mathbb{C}^{2}\otimes\mathbb{C}^{2}$, $\mathbb{C}^{2}\otimes\mathbb{C}^{3}$, and $\mathbb{C}^{3}\otimes\mathbb{C}^{2}$ \cite{Horodecki1996}.} 

Nevertheless, we show that any process yielding nontrivial success in the GYNI game, when combined with an appropriately chosen bipartite state, achieves nontrivial success in the DR-B task. This demonstrates an intriguing super-activation phenomenon, where two resources, which are individually insufficient for performing a specific task, become effective when combined. For example, two quantum channels, each with zero quantum capacity individually, can collectively facilitate reliable transmission of quantum information at a nonzero rate when used in combination \cite{Smith2008} (see also \cite{Leditzky2018, Yu2020, Sidhardh2022}). Similar observations have been made for the private capacity \cite{Li2009} and the zero-error quantum capacity \cite{Chen2010} of quantum channels. Our result demonstrates a similar phenomenon in the context of causal indefiniteness: two quantum processes, each of which individually fails to achieve nontrivial success in the DR-B task, can become advantageous when combined (see also \cite{Oreshkov2016,Feix2016}).

Subsequently, we consider the DR task in a tripartite scenario. We show that not only quantum processes but also classical processes with causal indefiniteness offer advantages. In fact, we demonstrate that tripartite classical processes exhibiting genuine causal inseparability generally outperform bicausal quantum processes.

The rest of the paper is organized as follows. In Section \ref{prelim}, we briefly review the process matrix framework along with key definitions. We also discuss the GYNI game and its associated causal inequality, which is violated by acausal processes. In Section \ref{DataRetrieval}, we introduce the general data retrieval task, illustrating the bipartite case with Bell-basis states encoding scheme in \ref{s3a}. We explore the different possible collaboration strategies for the DR-B task and their associated optimal success probabilities in a definite causal structure. In Section \ref{s4}, we demonstrate the advantage of causal indefiniteness by establishing a duality between the success probabilities of the DR-B task and the GYNI game. In Section \ref{s4a}, we derive the Peres-like criterion necessary for bipartite processes to exhibit nontrivial advantages in DR-B and discuss the super-activation phenomenon in Section \ref{s4b}. In Section \ref{s5}, we analyze the advantage of classical causally inseparable processes in the DR task in a tripartite setting. Finally, we conclude in Section \ref{s6}.

\section{Preliminaries}\label{prelim}
Several approaches have been proposed to study the notion of causal indefiniteness. For instance, L. Hardy has introduced the causaloid framework \cite{Hardy2005, Hardy2007}, while Chiribella, D’Ariano, and Perinotti have developed the higher-order maps framework \cite{Chiribella2013} (see also \cite{Perinotti2017}). On the other hand, Oreshkov, Costa, and Brukner have presented the process matrix framework \cite{Oreshkov2012} (see also \cite{Wechs2021} and references therein). Here, we briefly review the process matrix framework and some necessary concepts relevant for the present work.

\subsection{Process Matrix Framework}
This particular framework is based on the fundamental premise that physics in local laboratories is described by standard quantum theory. The most general quantum operation applied by an agent (say, $X$) is described by a quantum instrument $\mathcal{I}_X \equiv \left\{ \Lambda^{k}_X~|~\Lambda^{k}_X : \mathcal{L}(\mathcal{H}_{X_I})\mapsto  \mathcal{L}(\mathcal{H}_{X_O})\right\}_{k=1}^N$, where $\Lambda^{k}_X$'s are completely positive (CP) maps such that $\Lambda_X:=\sum_{k=1}^N \Lambda^{k}_X$ is a completely positive and trace-preserving (CPTP) map, also called a channel. Here, $\mathcal{L}(\mathcal{X})$ is the space of linear operators acting on $\mathcal{X}$, with $\mathcal{H}_{X_I}$ ($\mathcal{H}_{X_O}$) corresponding to the Hilbert spaces associated with the input (output) quantum system of the instrument $\mathcal{I}_X$. When the instrument is fed with an input quantum state $\rho \in \mathcal{D}(\mathcal{H}_{X_I})$, it yields a classical outcome $k \in \{1, \cdots, N\}$ and the state gets updated to $\Lambda^{k}_X(\rho)/\Tr[\Lambda^{k}_X(\rho)]$, where $\Tr[\Lambda^{k}_X(\rho)]$ denotes the probability of observing the $k^{\text{th}}$ outcome. The Choi-Jamiołkowsky (CJ) isomorphism provides a convenient way of representing any linear map $\Lambda^{k}_X$ \cite{Jamiokowski1972,Choi1975}:
\begin{align}
M^k_{X_IX_O}:=\textbf{id}\otimes\Lambda^{k}_X(\ket{\tilde{\phi}^+}\bra{\tilde{\phi}^+})\in \mathcal{L}(\mathcal{H}_{X_I}\otimes \mathcal{H}_{X_O}),
\end{align}
where $\ket{\tilde{\phi}^+}:=\sum_{i=1}^{d_{X_I}}\ket{i}\ket{i}\in\mathcal{H}^{\otimes2}_{X_I}$ is the unnormalized maximally entangled state, and $\textbf{id}$ denotes the identity channel. Denoting CJ of $\Lambda_X$ to be $\mathbb{M}_{X_IX_O}$, the complete positivity and trace preserving conditions respectively are given by
\footnotesize
\begin{subequations}
\begin{align}
M^k_{X_IX_O}&\geq0,~\forall~k~\in~~\{1,\cdots,N\}\\
_{X_O}\mathbb{M}_{X_IX_O}&:=\frac{1}{d_{X_O}}\left(\Tr_{X_O}[\mathbb{M}_{X_IX_O}]\right)\otimes \mathbb{I}_{X_O}\nonumber\\
&=\frac{1}{d_{X_O}}\mathbb{I}_{X_IX_O},
\end{align}
\end{subequations}
\normalsize
where $\mathbb{I}$ denotes the identity operator. The bold letter symbol $\mathbb{M}_{X_IX_O}$ is used to denote the CJ operator of the CPTP map. We denote the sets 
\begin{subequations}
\begin{align}
\mathcal{M}_A:=\{M_{A_IA_O}| M_{A_IA_O}\geq 0\},\\
\mathcal{M}_B:=\{M_{B_IB_O}| M_{B_IB_O}\geq 0\},
\end{align}  
\end{subequations}
as the set of all CJ matrices of CP maps corresponding to Alice and Bob respectively. Without assuming any background causal structure between Alice’s and Bob’s actions, the most general statistics is given by a bi-linear functional 
\begin{subequations}
\begin{align}
P:\mathcal{M}_A\times\mathcal{M}_B&\mapsto[0,\infty),~s.t.\label{positivity}\\
P(\mathbb{M}_{A_IA_O},\mathbb{M}_{B_IB_O})&=1,~\forall~\mathbb{M}_{A_IA_O},\mathbb{M}_{B_IB_O}.\label{normalization}
\end{align}  
\end{subequations}
Any such bi-linear functional can be written as
\begin{align}
&P\left(M_{A_IA_O},M_{B_IB_O}\right)\nonumber\\
&\hspace{1cm}=\Tr\left[W_{A_IA_OB_IB_O}\left(M_{A_IA_O}\otimes M_{B_IB_O}\right)\right],
\end{align}
where $W_{A_IA_OB_IB_O}\in\mbox{Herm}(\mathcal{H}_{A_I}\otimes \mathcal{H}_{A_O} \otimes \mathcal{H}_{B_I} \otimes \mathcal{H}_{B_O})$ is a Hermitian operator. The requirement (\ref{positivity}) ensures $W_{A_IA_OB_IB_O}$ to be a positive-on-product-test (POPT) \cite{Wallach2000,Rudolph2000,Caves2004} (see also \cite{Barnum2010,Naik2022,Lobo2022,Patra2023}). Furthermore, the requirement 
\begin{align}
\left\{\!\begin{aligned}
&\Tr[(\rho_{A_I^\prime B_I^\prime}\otimes W)(M_{A_I^\prime A_IA_O}\otimes M_{B_I^\prime B_IB_O})]\ge0,\\
&~~\forall~M_{A_I^\prime A_IA_O}\ge0,~M_{B_I^\prime B_IB_O}\ge0,~ \rho_{A_I^\prime B_I^\prime}\ge0  
\end{aligned}\right\},\label{process}
\end{align}
ensures $W_{A_IA_OB_IB_O}$ to be a positive operator, {\it i.e.}, $W_{A_IA_OB_IB_O}\ge0$ \cite{Barnum2005}. Such a positive operator satisfying the normalization condition (\ref{normalization}) is called a process matrix \cite{Oreshkov2012}. Mathematically, the normalization condition boils down to \cite{Arajo2015}:
\begin{align}
\left\{\!\begin{aligned}
_{A_IA_OB_IB_O}W&=\frac{1}{d_{A_I}d_{B_I}}\mathbb{I}_{A_IA_OB_IB_O},\\
_{A_IA_O}W=_{A_IA_OB_O}&W,~~_{B_IB_O}W=_{B_IB_OA_O}W,\\
W=_{A_O}W+&_{B_O}W-_{A_OB_O}W
\end{aligned}\right\}.
\end{align}    
Often we will avoid the suffixes of Hilbert spaces to avoid cluttering of notation. 

\subsection{Preliminary definitions}
The set of process matrices can be of two types: (i) causally separable and (ii) causally inseparable. Here we formally define the notion of causally (in)separable processes for bipartite case. These notions can also be generalized for the multipartite case as well.

\begin{definition}\label{csdef}
\textbf{Causally Separable Process:} A bipartite process $W^{CS}$ is called  causally separable if it admits the following convex decomposition
\begin{align}\label{cs}
W^{CS}:=p_1W^{A\prec B}+p_2W^{B\prec A}+p_3W^{B\nprec\nsucc A},    
\end{align}
where $W^{A\prec B}~(W^{B\prec A})$ denotes a process where Alice (Bob) is in the causal past of Bob (Alice), $W^{B\nprec\nsucc A}$ represents a process with $A$ and $B$ being spacelike separated, and $\Vec{p}=(p_1,p_2,p_3)$ denotes a probability vector. 
\end{definition}
A process is called \textit{\textbf{causally inseparable}} if it is not causally separable. Mathematically causally separable processes $W^{A\prec B}, W^{B\prec A}$ and $W^{B\nprec\nsucc A}$ satisfy \cite{Arajo2015}
\footnotesize
\begin{align}
\left\{\!\begin{aligned}
&W^{A\prec B}= _{B_O}W^{A\prec B},~
_{B_IB_O}W^{A\prec B}= _{A_OB_IB_O}W^{A\prec B},\\
&W^{B\prec A}= _{A_O}W^{B\prec A},~
_{A_IA_O}W^{B\prec A}= _{B_OA_IA_O}W^{B\prec A},\\
&\hspace{2.5cm}W^{B\nprec\nsucc A}=_{A_OB_O}W^{B\nprec\nsucc A}
\end{aligned}\right\}.
\end{align}  
\normalsize
Alternatively, a causally separable process can also be expressed as  
\begin{align}\label{cs1}
W^{CS}:=pW^{A\not\prec B}+(1-p)W^{B\not\prec A},    
\end{align}
for $p\in[0,1]$, where $W^{A\not\prec B}~(W^{B\not\prec A})$ denotes a process where communication from Alice (Bob) to Bob (Alice) is impossible.
The authors in \cite{Oreshkov2012}, first reported an example of an bipartite inseparable process. The causal indefiniteness is established through a causal inequality, derived under four assumptions: (i) definite causal structure, (ii) free choice, (iii) closed laboratories and (iv) time order in local labs. Violation of this inequality with the last three assumptions holding true, establishes indefiniteness of causal structure. Subsequently, a symmetric variant of the causal game - guess your neighbour's input (GYNI) - has been studied \cite{Branciard2015}, which we briefly recall in the next subsection. At this point one may ask the following question: \textit{Is it always possible to demonstrate the causal inseparability of a quantum process via violation of a causal inequality?} Feix, Araujo, and Brukner \cite{Feix2016} have shown the existence of a bipartite causally inseperable process which never the less only yields causal correlations (see also \cite{Oreshkov2016}). Such class of processes are termed as causal processes.
\begin{definition}
\textbf{Causal Process:} A bipartite process $W^{C}$ is called  causal if any correlation obtained from such a process can always be written as convex decomposition of one way no-signalling correlations. 
\end{definition}
\noindent Denoting  $\{M^{a|x}_{A_IA_O}\}$ and $\{M^{b|y}_{B_IB_O}\}$ as arbitrary choice of instruments for Alice and Bob ( Here $x,y$ denotes various choice of instruments and  $a,b$ represent outcomes of those instrument) any correlation obtainable from a causal process $W^{C}$ and arbitrary instruments $\{M^{a|x}_{A_IA_O}\}$ and $\{M^{b|y}_{B_IB_O}\}$ can be decomposed as 
\begin{align}
\Tr[W^{C}(M^{a|x}_{A_IA_O}\otimes M^{b|y}_{B_IB_O})]:=p(ab|xy)\nonumber\\
=p\times p^{A\prec B}(ab|xy)+(1-p)\times p^{B\prec A}(ab|xy)
\end{align}
Where $p\in[0,1]$ and $p^{A\prec B}(ab|xy)$ and $p^{B\prec A}(ab|xy)$ satisfy
\begin{subequations}
\begin{align}
\sum_{b}p^{A\prec B}(ab|xy)=\sum_{b}p^{A\prec B}(ab|xy')~\forall a,x,y,y'\\
\sum_{a}p^{A\prec B}(ab|xy)=\sum_{a}p^{A\prec B}(ab|x'y)~\forall b,x,x',y
\end{align}    
\end{subequations}
The notion of causally inseparable processes which are causal is analogous to the notion of entangled states which are local in the Bell Scenario. The existence of such process shows that some causally inseparable processes can never be tested in a device-independent manner (meaning they would never violate a causal inequality). Having said this, its natural to ask if causal processes remain causal when assisted with some entangled state between the parties. Oreshkov and Giarmatzi in \cite{Oreshkov2016} show an explicit example of a tripartite process that is causal but becomes noncausal if an additional entangled state is provided. This motives the authours in \cite{Feix2016} and \cite{Oreshkov2016} to come up with the notion of extensibly causal processes.
\begin{definition}
\textbf{Extensibly Causal Process:} A bipartite process $W^{EC}$ is called extensibly causal if $W^{EC}\otimes \rho_{AB}$ is causal for any entangled state $\rho_{AB}$.  
\end{definition}
At this point its not clear whether there are causally inseperable processes which are extensibly causal? \cite{Feix2016} provide numerical evidence for the existence of such extensibly causal processes. If we denote ${\bf W},{\bf W^{C}},{\bf W^{EC}}$ and ${\bf W^{CS}}$ as the set of all processes, set of all causal processes, set of all extensibly causal processes and set of all causally seperable processes respectively then we have the following subset realtion:
\begin{align}
{\bf W^{CS}}\subseteq {\bf W^{EC}}\subsetneq {\bf W^{C}} \subsetneq {\bf W}
\end{align}
Where we have numerical evidence for the strict subset relation ${\bf W^{CS}}\subsetneq {\bf W^{EC}}$.

Before moving on to the next subsection here we point out the notion of Genuine Indefiniteness for Multipartite Processes. For instance in the tripartite scenario one can define the notion of bi-causal processes as following

\begin{definition}\label{bcpdef}
\textbf{Bicausal Process:} A tripartite quantum process $W$ is called bi-causal if it allows a convex decomposition $W=\sum_ip_iW_i$, where each $W_i$ are causally separable across some bi partition.   
\end{definition}
Processes that are not bi-causal are termed as \textit{\textbf{genuinely causally inseparable}}. Recalling Eq.(\ref{cs1}), a tripartite bi-causal process $W$ can always be written as 
\begin{align}\label{bcs}
W=&p_1W^{A\not\prec BC}+ p_2W^{B\not\prec AC}+p_3W^{C\not\prec AB}\nonumber\\
&+p_4W^{BC\not\prec A}+p_5W^{AC\not\prec B}+p_6W^{AB\not\prec C},
\end{align}
with $\{p_i\}_{i=1}^6$ denoting a probability vector. Here, the term $W^{A\not\prec BC}$ denotes a process where Alice cannot communicate to Bob or Charlie, whereas in process $W^{BC\not\prec A}$ neither Bob nor Charlie can communicate to Alice. The other terms carry similar meanings. Importantly, in the process $W^{A\not\prec BC}/W^{BC\not\prec A}$ causal inseparability could be present between Bob and Charlie.
\subsection{Guess-Your-Neighbour's-Input}\label{GYNIdef}
The simplest version of the game involves two distant players, Alice and Bob. Alice (Bob) tosses a random coin to generate a random bit $i_1~(i_2)\in\{0,1\}$. Each party aims to guess the coin state of the other party. Denoting their guesses as $a$ and $b$ respectively, the success probability reads as 
\begin{align}
P_{succ}^{\scalebox{.6}{GYNI}}=\sum_{i_1,i_2=0}^1\frac{1}{4}P(a=i_2,b=i_1|i_1,i_2)   
\end{align}
As it turns out for any causally separable process the success probability of GYNI is bounded by $1/2$ \cite{Branciard2015}, leading to the causal inequality
\begin{align}
P_{succ}^{\scalebox{.6}{GYNI}}\leq 1/2.   
\end{align}
Interestingly, their exist process matrices that lead to violation of this inequality, and thus establishes causal indefiniteness. An explicit such example, along with the Alice's and Bob's instruments are given by 
\footnotesize
\begin{align}
\left\{\!\begin{aligned}
&W^{Cyril}_{A_IA_OB_IB_O}=\frac{1}{4}\left[\mathbb{I}^{\otimes4}+\frac{1}{\sqrt{2}}\left(\sigma^3\sigma^3\sigma^3\mathbb{I}+\sigma^3\mathbb{I}\sigma^1\sigma^1\right)\right],\\
&\mathcal{I}^{(0)}\equiv\left\{M^{0|0}_{X_IX_O}=0,~M^{1|0}_{X_IX_O}=2\ket{\phi^+}\bra{\phi^+}\right\},\\
&\mathcal{I}^{(1)}\equiv\left\{M^{0|1}_{X_IX_O}=(\ket{0}\bra{0})^{\otimes2},~M^{1|1}_{X_IX_O}=(\ket{1}\bra{1})^{\otimes2}\right\},
\end{aligned}\right\},\label{gyni0}
\end{align}
\normalsize
where $X\in\{A,B\}$, and $\sigma^1~\&~\sigma^3$ are qubit Pauli-X and Pauli-Z operators. While the strategy (\ref{gyni0}) yields a success $P_{succ}^{\scalebox{.6}{GYNI}}=5/16(1+1/\sqrt{2})\approx0.5335>1/2$, numerical evidence suggests possibility of other quantum processes leading to higher success \cite{Branciard2015}.  

\section{Data Retrieval task}\label{DataRetrieval}
In this section, we formally define the Data Retrieval (DR) task (see Fig.\ref{fig1}). A referee distributes an $N$-dit string message $\mathbf{x} = x_1 x_2 \cdots x_N \in \{0, 1, \dots, d-1\}^N$ among $N$ parties $\{\text{Alice}^k\}_{k=1}^N$, each residing in a spatially separated laboratory. The referee aims to ensure that no individual party can obtain any information about the string $\mathbf{x}$ on their own. We refer to this requirement as the \textit{hiding condition}. To meet this condition, the referee encodes the message into an ensemble of $N$-partite quantum states $\{\rho^{\mathbf{x}}_{A^1\cdots A^N}\}_{\mathbf{x}}\subset\mathcal{D}(\otimes_{k=1}^N \mathcal{H}_{A^k})$, and sends the $k$-th part of the encoded state to Alice$^k$. The hiding condition requires that the individual marginals of the encoded state be independent of $\mathbf{x}$, {\it i.e.},  
\begin{align}
\rho^{\mathbf{x}}_{A^k}:=\Tr_{A^1\cdots A^N\setminus A^k}\left[\rho^{\mathbf{x}}_{A^1\cdots A^N}\right]=\rho_{A^k},~\forall~\mathbf{x}, k,    
\end{align}
where $\Tr_{A^1\cdots A^N\setminus A^k}(\cdot)$ denotes partial trace over all the subsystems except the $k^{th}$ one. Note that $\rho_{A^k}$ can, in general, differ from $\rho_{A^{k'}}$ for $k \neq k'$. The goal of each player is to retrieve their corresponding dit value, i.e., Alice$^k$ aims to retrieve the value $x_k$.
\begin{figure}[t!]
\centering
\includegraphics[scale=0.36]{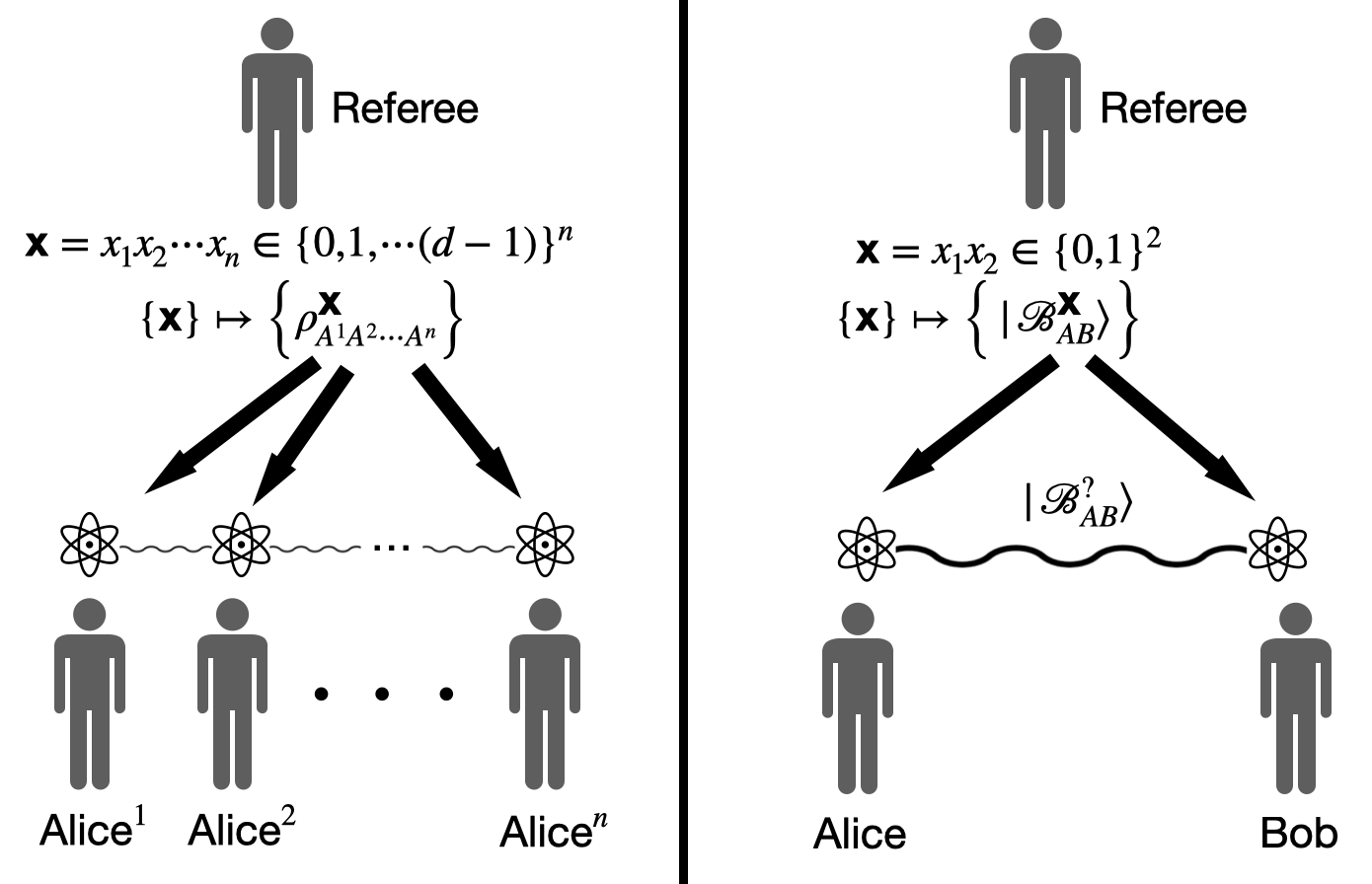}
\caption{Data Retrieval (DR) task involving n parties (left). Referee encodes the stings ${\bf x}\equiv x_1x_2\cdots x_n\in\{0,\cdots,d-1\}^{\times n}$ into n-partite quantum states $\rho^{\bf X}_{A^1A^2\cdots A^n}\in\mathcal{D}(\mathcal{H}_{A^1}\otimes\cdots\otimes\mathcal{H}_{A^n})$ and distributes the subsystems to the respective parties. Local marginals being independent of ${\bf x}$ ensure that none of the parties can reveal any information about ${\bf x}$ on their own. However, collaboration among  themselves might be helpful to retrieve their respective messages. (Right) Data Retrieval task with two-qubit Bell states encoding.}
\label{fig1}
\vspace{-.5cm}
\end{figure}

To this end, the parties can adopt different collaboration strategies depending on the resources that are available to them. If the parties are allowed to communicate only classical information among each other, then the parties, performing local quantum operations on their respective shares of the composite system, can resort to the operational paradigm of local operation and classical communication (LOCC), which naturally arises in the resource theory of quantum entanglement \cite{Horodecki2009}. On the other hand, replacing classical communication lines by quantum channels one obtains a stronger form of collaboration: local operation and quantum communication (LOQC). It is important to note that if the encoded states are not mutually orthogonal, then the parties cannot perfectly determine their respective messages even with LOQC collaboration. In other words, the perfect success requires the encoded states to be mutually orthogonal, \textit{i.e.,} $\Tr[\rho^{\mathbf{x}}\rho^{\mathbf{x}'}]=0~\forall~\mathbf{x}\neq\mathbf{x}'$. 

Notably, both in LOQC and in LOCC collaborations, the protocol goes in multi rounds \cite{Chitambar2014}. At this stage, one may impose restriction on the rounds of communication. For instance, consider the single-opening setup: at a given run of the task, each party's laboratory opens only once, during which they can receive a system in their laboratory, implement an operation on it, and send it out of the laboratory, with each step occurring only once. Such a collaboration scenario is considered while developing the process matrix framework \cite{Oreshkov2012}. In this single-opening setup, we particularly focus whether causally inseparable processes could be advantageous over causally separable processes in DR tasks. To address this question, we consider an explicit example of such a task for the simplest case.      

\subsection{Data Retrieval from Bell States}\label{s3a}
Consider the simplest case of DR task with $d=2$ and $N=2$. Referee encodes the strings $\mathbf{x}=x_1x_2\in\{0,1\}^2$ into maximally entangled basis of $\mathbb{C}^2\otimes\mathbb{C}^2$ system: 
\begin{align}
{\bf x}\mapsto\ket{\mathcal{B}^{\bf x}}_{AB}:=\frac{1}{\sqrt{2}}(\ket{0x_1}+(-1)^{x_2}\ket{1\Bar{x}_1})_{AB},  
\end{align}
where $\{\ket{0},\ket{1}\}$ represents the computational basis. Accordingly, the encoded states are distributed between Alice and Bob. Clearly the hiding condition is satisfied,
\begin{align}
\rho^{\mathbf{x}}_{A(B)}=\Tr_{B(A)}[\ket{\mathcal{B}^{\mathbf{x}}}_{AB}\bra{\mathcal{B}^{\mathbf{x}}}]=\mathbf{I}_{A(B)}/2,~\forall~\mathbf{x}. 
\end{align}
Since Bell states are used for encoding, we call this task Locally Inaccessible Data Retrieval task from Bell states (DR-B). Alice and Bob have to guess the bit value $x_1$ and $x_2$, respectively. Denoting their respective guesses `$a$' and `$b$', the success of the task reads as 
\begin{align}
P_{succ}^{\scalebox{.6}{DR-B}}=\sum_{x_1,x_2=0}^1\frac{1}{4}P(a=x_1,b=x_2|\mathcal{B}^{\bf x}_{AB}).  
\end{align}
Since the local parts of the encoded states do not contain any information of ${\bf x}$, without any collaboration a random guess by Alice of Bob will yield $P_{succ}^{\scalebox{.6}{DR-B}}=1/4$. However, they can come up with a better strategy even without any collaboration. 
\begin{proposition}\label{prop1}
Without any collaboration Alice and Bob can achieve the success $P_{succ}^{\scalebox{.6}{DR-B}}=1/2$.
\end{proposition}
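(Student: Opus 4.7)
The plan is to exhibit a single explicit local-measurement strategy that, without any classical or quantum communication between Alice and Bob, already achieves $P_{succ}^{\scalebox{.6}{DR-B}}=1/2$. I would have both parties locally measure $\sigma^y$ on their own qubit and output a bit from the outcome, but with opposite labelling conventions: Alice maps the $\pm 1$ eigenvalues of $\sigma^y$ to $a=0/1$ respectively, while Bob maps them to $b=1/0$.

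The key step is to act with $\sigma^y\otimes\sigma^y$ on each encoded Bell state. Using $\sigma^y\ket{0}=i\ket{1}$, $\sigma^y\ket{1}=-i\ket{0}$ together with the definition $\ket{\mathcal{B}^{\mathbf{x}}}=\tfrac{1}{\sqrt{2}}(\ket{0x_1}+(-1)^{x_2}\ket{1\bar{x}_1})$, a short direct expansion should give $(\sigma^y\otimes\sigma^y)\ket{\mathcal{B}^{\mathbf{x}}}=(-1)^{x_1\oplus x_2\oplus 1}\ket{\mathcal{B}^{\mathbf{x}}}$. Consequently, the two local $\sigma^y$ outcomes are perfectly anti-correlated on $\ket{\mathcal{B}^{00}}=\ket{\phi^+}$ and $\ket{\mathcal{B}^{11}}=\ket{\psi^-}$ (where $x_1\oplus x_2=0$), and perfectly correlated on $\ket{\mathcal{B}^{01}}=\ket{\phi^-}$ and $\ket{\mathcal{B}^{10}}=\ket{\psi^+}$ (where $x_1\oplus x_2=1$). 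Since each reduced state is maximally mixed, both raw outcomes are individually uniform.

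Under the opposite-labelling convention, the anti-correlated cases produce $(a,b)$ uniformly distributed over $\{(0,0),(1,1)\}$, whereas the correlated cases produce $(a,b)$ uniformly over $\{(0,1),(1,0)\}$. Matching this against the requirement $(a,b)=(x_1,x_2)$ case by case gives $P(a=x_1,b=x_2\,|\,\mathcal{B}^{\mathbf{x}})=1/2$ for each of the four values of $\mathbf{x}$, so averaging over the uniform prior on $\mathbf{x}$ yields $P_{succ}^{\scalebox{.6}{DR-B}}=1/2$.

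There is no substantial obstacle; the only mildly non-trivial step is singling out $\sigma^y$ as the right observable. The correlators $\sigma^z\otimes\sigma^z$ and $\sigma^x\otimes\sigma^x$ have eigenvalues on the Bell basis that depend only on $x_1$ or only on $x_2$, so strategies built from them saturate at $1/4$. What is special about $\sigma^y\otimes\sigma^y$ is that its eigenvalue encodes the XOR $x_1\oplus x_2$; combined with the uniformly random local bits forced by the hiding condition, this is precisely the ingredient that lets $(a,b)$ coincide with $(x_1,x_2)$ with probability exactly $1/2$.
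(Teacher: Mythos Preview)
Your proof is correct and takes essentially the same approach as the paper: both have Alice and Bob locally measure $\sigma^y$ (the paper's $\sigma^2$) with opposite output-labelling conventions, and then verify case by case that each Bell state yields the correct $(a,b)$ with probability $1/2$. The only difference is presentational---you derive the correlation pattern via the eigenvalue identity $(\sigma^y\otimes\sigma^y)\ket{\mathcal{B}^{\mathbf{x}}}=(-1)^{x_1\oplus x_2\oplus 1}\ket{\mathcal{B}^{\mathbf{x}}}$, whereas the paper simply tabulates the four cases.
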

\begin{proof}
Their protocol goes as follows: both the players performs $\sigma^2$ ({\it i.e.} Pauli-Y) measurement on their part of the encoded state received from the referee. Alice answers $a=0$ ($a=1$) for `up' (`down') outcome, while Bob answers $b=1$ ($b=0$) for `up' (`down') outcome. The claimed success probability follows from Table \ref{tab1}.   
\end{proof}
\begin{center}
\begin{table}[t!]
\begin{tabular}{|c|c|c|c|c|c|c|}
\hline
$x_1x_2$&$\ket{\mathcal{B}^{\mathbf{x}}}$ & Alice's outcome  & Bob's outcome & ~~$a$~~ & ~~$b$~~ & Status \\ \hline\hline
\multirow{2}{*}{00} & \multirow{2}{*}{$\ket{\phi^+}$} & up & down & 0 & 0 &  success \\ \cline{3-7} 
&& down & up & 1 & 1 & failure \\ \hline
\multirow{2}{*}{01} & \multirow{2}{*}{$\ket{\phi^-}$} & up & up & 0 & 1 & success \\ \cline{3-7} 
&& down & down & 1 & 0 & failure \\ \hline
\multirow{2}{*}{10} & \multirow{2}{*}{$\ket{\psi^+}$} & up & up & 0 & 1 &failure  \\ \cline{3-7} 
&& down & down & 1 & 0 & success\\ \hline
\multirow{2}{*}{11} & \multirow{2}{*}{$\ket{\psi^-}$} & up & down & 0 & 0 & failure  \\ \cline{3-7} 
&& down & up & 1 & 1 & success\\ \hline
\end{tabular}
\caption{Protocol for DR-B task as discussed in Proposition \ref{prop1}. Success probability turns out to be $P_{succ}^{\scalebox{.6}{DR-B}}=1/2$.}\label{tab1}
\vspace{-.5cm}
\end{table}
\vspace{-.7cm}
\end{center}
\begin{proposition}\label{prop2}
Under LOCC collaboration $P_{succ}^{\scalebox{.6}{DR-B}}\le1/2$. 
\end{proposition}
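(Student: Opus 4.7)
My plan is to exploit the fact that any LOCC POVM on two qubits is separable, combined with the Peres--Horodecki equivalence $\mathrm{PPT}=\mathrm{SEP}$ on $\mathbb{C}^{2}\otimes\mathbb{C}^{2}$. Concretely, any LOCC protocol for the DR-B task induces a joint four-outcome POVM $\{M_{\mathbf{x}}\}_{\mathbf{x}\in\{0,1\}^{2}}$, where each $M_{\mathbf{x}}$ is separable (a convex combination over protocol branches of local product POVM elements). By Peres--Horodecki, the partial transposes $N_{\mathbf{x}}:=M_{\mathbf{x}}^{T_{B}}$ are all positive and satisfy $\sum_{\mathbf{x}}N_{\mathbf{x}}=\mathbf{I}$, so $\{N_{\mathbf{x}}\}$ is itself a valid POVM. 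This lifts the problem to one about an arbitrary POVM acting against the partial-transposed Bell projectors.

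Next I would establish the key identity, verifiable by a direct matrix calculation on each of the four Bell states:
\begin{align*}
(\ket{\mathcal{B}^{\mathbf{x}}}\bra{\mathcal{B}^{\mathbf{x}}})^{T_{B}}=\tfrac{1}{2}\bigl(\mathbf{I}-2\ket{\mathcal{B}^{\bar{\mathbf{x}}}}\bra{\mathcal{B}^{\bar{\mathbf{x}}}}\bigr),\qquad\bar{\mathbf{x}}:=\mathbf{x}\oplus 11,
\end{align*}
i.e.\ each partial-transposed Bell projector equals the identity minus twice the projector onto the Bell state labeled by the bit-flipped string. Then using $\Tr[XY]=\Tr[X^{T_{B}}Y^{T_{B}}]$ together with $\sum_{\mathbf{x}}\Tr[N_{\mathbf{x}}]=\Tr[\mathbf{I}]=4$, the success probability simplifies as
\begin{align*}
P_{succ}^{\text{DR-B}}&=\tfrac{1}{4}\sum_{\mathbf{x}}\Tr\!\bigl[N_{\mathbf{x}}(\ket{\mathcal{B}^{\mathbf{x}}}\bra{\mathcal{B}^{\mathbf{x}}})^{T_{B}}\bigr]\\
&=\tfrac{1}{2}-\tfrac{1}{4}\sum_{\mathbf{x}}\Tr\!\bigl[N_{\mathbf{x}}\ket{\mathcal{B}^{\bar{\mathbf{x}}}}\bra{\mathcal{B}^{\bar{\mathbf{x}}}}\bigr]\;\leq\;\tfrac{1}{2},
\end{align*}
the final inequality being immediate from the positivity of both $N_{\mathbf{x}}$ and the Bell projectors.

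The main technical step is the partial-transpose identity, a routine but slightly laborious case-by-case check on the four Bell projectors. The chief conceptual limitation is the reliance on $\mathrm{PPT}=\mathrm{SEP}$, which holds only in $2\otimes 2$ (and $2\otimes 3$) systems; extending this bound to higher-dimensional or multipartite DR tasks would require finer separability witnesses or a direct argument based on the one-directional signaling structure of LOCC combined with the hiding condition $\rho_{A}^{\mathbf{x}}=\mathbf{I}/2$. As a sanity check, the bound $1/2$ matches the achievability demonstrated in Proposition \ref{prop1} (which uses no collaboration at all), so no LOCC strategy can exceed what is already attainable from independent local measurements.
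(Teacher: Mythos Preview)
Your argument is correct and in fact more informative than the paper's: the paper simply cites the known LOCC bound for Bell-state discrimination \cite{Ghosh2001,Nathanson2005,Bandyopadhyay2015} without reproducing any details, whereas you give a self-contained proof via the partial-transpose trick. The identity $(\mathcal{B}^{\mathbf{x}})^{T_B}=\tfrac{1}{2}\mathbf{I}-\mathcal{B}^{\bar{\mathbf{x}}}$ and the subsequent computation are fine.

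One minor overclaim: you do not actually use the Peres--Horodecki \emph{equivalence} $\mathrm{PPT}=\mathrm{SEP}$. You only need the trivial (Peres) direction $\mathrm{SEP}\Rightarrow\mathrm{PPT}$, which holds in every dimension: if $M_{\mathbf{x}}=\sum_i A_i\otimes B_i$ with $A_i,B_i\ge 0$, then $M_{\mathbf{x}}^{T_B}=\sum_i A_i\otimes B_i^{T}\ge 0$. Hence the ``conceptual limitation'' you flag at the end is not a limitation of your argument at all; your bound actually applies to the strictly larger class of PPT measurements (and thus to separable and LOCC), and this part of the reasoning is dimension-independent. What \emph{is} specific to the $2\otimes 2$ Bell basis is the partial-transpose identity itself, which is what drives the numerical value $1/2$.
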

\begin{proof}
The proof simply follows from the optimal probability of distinguishing Bell states under LOCC \cite{Ghosh2001,Nathanson2005,Bandyopadhyay2015}.     
\end{proof}
\begin{proposition}\label{prop3}
Under LOQC collaboration $P_{succ}^{\scalebox{.6}{DR-B}}=1$. 
\end{proposition}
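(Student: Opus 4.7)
The plan is to exhibit an explicit LOQC protocol that achieves $P_{succ}^{\scalebox{.6}{DR-B}}=1$, which together with the trivial upper bound $P_{succ}^{\scalebox{.6}{DR-B}}\le 1$ gives the claimed equality. The key observation is that the encoding map ${\bf x}\mapsto\ket{\mathcal{B}^{\bf x}}_{AB}$ sends the four two-bit strings to the four orthonormal Bell basis states of $\mathbb{C}^2\otimes\mathbb{C}^2$, so a joint Bell measurement perfectly identifies the encoded string; the only task is to reduce this global measurement to an LOQC protocol and then redistribute the recovered information between the two parties.

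First, I would invoke a quantum communication round in which Alice forwards her qubit (subsystem $A$) through a noiseless quantum channel to Bob. After this round Bob holds the entire bipartite encoded state $\ket{\mathcal{B}^{\bf x}}_{AB}$ in his laboratory. Second, I would have Bob perform the projective measurement $\{\ket{\mathcal{B}^{\bf y}}\bra{\mathcal{B}^{\bf y}}\}_{{\bf y}\in\{0,1\}^2}$ in the Bell basis; by orthogonality of the four encoded states the outcome satisfies ${\bf y}={\bf x}$ with probability one. Third, I would use a classical communication round in which Bob announces the first bit $y_1$ to Alice; Alice then outputs $a=y_1=x_1$ and Bob outputs $b=y_2=x_2$. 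The success probability evaluates to $P(a=x_1,b=x_2\mid\mathcal{B}^{\bf x}_{AB})=1$ for every ${\bf x}$, and averaging over the uniform referee distribution yields $P_{succ}^{\scalebox{.6}{DR-B}}=1$.

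There is no genuine technical obstacle here: the argument relies only on the fact that orthonormal states are perfectly distinguishable by an appropriate projective measurement, together with the availability of at least one quantum round followed by one classical round in the LOQC model. The only conceptual point worth emphasizing in the write-up is that this protocol uses quantum communication followed by classical communication, so it inherently leaves the single-opening setup; this is precisely why the interesting comparison between causally separable and causally inseparable processes in the remainder of the paper must be carried out under the single-opening restriction, where LOQC ceases to trivialize the task.
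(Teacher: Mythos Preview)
Your proof is correct and follows essentially the same approach as the paper: Alice sends her qubit to Bob via a noiseless quantum channel, Bob performs a Bell-basis measurement to recover both bits, and then classically communicates $x_1$ back to Alice. The paper's proof is stated in one sentence with the same three steps, so there is nothing to add or adjust.
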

\begin{proof}
Alice sends her part of the encoded state to Bob through a perfect qubit channel; Bob performs a Bell basis measurement to retrieve both $x_1$ \& $x_2$, and classically communicates back $x_1$ to Alice. 
\end{proof}
Notably, Proposition \ref{prop3} holds true for any DR task whenever the encoded states are mutually orthogonal. We will now consider the single opening scenario. Within this setup, we start by establishing a bound on DR-B success whenever the players are embedded in a definite causal structure.
\begin{proposition}\label{prop4}
In the single-opening setup $P_{succ}^{\scalebox{.6}{DR-B}}\le1/2$, whenever the players are embedded in a definite causal structure.    
\end{proposition}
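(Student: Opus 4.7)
The plan is to leverage the alternative representation in Eq.\,(\ref{cs1}) of a causally separable process as a mixture of one-way no-signalling processes, $W^{CS}=pW^{A\not\prec B}+(1-p)W^{B\not\prec A}$, and then to bound $P_{succ}^{\scalebox{.6}{DR-B}}$ by $1/2$ on each of the two summands separately. Convexity then finishes the job: for each fixed pair of local instruments the probability $\mathrm{Tr}[W(M^a_{A_IA_O}\otimes M^b_{B_IB_O})]$ is linear in $W$, so the optimised $P_{succ}^{\scalebox{.6}{DR-B}}$ is convex in $W$ and the mixture inherits the bound from its components.

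The key step is handling the piece $W^{A\not\prec B}$. The marginal condition characterising this class of processes expresses that Bob's outcome statistics are insensitive to anything Alice does. Combined with the hiding condition $\rho^{\mathbf{x}}_B=\mathbb{I}_B/2$, this forces Bob's marginal to be $\mathbf{x}$-independent: his two inputs are (i) his half of the Bell state, which supplies no information about $\mathbf{x}$, and (ii) a system coming from the process, which is $\mathbf{x}$-independent and carries no signal from Alice. Therefore $P(b|\mathbf{x})=P(b)$, and
\begin{align*}
P_{succ}^{\scalebox{.6}{DR-B}}&=\tfrac{1}{4}\sum_{x_1,x_2}P(a=x_1,b=x_2|\mathbf{x})\\
&\le\tfrac{1}{4}\sum_{x_1,x_2}P(b=x_2)=\tfrac{1}{2}.
\end{align*}
The analogous bound for $W^{B\not\prec A}$ follows by swapping the roles of Alice and Bob and invoking the hiding condition $\rho^{\mathbf{x}}_A=\mathbb{I}_A/2$, and the overall bound then follows by convexity. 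Note that this also tallies with the LOCC bound of Proposition \ref{prop2}, since any single-opening protocol in a definite causal structure can be simulated by one-way LOCC plus shared randomness.

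The main obstacle I expect is conceptual rather than computational: one must cleanly justify that the hiding condition on the externally supplied Bell state and the no-signalling marginal condition on $W^{A\not\prec B}$ interact as claimed, even though the encoded state is not itself part of the process. A neat way to handle this is to absorb the referee's preparation of $|\mathcal{B}^{\mathbf{x}}\rangle_{AB}$ into an enlarged process matrix $\widetilde{W}^{\mathbf{x}}$ that still satisfies the $A\not\prec B$ marginal identities; the standard no-signalling lemma in the process-matrix framework then delivers $P(b|\mathbf{x})=P(b)$ directly, and the inequality above follows without any further bookkeeping on auxiliary wires.
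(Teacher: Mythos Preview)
Your proposal is correct and follows essentially the same line as the paper's own (heuristic) proof: decompose the causally separable process into one-way pieces, argue that on each piece one of the two players has an $\mathbf{x}$-independent marginal (because the local Bell share is maximally mixed and no signal can reach them through the process), and conclude by convexity. The only cosmetic differences are that the paper uses the three-term decomposition of Eq.~(\ref{cs}) and phrases the bound as ``the isolated party randomly guesses her bit,'' whereas you use the two-term decomposition of Eq.~(\ref{cs1}) and write the bound as the explicit inequality $\tfrac14\sum_{x_1,x_2}P(b=x_2)=\tfrac12$; your suggestion to absorb $\mathcal{B}^{\mathbf{x}}_{AB}$ into an enlarged process and then invoke the no-signalling marginal identities is precisely the clean way to formalise what the paper leaves heuristic. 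The paper additionally notes that a fully formal proof follows later from Corollary~\ref{corollary1} via the GYNI duality of Theorem~\ref{theo1}, but your direct argument already suffices.
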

\begin{proof}
(Heuristic argument) Assume that Alice is in the causal past of Bob. Thus communication from Alice to Bob is possible, but not in other direction, {\it i.e.}, they can share a process of type $W^{A\prec B}_{A_IA_OB_IB_O}$, along with the given encoded state $\mathcal{B}^{\bf x}_{AB}:=\ket{\mathcal{B}^{\bf x}}_{AB}\bra{\mathcal{B}^{\bf x}}$. Marginal of the encoded state being independent of ${\bf x}$, Alice cannot obtain any information about ${\bf x}$ from $W^{A\prec B}_{A_IA_OB_IB_O}\otimes\mathcal{B}^{\bf x}_{AB}$. Therefore she can at best randomly guess the bit value of $x_1$, while Bob can identify both $x_1$ and $x_2$ perfectly. Therefore success probability in this case is upper bounded by $1/2$. Similar argument holds for the processes of types $W^{B\prec A}_{A_IA_OB_IB_O}$, and also for  $W^{B\nprec\nsucc A}_{A_IA_OB_IB_O}$ (see Proposition \ref{prop1}). Finally note that any causally separable process can be expressed as Eq.(\ref{cs}), and hence the claim follows from convexity.
\end{proof}

As we will see a more formal proof of Proposition \ref{prop4} can be obtained as a consequence of one of our core results established in the next section (see Corollary \ref{corollary1}).

\section{Results}
\subsection{Advantage of causal inseparability in DR-B task}\label{s4}
Here we will show that Alice and Bob can obtain advantage in DR-B task when they share causally inseparable processes (see Fig.\ref{fig2}). To this aim we proceed to establish a generic  connection between the success probabilities of two independent tasks - the GYNI game and the DR-B task. 
\begin{figure}[t!]
\centering
\includegraphics[scale=0.28]{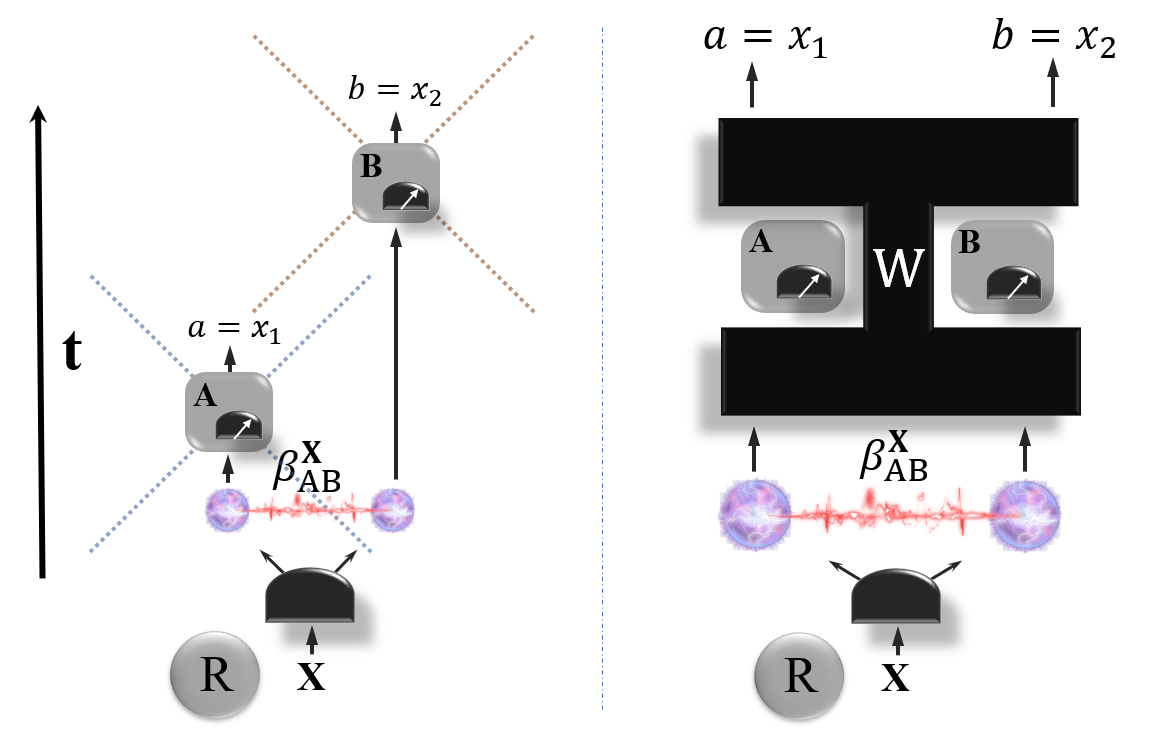}
\caption{DR-B task: Referee encodes the string $\mathbf{x}$ in four Bell states. The players' strategies to guess their respective bits in single-opening setup are shown above. Left one depicts the scenario when they are embedded in definite causal structure (here Alice is in the causal past of Bob). Right one depicts the scenario when they share some indefinite causal process $W$. }
\label{fig2}
\vspace{-.5cm}
\end{figure}
\begin{theorem}\label{theo1}
A success probability $P_{succ}^{\scalebox{.6}{DR-B}}=\mu$ in DR-B task is achievable if and only if the same success is achievable in GYNI game, {\it i.e.}, $P_{succ}^{\scalebox{.6}{GYNI}}=\mu$.
\end{theorem}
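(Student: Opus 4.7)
The plan is to establish the duality via explicit strategy-preserving reductions in both directions, exploiting the Pauli-encoding identity $(Z^{i_1}_A\otimes X^{i_2}_B)|\phi^+\rangle_{AB}=(-1)^{i_1 i_2}|\mathcal{B}^{i_2, i_1}\rangle_{AB}$, which shows that every Bell-basis state is locally reachable from the singlet by a pair of Pauli operators indexed by a relabelling of the encoded bits.

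For the direction DR-B $\Rightarrow$ GYNI, given any DR-B protocol $(W,\{\tilde M^{a}_{A'_I A_I A_O}\},\{\tilde M^{b}_{B'_I B_I B_O}\})$ attaining success $\mu$, I would have the GYNI players additionally hold a singlet $|\phi^+\rangle_{A'_I B'_I}$ in their local labs (a free resource that cannot on its own enlarge the achievable GYNI value). Upon receiving the classical inputs $(i_1,i_2)$, Alice applies $Z^{i_1}$ and Bob applies $X^{i_2}$ to their respective halves, producing $|\mathcal{B}^{i_2,i_1}\rangle$ up to an irrelevant global phase; they then run the DR-B instruments together with $W$. Under the relabelling $(x_1,x_2)=(i_2,i_1)$ the DR-B winning event $\{a=x_1,\,b=x_2\}$ coincides event-by-event with the GYNI winning event $\{a=i_2,\,b=i_1\}$, and the uniform average over Bell-state inputs becomes the uniform average over GYNI inputs, so the induced GYNI strategy succeeds with probability exactly $\mu$.

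For the reverse direction GYNI $\Rightarrow$ DR-B, the construction is dual. Starting from a GYNI protocol $(W,\{M^{a|i_1}_{A_I A_O}\},\{M^{b|i_2}_{B_I B_O}\})$ with success $\mu$, I would construct DR-B instruments that consume the Bell halves to feed effective classical inputs into the GYNI instruments. A natural candidate is a controlled-instrument of the form $\tilde M^{a}_{A'_I A_I A_O}=\sum_{i_1}\Pi^{i_1}_{A'_I}\otimes M^{a\oplus f(i_1)|i_1}_{A_I A_O}$, and analogously for Bob, where $\Pi^{i_1}$ is a projector onto an appropriately chosen basis on the Bell half and $f$ is an output-relabelling. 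Writing $|\mathcal{B}^{x_1 x_2}\rangle\langle\mathcal{B}^{x_1 x_2}|=(I\otimes U^{x_1 x_2})|\phi^+\rangle\langle\phi^+|(I\otimes U^{x_1 x_2\dagger})$, absorbing the Pauli $U^{x_1 x_2}$ into Bob's effective operator, and partial-tracing the Bell halves against $|\phi^+\rangle\langle\phi^+|$ should reduce the DR-B integrand to $\frac{1}{4}\sum_{i_1,i_2}\Tr[W(M^{i_2|i_1}_A\otimes M^{i_1|i_2}_B)]=\mu$.

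I expect the main obstacle to lie in the second direction: a naive choice of projectors (say Z-Z or X-X on the Bell halves) exposes only the XOR $x_1$ or only the XOR $x_2$, so one of the two encoded bits is always lost. Overcoming this will require coupling the Bell-half readout to the process ports themselves — for instance via a teleportation-style Bell measurement on $(A'_I,A_I)$ and the analogous one on $(B'_I,B_I)$ — so that the indefinite causal order built into $W$ carries the "missing" bit across to the party that needs it. Once the correct instrument is identified, the final verification reduces to the Pauli-conjugation algebra on Bell states together with the linearity of the success functional in the process matrix.
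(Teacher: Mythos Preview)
Your \emph{only if} direction (DR-B $\Rightarrow$ GYNI) is correct and is exactly the paper's argument: append a shared $|\phi^+\rangle$ to $W$, let Alice apply $Z^{i_1}$ and Bob apply $X^{i_2}$ to produce $\mathcal{B}^{i_2 i_1}$, then run the DR-B instruments; the relabelling $(x_1,x_2)\leftrightarrow(i_2,i_1)$ makes the winning events coincide.

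The \emph{if} direction (GYNI $\Rightarrow$ DR-B), however, has a genuine gap. You correctly diagnose that projecting the single encoded Bell half in any fixed basis exposes only one of the two XOR bits, and you propose to repair this by a ``teleportation-style Bell measurement on $(A'_I,A_I)$'' --- but you do not actually construct the instrument, and there is no reason to expect that coupling the Bell half directly to the process port works for an \emph{arbitrary} GYNI strategy $(W,\{M^{a|i_1}\},\{M^{b|i_2}\})$. The missing ingredient is an \emph{auxiliary} shared Bell pair $\mathcal{B}^{00}_{A'B'}$ (in addition to the encoded $\mathcal{B}^{x_1x_2}_{AB}$), together with a local ``bit-splitting'' unitary $U_{AA'}\otimes U_{BB'}$ that sends
\[
\mathcal{B}^{00}_{A'B'}\otimes\mathcal{B}^{x_1x_2}_{AB}\ \longmapsto\ \mathcal{B}^{x_10}_{A'B'}\otimes\mathcal{B}^{x_2x_1}_{AB}.
\]
After this, computational-basis measurements give outcomes $u,u',v,v'$ with $u\oplus v=x_2$ and $u'\oplus v'=x_1$. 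Now Alice holds $u,u'$ and needs $v'$ to recover $x_1$; Bob holds $v,v'$ and needs $u$ to recover $x_2$ --- which is \emph{exactly} a GYNI instance with inputs $(u,v')$. Running the given GYNI instruments on $W$ with these inputs and post-processing $a=a'\oplus u'$, $b=b'\oplus v$ reproduces the success $\mu$ verbatim. Without the auxiliary pair, a single Bell state simply does not carry enough locally accessible structure to feed both GYNI inputs, which is why your controlled-instrument ansatz stalls.
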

\begin{proof}
The proof is divided into two parts:
\begin{itemize}
\item[(i)] {\it only if} part: $P_{succ}^{\scalebox{.6}{DR-B}}=\mu$ ensures a protocol for GYNI game yielding success probability $P_{succ}^{\scalebox{.6}{GYNI}}=\mu$.
\item[(ii)] {\it if} part: $P_{succ}^{\scalebox{.6}{GYNI}}=\mu$ ensures a protocol for DR-B task yielding success probability $P_{succ}^{\scalebox{.6}{DR-B}}=\mu$.
\end{itemize}
\underline{\bf only if part:}\\
Given the encoded states $\{\mathcal{B}^{\bf x}_{AB}\}$, let the process matrix $W_{A_IA_OB_IB_O}$ yields a success $P_{succ}^{\scalebox{.6}{DR-B}}=\mu$ with Alice and Bob applying the quantum instruments $\mathcal{I_A}=\{M^a_{AA_IA_O}\}_{a=0}^1$ and $\mathcal{I_B}=\{M^b_{BB_IB_O}\}_{b=0}^1$, respectively. Thus we have, 
\footnotesize
\begin{align}
&P_{succ}^{\scalebox{.6}{DR-B}}= \frac{1}{4}\sum_{x_1,x_2=0}^{1} p(a=x_1,b=x_2|\mathcal{B}^{x_1x_2}_{AB})=\mu,~~\mbox{with},\label{DR}\\
&p(a,b|\mathcal{B}^{x_1x_2}_{AB}):=\Tr[(\mathcal{B}^{x_1x_2}_{AB}\otimes W)(M^a_{AA_IA_O}\otimes M^b_{BB_IB_O})].\nonumber
\end{align}
\normalsize

For playing the $GYNI$ game, let Alice and Bob share the process Matrix $W^\prime=W_{A_IA_OB_IB_O}\otimes \mathcal{B}^{00}_{AB}$. Based on their coin states $i_1,i_2\in\{0,1\}$, Alice and Bob respectively perform quantum instruments 
\footnotesize
\begin{align*}
\mathcal{I_A}^{(i_1)}&:=\left\{\mathcal{Z}^{i_1}_A\left(M^{a}_{AA_IA_O}\right)\right\}_{a=0}^1\equiv\left\{Z^{i_1}_{A}M^a_{AA_IA_O}Z^{i_1}_{A}\right\}_{a=0}^1,\\
\mathcal{I_B}^{(i_2)}&:=\left\{\mathcal{X}^{i_2}_B\left(M^{b|i_2}_{BB_IB_O}\right)\right\}_{b=0}^1\equiv\left\{X^{i_2}_{B}M^b_{BB_IB_O}X^{i_2}_{B}\right\}_{b=0}^1,
\end{align*}
\normalsize
where $Z$ \& $X$ are qubit Pauli gates and $\{M^a_{AA_IA_O}\}_{a=0}^1$ \& $\{M^b_{BB_IB_O}\}_{b=0}^1$ are the instruments used in DR-B task. The success probability of GYNI game, therefore, reads as
\footnotesize
\begin{align}
&P_{succ}^{\scalebox{.6}{GYNI}}=\sum_{i_1,i_2=0}^1\frac{1}{4}p(a=i_2,b=i_1|i_1,i_2)\nonumber\\
&=\frac{1}{4}\sum_{i_1,i_2=0}^1\Tr\left[\left(\mathcal{B}^{00}_{AB}\otimes W\right)\left(M^{a=i_2|i_1}_{AA_IA_O}\otimes M^{b=i_1|i_2}_{BB_IB_O}\right)\right]\nonumber\\
&=\frac{1}{4}\sum_{i_1,i_2=0}^1\Tr\left[\left(\mathcal{B}^{00}_{AB}\otimes W\right)\left(\mathcal{Z}^{i_1}_{A}\left(M^{a=i_2}_{AA_IA_O}\right)\otimes \mathcal{X}^{i_2}_{B}\left(M^{b=i_1}_{BB_IB_O}\right)\right)\right]\nonumber\\
&=\frac{1}{4}\sum_{i_1,i_2=0}^1\Tr\left[\left(\left(\mathcal{Z}^{i_1}\otimes \mathcal{X}^{i_2}\left(\mathcal{B}^{00}\right)\right)\otimes W\right)\left(M^{a=i_2}_{AA_IA_O}\otimes M^{b=i_1}_{BB_IB_O}\right)\right]\nonumber\\
&=\frac{1}{4}\sum_{i_1,i_2=0}^1\Tr\left[\left(\mathcal{B}^{i_2i_1}_{AB} \otimes W\right)\left(M^{a=i_2}_{AA_IA_O}\otimes M^{b=i_1}_{BB_IB_O}\right)\right]\nonumber\\
&=\mu=P_{succ}^{\scalebox{.6}{DR-B}},~~[\mbox{using~Eq}.(\ref{DR})].\label{onlyif}
\end{align}
\normalsize
This completes the {\it only if} part of the claim.\\\\
\underline{\bf if part:}\\
Given $x_1$ and $x_2$ being the respective coin states of Alice and Bob, let the process matrix $W'_{A_IA_OB_IB_O}$ yields a success $P_{succ}^{\scalebox{.6}{GYNI}}=\mu$, with Alice and Bob performing quantum instruments $\mathcal{I_A}^{(x_1)}=\{M^{a|x_1}_{A_IA_O}\}_{a=0}^1$ and $\mathcal{I_B}^{(x_2)}=\{M^{b|x_2}_{B_IB_O}\}_{b=0}^1$, respectively. Thus we have, 
\footnotesize
\begin{align}
&P_{succ}^{\scalebox{.6}{GYNI}}=\frac{1}{4}\sum_{x_1,x_2=0}^{1} p(a=x_2,b=x_1|x_1,x_2)=\mu,~~\mbox{with},\label{gyni}\\
&p(a,b|x_1,x_2):=\Tr\left[\left(M^{a|x_1}_{A_IA_O}\otimes M^{b|x_2}_{B_IB_O}\right)W'\right].\nonumber
\end{align}
\normalsize

To perform the DR-B task, Alice and Bob share the Process Matrix $W'_{A_IA_OB_IB_O}\otimes \mathcal{B}^{00}_{A'B'}$. Now, given the encoded state $\mathcal{B}^{x_1x_2}_{AB}$, Alice and Bob apply the following unitary operation on parts of their local systems 
\footnotesize
\begin{align}
U_{AA'}=U_{BB'}=\frac{1}{\sqrt{2}}
\begin{pmatrix}
~~1 & ~~0 & ~~0 & ~~1\\
~~0 & ~~1 & ~~1 & ~~0\\
~~1 & ~~0 & ~~0 & -1\\
~~0 & ~~1 & -1 & ~~0\\
\end{pmatrix},
\end{align}
\normalsize
which results in
\footnotesize
\begin{align}
&\mathcal{U}_{AA'}\otimes\mathcal{U}_{BB'}\left(W'_{A_IA_OB_IB_O}\otimes \mathcal{B}^{00}_{A'B'}\otimes\mathcal{B}^{x_1x_2}_{AB}\right)\nonumber\\
&\hspace{1cm}=W'_{A_IA_OB_IB_O}\otimes \mathcal{B}^{x_10}_{A'B'}\otimes\mathcal{B}^{x_2x_1}_{AB},
\end{align}
\normalsize
where $\mathcal{U}(\rho):=U\rho U^{\dagger}$. On $A$ and $A'$ sub-parts of the evolved process Alice performs computational basis measurement ({\it i.e.} the Pauli-$\sigma^3$ measurement), resulting into outcomes $u,u'\in\{0,1\}$, where $0~(1)$ corresponds to `up' (`down') outcome. Bob also does the same resulting into outcomes $v,v'\in\{0,1\}$. Clearly, due to correlation of the state, we have
\begin{align}
u\oplus v=x_2,~~\&~~u'\oplus v'=x_1.\label{xor}
\end{align}
Therefore, guessing the value of $v'$ and $u$ respectively by Alice and Bob with the probability $\mu$ will ensure the same success in DR-B task. At this point the process $W'_{A_IA_OB_IB_O}$ proves to be helpful in this task, which can be accordingly chosen looking into its advantage in GYNI game. Rest of the protocol mimics GYNI strategy with $u$ and $v'$ being the inputs of Alice and Bob, respectively. Denoting $a'$ and $b'$ as the output of the GYNI strategy the final guess in DR-B task by Alice and Bob are respectively, 
\begin{align}
a=a'\oplus u',~~\&~~b=b'\oplus v.
\end{align}
On the composite process $W'_{A_IA_OB_IB_O}\otimes \mathcal{B}^{00}_{A'B'}\otimes\mathcal{B}^{x_1x_2}_{AB}$, the effective instruments $\{M^a_{AA'A_IA_O}\}_{a=0}^1$ for Alice and $\{M^b_{BB'B_IB_O}\}_{b=0}^1$ for Bob are respectively given by
\footnotesize
\begin{align*}
&\sum_{u,u',a'=0}^{1}\delta_{a,a'\oplus u'}\mathcal{U}_{AA'}\otimes \textbf{id}_{A_IA_O}(\ket{uu'}_{AA'}\bra{uu'}\otimes M^{a'|u}_{A_IA_O}),\\
&\sum_{v,v',b'=0}^{1}\delta_{b,b'\oplus v}\mathcal{U}_{BB'}\otimes \textbf{id}_{B_IB_O}(\ket{vv'}_{BB'}\bra{vv'}\otimes M^{b'|v'}_{B_IB_O}),
\end{align*}
\normalsize
where, $\{M^{a'|u}_{A_IA_O}\}$ and $\{M^{b'|v'}_{B_IB_O}\}$ are the instruments used by Alice and Bob in GYNI game. The success probability of DR-B task with the aforesaid protocol turns out to be
\footnotesize
\begin{align}
&P_{succ}^{\scalebox{.6}{DR-B}}=\sum_{x_1,x_2=0}^1\frac{1}{4}P(a=x_1,b=x_2|\mathcal{B}^{x_1x_2}_{AB})\nonumber\\
&=\sum_{x_1,x_2=0}^1\frac{1}{4}\Tr\left[\left(W'\otimes \mathcal{B}^{00}\otimes \mathcal{B}^{x_1x_2}\right)\left(M^{a=x_1}_{AA'A_IA_O}\otimes M^{b=x_2}_{BB'B_IB_O}\right)\right]\nonumber\\
&=\sum_{\substack{x_1,x_2,u,u',\\a',v,v',b'=0}}^{1}\frac{1}{4}\delta_{a=x_1,a'\oplus u'}\delta_{b=x_2,b'\oplus v}\Tr\left[\left(W'\otimes \mathcal{B}^{x_10}_{A'B'}\otimes\mathcal{B}^{x_2x_1}_{AB}\right)\right.\nonumber\\
&\left.\hspace{2cm} \left(\ket{uvu'v'}_{ABA'B'}\bra{uvu'v'}\otimes M^{a'|u}_{A_IA_O}\otimes M^{b'|v'}_{B_IB_O}\right)\right]\nonumber\\
&=\sum_{\substack{x_1,x_2,u,u',\\a',v,v',b'=0}}^{1}\frac{1}{16}\delta_{x_1,a'\oplus u'}\delta_{x_2,b'\oplus v}\delta_{x_2,u\oplus v}\delta_{x_1,u'\oplus v'}\nonumber\\
&\hspace{2cm}\Tr\left[(W'\left(M^{a'|u}_{A_IA_O}\otimes M^{b'|v'}_{B_IB_O}\right)\right]\nonumber\\
&=\sum_{\substack{x_1,x_2,u,u',\\a',v,v',b'=0}}^{1}\frac{1}{16}\delta_{x_1,a'\oplus u'}\delta_{x_2,b'\oplus v}\delta_{x_2,u\oplus v}\delta_{x_1,u'\oplus v'}P(a',b'|u,v')\nonumber\\
&=\sum_{\substack{x_1,x_2,u,u',\\v,v'=0}}^{1}\frac{1}{16}\delta_{x_2,u\oplus v}\delta_{x_1,u'\oplus v'}P(x_1\oplus u',x_2\oplus v|u,v')\nonumber\\
&=\sum_{x_1,x_2,u,v'=0}^{1}\frac{1}{16}P(x_1\oplus v'\oplus x_1,x_2\oplus u\oplus x_2|u,v')\nonumber\\
&=\sum_{x_1,x_2}\frac{1}{4}\sum_{u,v'}\frac{1}{4}P(v',u|u,v')=\sum_{x_1,x_2}\frac{1}{4}\mu=\mu~\left[\mbox{using eq.}(\ref{gyni})\label{if}\right].
\end{align}
\normalsize
This completes the {\it if} part of the claim, and hence the Theorem is proved.
\end{proof}

\begin{corollary}\label{corollary1}
Any extensibly causal process cannot provide non trivial advantage in the DR-B task.
\end{corollary}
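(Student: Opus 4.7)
The plan is to chain the duality established in Theorem~\ref{theo1} with the defining property of extensibly causal processes and the GYNI causal inequality. Concretely, suppose that an extensibly causal bipartite process $W^{EC}$ yields success $P_{succ}^{\scalebox{.6}{DR-B}} = \mu$ in the DR-B task, together with some choice of instruments for Alice and Bob. I will argue that necessarily $\mu \le 1/2$, which is exactly the no-collaboration / LOCC value already achievable in a definite causal structure (Propositions \ref{prop1}--\ref{prop2}), hence no nontrivial advantage.

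First I would invoke the \emph{only if} direction of Theorem~\ref{theo1}. That construction builds, from the DR-B protocol based on $W^{EC}$, a GYNI protocol whose shared resource is precisely the composite process $W^{EC} \otimes \mathcal{B}^{00}_{AB}$ (with adapted Pauli-conjugated instruments on Alice's and Bob's sides), and it achieves exactly $P_{succ}^{\scalebox{.6}{GYNI}} = \mu$. The key structural observation is that the ancillary resource appended in this reduction is an entangled two-qubit Bell state.

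Next I would apply the definition of extensible causality: since $W^{EC}$ is extensibly causal and $\mathcal{B}^{00}_{AB}$ is entangled, the composite $W^{EC} \otimes \mathcal{B}^{00}_{AB}$ is a causal process in the sense of the Feix--Araujo--Brukner / Oreshkov--Giarmatzi definition. Consequently, every correlation arising from $W^{EC} \otimes \mathcal{B}^{00}_{AB}$ with any choice of local instruments admits a convex decomposition into one-way no-signalling correlations, and therefore respects the GYNI causal inequality $P_{succ}^{\scalebox{.6}{GYNI}} \le 1/2$ recalled in Section~\ref{GYNIdef}. Combining this with the equality $P_{succ}^{\scalebox{.6}{GYNI}} = \mu$ from the previous step yields $\mu \le 1/2$, which is the claim.

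The only nontrivial point to verify carefully is that the resource used on the GYNI side of the Theorem~\ref{theo1} reduction really is of the form ``$W^{EC}$ tensored with an entangled state'', so that the definition of extensible causality applies directly; inspection of Eq.~(\ref{onlyif}) confirms this, since the ancilla introduced there is exactly $\mathcal{B}^{00}_{AB} = \ket{\phi^+}\bra{\phi^+}$. I do not anticipate any further obstacle: the three ingredients (Theorem~\ref{theo1}, the definition of extensible causality, and the GYNI causal bound) plug together directly, and the argument is essentially a one-line corollary once the duality is in hand.
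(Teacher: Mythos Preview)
Your proposal is correct and follows essentially the same route as the paper: assume $W^{EC}$ achieves $\mu$ in DR-B, use the \emph{only if} part of Theorem~\ref{theo1} to obtain a GYNI protocol with resource $W^{EC}\otimes\phi^+$ and success $\mu$, then invoke the definition of extensible causality together with the GYNI causal bound to force $\mu\le 1/2$. Your write-up is in fact more explicit than the paper's about why the appended ancilla is precisely an entangled state (so that the extensible-causality hypothesis applies), but the underlying argument is identical.
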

\begin{proof}
We start by assuming that an extensibly causal process $W^{EC}$ can provide nontrival success $\mu>1/2$ in the DR-B task. Now from theorem \ref{theo1} its clear that $W^{EC}\otimes \phi^+$ can win the GYNI game with success probability $\mu>1/2$. But this is in contradiction to the definition of extensibly causal processes. This completes the proof. Note that this also serves as a formal proof for Proposition \ref{prop4}.    
\end{proof}

\begin{remark}\label{remark1}
Theorem \ref{theo1} also implies that the optimal success probabilities for the DR-B task and the GYNI game are identical. Achieving the optimal probability for the latter game, as permitted by quantum processes, would provide a \textbf{Tsirelson-like bound} for the maximal causal violation within quantum theory. While recent work by Liu and Chiribella \cite{Liu2024} proposes a nontrivial upper bound, it remains unknown whether this bound is achievable. In this context, our established duality may offer an alternative approach to addressing this question. Note that the optimal success probability for the DR-B task depends on the maximum information accessible from bipartite ensembles in the single-opening setup. Therefore, deriving a \textbf{Holevo-like bound} in this scenario could have direct implications for the Tsirelson-like bound on causal indefiniteness. We believe this could be a promising avenue for future research.
\end{remark}

Theorem \ref{theo1} can be easily generalized to bipartite DR tasks with $d>2$ (\textit{i.e.}, encoding the strings $\textbf{x}=x_1x_2\in\{0,1,\cdots,d-1\}^2$ into a maximally-entangled basis of $\mathbb{C}^d\otimes\mathbb{C}^d$) and to higher-input GYNI games. For completeness, we provide the proof in Appendix \ref{appendix-gen} (Theorem \ref{theo1'}). In the following, we proceed to analyze the necessary condition for process matrices to be useful in the DR-B task.

\subsection{Necessary condition for Quantum Processes to be useful: A Peres-like Criterion}\label{s4a}
Our Theorem \ref{theo1} establishes that any bipartite process yielding non-trivial success in the GYNI game, when combined with an additional maximally entangled state, achieves similar success in the DR-B task. It is now natural to ask whether all such causally inseparable processes are advantageous in the DR-B task on their own.
Interestingly, we answer this question in negative. In the following we show that a given process matrix needs to be of some particular form to be useful in DR-B task. 
\begin{theorem}\label{theo2}
A bipartite process matrix, $W_{A_IA_OB_IB_O}$ yielding a nontrivial success in the DR-B task ({\it i.e.}, $P_{succ}^{\scalebox{.6}{DR-B}}>1/2$) must have negative partial transposition across $(A_IA_O)|(B_IB_O)$ bipartition. \end{theorem}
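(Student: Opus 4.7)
The plan is to prove the contrapositive: assuming $W^{T_B}\ge 0$, where $T_B$ denotes partial transposition on $B_IB_O$, one must show $P_{succ}^{\scalebox{.6}{DR-B}}(W)\le 1/2$.

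I begin by writing the success as a bilinear trace: for any pair of quantum instruments $\{M^{x_1}_{AA_IA_O}\}_{x_1=0,1}$ and $\{M^{x_2}_{BB_IB_O}\}_{x_2=0,1}$,
\[
P_{succ}^{\scalebox{.6}{DR-B}}(W)=\Tr[WR],\quad R=\frac{1}{4}\sum_{x_1,x_2}\Tr_{AB}\!\left[(\mathcal{B}^{x_1x_2}_{AB}\otimes\mathbb{I})\bigl(M^{x_1}_{AA_IA_O}\otimes M^{x_2}_{BB_IB_O}\bigr)\right].
\]
Expanding each Bell projector in the two-qubit Pauli basis, $\mathcal{B}^{x_1x_2}=\frac{1}{4}\sum_\mu s_\mu(x_1,x_2)\,\sigma^\mu\otimes\sigma^\mu$ with signs $s_I=1$, $s_X=(-1)^{x_2}$, $s_Y=-(-1)^{x_1+x_2}$, $s_Z=(-1)^{x_1}$, the sum over $(x_1,x_2)$ factorizes and splits $R$ as $R_I+R_X+R_Y+R_Z$, where each piece is a tensor product of an operator on $A_IA_O$ and one on $B_IB_O$, built from the channel Choi operators $\mathbb{M}^{A/B}:=M^0+M^1$ and the outcome-difference operators $\delta M^{A/B}:=M^0-M^1$. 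Using the process-matrix normalization, a short computation yields $\Tr[WR_I]=\frac{1}{4}$---the classical random-guess baseline---so the nontrivial advantage is carried by $\Tr[W(R_X+R_Y+R_Z)]$.

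The key step is to transport the PPT hypothesis across the trace via $\Tr[WR_\mu]=\Tr[W^{T_B}R_\mu^{T_B}]$. Applying the Pauli transposition rules $X^T=X$, $Z^T=Z$, $Y^T=-Y$ together with the Bell-state factorization $|\mathcal{B}^{x_1x_2}\rangle=(I\otimes X^{x_1}Z^{x_2})|\phi^+\rangle$, the transposed excess operator $(R_X+R_Y+R_Z)^{T_B}$ rewrites as the excess operator of a modified DR-B strategy $\{(M^{x_1}_A)'\},\{(M^{x_2}_B)'\}$ on $W^{T_B}$. Since the process-matrix defining linear constraints commute with partial transposition on $B_IB_O$ (partial traces and tensoring with the identity are compatible with $T_B$), $W^{T_B}$ is itself a valid process matrix. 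Thus $P_{succ}^{\scalebox{.6}{DR-B}}(W)$ equals the DR-B success of $W^{T_B}$ under the redefined strategy, and by symmetrization one obtains $P_{succ}^{\scalebox{.6}{DR-B}}(W)\le P_{succ}^{\scalebox{.6}{DR-B}}(\tilde W)$ with $\tilde W:=(W+W^{T_B})/2$. To close, one would argue that $\tilde W$---being positive, a valid process matrix, and $T_B$-invariant---admits a decomposition, or at least a DR-B simulation, within a definite causal order, whence Proposition \ref{prop4} yields $P_{succ}^{\scalebox{.6}{DR-B}}(\tilde W)\le 1/2$, delivering the claimed bound.

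The main obstacle I anticipate is precisely this last step: $T_B$-invariance alone does not force causal separability, just as $T_B$-invariance of a quantum state does not imply separability in higher dimensions. A likely remedy is a semidefinite-programming duality argument: the maximum of $\Tr[WR]$ over PPT process matrices subject to the process-normalization constraints is a semidefinite program whose dual should admit an explicit feasible certificate of value $1/2$, constructed from the four Bell-basis witnesses together with the causally-separable witness that underpins Proposition \ref{prop4}. Explicitly constructing this dual certificate---or, equivalently, exhibiting for each PPT $W$ an explicit causally-separable $W^{CS}$ that reproduces the DR-B statistics of $\tilde W$---is where the main technical work will lie.
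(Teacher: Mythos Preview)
Your proposal has a genuine gap, and it is precisely the one you flag yourself: $T_B$-invariance of a process matrix does not imply causal separability, nor does it obviously imply the $1/2$ bound via Proposition~\ref{prop4}. Your suggested remedies (an SDP dual certificate, or an explicit causally-separable simulator for $\tilde W$) are plausible research directions but are not carried out, and there is no indication either would go through without substantial new ideas. There is also an earlier soft spot: when you claim that $(R_X+R_Y+R_Z)^{T_B}$ rewrites as the excess operator of a \emph{modified} DR-B strategy, this would require the partially transposed objects $(M^{x_2}_{BB_IB_O})^{T_{B_IB_O}}$ to be Choi matrices of valid CP maps. Partial transposition on only the $B_IB_O$ legs of a positive operator on $BB_IB_O$ need not preserve positivity, so the ``modified instruments'' on Bob's side are not guaranteed to be instruments at all.

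The paper's proof takes a completely different route that bypasses causal separability altogether. Rather than staying at the process level, it shows that \emph{any} strategy on $W\otimes\mathcal{B}^{\mathbf x}$ induces an effective POVM $\{\Pi^{a,b}_{AB}\}$ acting directly on the two-qubit encoded state, so that $P_{succ}^{\scalebox{.6}{DR-B}}=\frac{1}{4}\sum_{x_1,x_2}\Tr[\mathcal{B}^{x_1x_2}\Pi^{x_1,x_2}]$. The PPT hypothesis on $W$ is then propagated through the dual instrument maps (which are local CP on Alice vs.\ Bob and hence preserve PPT) and through link-product contractions with maximally entangled states, landing on a two-qubit operator $\chi^{a,b}_{A'B'}$ that is PPT in $\mathbb{C}^2\otimes\mathbb{C}^2$. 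At that point the Peres--Horodecki criterion gives separability of $\chi^{a,b}$, hence of $\Pi^{a,b}$, and the bound $1/2$ follows from the known result that the four Bell states cannot be distinguished with probability exceeding $1/2$ by any separable measurement \cite{Bandyopadhyay2015}. The missing idea in your approach is this reduction to a \emph{state-discrimination} problem with separable measurements on a $2\times2$ system, where PPT coincides with separability; trying instead to force causal separability of $\tilde W$ runs into exactly the higher-dimensional PPT-vs-separability obstruction you anticipated.
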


\begin{proof}
Consider Alice and Bob share a bipartite process $W_{A_IA_OB_IB_O}$ for the DR-B task. Given the encoded state $\mathcal{B}^{\textbf{x}}_{AB}:=\ket{\mathcal{B}^{\textbf{x}}}_{AB}\bra{\mathcal{B}^{\textbf{x}}}$, Alice and Bob respectively implement two-outcome instruments $\{\mathcal{N}^a_{AA_I\to A_O}\}_{a=0}^1$ and $\{\mathcal{M}^b_{BB_I\to B_O}\}_{b=0}^1$ on their respective shares of the joint process $\mathcal{B}^{\textbf{x}}_{AB}\otimes W_{A_IA_OB_IB_O}$, and give the classical outcomes $a$ and $b$ as their respective guesses for $x_1$ and $x_2$. The success probability is then given by
\begin{subequations}
\begin{align}
&P_{succ}^{\scalebox{.6}{DR-B}}=\frac{1}{4}\sum_{x_1,x_2=0}^1 p(a=x_1,b=x_2|\mathcal{B}^{x_1x_2}_{AB})\label{t2p}\\
&=\Tr\left[(\mathcal{B}^{\textbf{x}}_{AB}\otimes W_{A_IA_OB_IB_O})\left\{\textbf{id}_{AA_I}\otimes\mathcal{N}^{a=x_1}_{A'A'_I\to A_O}\otimes \right.\right.\nonumber\\
&\left.\left.\textbf{id}_{BB_I}\otimes\mathcal{M}^{b=x_2}_{B'B'_I\to B_O}(\tilde{\phi}^+_{AA_IA'A'_I}\otimes \tilde{\phi}^+_{BB_IB'B'_I})\right\}\right].
\end{align}
\end{subequations}
Here, $p(a,b|\mathcal{B}^{\textbf{x}}_{AB}):=p(a=x_1,b=x_2|\mathcal{B}^{x_1x_2}_{AB})$ is the probability of getting outcomes $a=x_1$ and $b=x_2$, with $\tilde{\phi}^+:=\ket{\tilde{\phi}^+}\bra{\tilde{\phi}^+}$. Accordingly we have
\footnotesize
\begin{align}
&p(a,b|\mathcal{B}^{\textbf{x}}_{AB})=\Tr\left[\left\{\textbf{id}_{ABA_IB_I}\otimes\mathcal{N}^{(*)a}_{A_O\to A'A'_I}\otimes\mathcal{M}^{(*)b}_{B_O\to B'B'_I}\right.\right.\nonumber\\
&\left.\left.\left(\mathcal{B}^{\textbf{x}}_{AB}\otimes W_{A_IA_OB_IB_O}\right)\right\}(\tilde{\phi}^+_{AA_IA'A'_I}\otimes \tilde{\phi}^+_{BB_IB'B'_I})\right],
\end{align}
\normalsize
where $\mathcal{N}^{(*)}$ and $\mathcal{M}^{(*)}$ are the dual maps\footnote{Recall that, a map $\Lambda^{(*)}_{D\to C}$ is called dual to $\Lambda_{C\to D}$ if $\Tr_{D}[\rho_D \{\Lambda_{C\to D}(\sigma_C)\}]=\Tr_{C}[\{\Lambda^{(*)}_{D\to C}(\rho_D)\}\sigma_C]$ for all $\rho_D\in \mathcal{D}(\mathcal{H}_{D})~\&~ \sigma_C\in \mathcal{D}(\mathcal{H}_{C})$.} of $\mathcal{N}$ and $\mathcal{M}$, respectively. Defining, $\Tilde{W}^{a,b}_{A_IA'A'_IB_IB'B'_I}:=\textbf{id}_{A_IB_I}\otimes \mathcal{N}^{(*)a}_{A_O\to A'A'_I}\otimes \mathcal{M}^{(*)b}_{B_O\to B'B'_I}(W_{A_IA_OB_IB_O})$, we obtain
\begin{subequations}
\footnotesize
\begin{align}
&p(a,b|\mathcal{B}^{\textbf{x}}_{AB})=\Tr\left[(\mathcal{B}^{\textbf{x}}_{AB}\otimes \Tilde{W}^{a,b})(\tilde{\phi}^+_{AA'}\otimes\tilde{\phi}^+_{A_IA'_I}\otimes \tilde{\phi}^+_{BB'}\otimes \tilde{\phi}^+_{B_IB'_I})\right]\nonumber\\
&=\Tr_{A,A',B,B'}\left[(\mathcal{B}^{\textbf{x}}_{AB}\otimes \chi^{a,b}_{A'B'})(\tilde{\phi}^+_{AA'}\otimes \tilde{\phi}^+_{BB'})\right]\nonumber\\
&=\Tr_{A,B}\left[\mathcal{B}^{\textbf{x}}_{AB} \Pi^{a,b}_{AB}\right],\label{npt}\\
&\Pi^{a,b}_{AB}:=\Tr_{A',B'}[(\mathbb{I}_{AB}\otimes\chi^{a,b}_{A'B'})(\tilde{\phi}^+_{AA'}\otimes \tilde{\phi}^+_{BB'})],\label{pi}\\
&\chi^{a,b}_{A'B'}:=\Tr_{A_I,A'_I,B_I,B'_I}\left[\Tilde{W}^{a,b}(\mathbb{I}_{A'B'}\otimes\tilde{\phi}^+_{A_IA'_I}\otimes\tilde{\phi}^+_{B_IB'_I})\right]\label{chi}
\end{align}
\end{subequations}
\normalsize
Eq.(\ref{npt}) tells that, any protocol followed by Alice and Bob on the joint process $\mathcal{B}^{\textbf{x}}_{AB}\otimes W_{A_IA_OB_IB_O}$ boils down to performing a POVM $\{\Pi^{a,b}_{AB}\}_{a,b=0}^1$ on $\mathcal{B}^{\textbf{x}}_{AB}$. Consider now, $W_{A_IA_OB_IB_O}$ is a PPT (positive partial transpose) operator across $A_IA_O|B_IB_o$ bipartition, {\it i.e.}, $W^{\mathrm{T}_{B_IB_O}}_{A_IA_OB_IB_O}\ge0$. Furthermore, $\mathcal{N}^a$ and $\mathcal{M}^b$ being local CP maps on Alice's and Bob's parts, their corresponding dual maps $\mathcal{N}^{(\star)a}$ and $\mathcal{M}^{(\star)b}$ are also CP on the respective parts. This ensures
\begin{align}
\left(\Tilde{W}^{a,b}_{A_IA'A'_IB_IB'B'_I}\right)^{\mathrm{T}_{B_IB'B'_I}}\ge0,~\forall~a,b\in\{0,1\}. \label{ppt1}
\end{align}
Eq.(\ref{chi}) and Eq.(\ref{ppt1}) together imply $\chi^{a,b}_{A'B'}\in\mbox{PPT}(\mathbb{C}^2_{A'}\otimes\mathbb{C}^2_{B'})$, and in-fact due to Peres-Horodecki criteria \cite{Peres1996, Horodecki1996}, $\chi^{a,b}_{A'B'}\in\mbox{Sep}(\mathbb{C}^2_{A'}\otimes\mathbb{C}^2_{B'})$. Now, Eq.(\ref{pi}) ensures $\Pi^{a,b}_{AB}\in\mbox{Sep}(\mathbb{C}^2_{A}\otimes\mathbb{C}^2_{B})$. Therefore, Eqs.(\ref{t2p}) \& (\ref{npt}) imply
\begin{align}
P_{succ}^{\scalebox{.6}{DR-B}}=\frac{1}{4}\sum_{x_1,x_2}\Tr_{A,B}\left[\mathcal{B}^{x_1x_2}_{AB} \Pi^{a=x_1,b=x_2}_{AB}\right], 
\end{align}
which can be thought of as the success probability of distinguishing two-qubit Bell Basis $\{\ket{\phi^\pm},\ket{\psi^\pm}\}$ under separable measurement $\{\Pi^{a,b}\}_{ab}$. Recalling the result from \cite{Bandyopadhyay2015}, we know this success probability is upper bounded by $1/2$. This concludes the proof. 
\end{proof}

\begin{figure}[t!]
\centering
\includegraphics[scale=0.5]{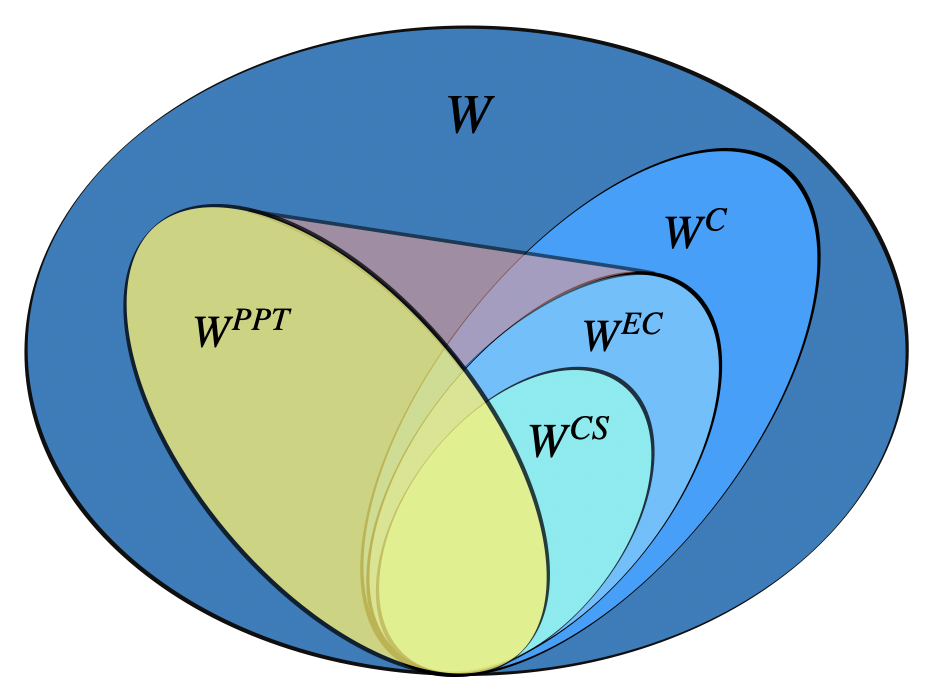}
\caption{Within the set ${\bf W}$ of all bipartite processes, ${\bf W}^{PPT}$ is the set of processes that are PPT across $A_IA_O|B_IB_O$ bipartition and ${\bf W}^{CS},{\bf W}^{EC}$ and ${\bf W}^{C}$ denote the set of causally-separable, extensibly causal and causal processes. Processes lying within the convex hull of ${\bf W}^{PPT}$ and ${\bf W}^{EC}$ yield $P^{\scalebox{.6}{DR-B}}_{succ}\leq 1/2$.}
\label{fig3}
\vspace{-.5cm}
\end{figure}

Note that, Theorem \ref{theo2} provides only a necessary criterion on bipartite processes to be useful in DR-B task. The fact that it is not a sufficient criterion can be seen from the example of the no-signaling process $W_{A_IA_OB_IB_O}:=\ket{\phi^+}_{A_IB_I}\bra{\phi^+}\otimes\mathbb{I}_{A_OB_O}$. This particular process is NPT (negative-partial-transpose) across $A_IA_O|B_IB_O$ bipartition, but according to Proposition \ref{prop4} it does not provide a nontrivial success in DR-B task.  

Theorem \ref{theo2} also indicates that not all causally inseparable processes are advantageous in DR-B task. For instance, consider the process $W^{Cyril}_{A_IA_OB_IB_O}$ in Eq. (\ref{gyni0}), which is known to be advantageous in GYNI task. However, it turns out that $(W^{Cyril}_{A_IA_OB_IB_O})^{\mathrm{T}_{B_IB_O}}\geq0$, implying $W^{Cyril}_{A_IA_OB_IB_O}$ to be PPT across $A_IA_O|B_IB_O$ bipartition. Moreover, it admits a fully separable decomposition across $A_I|A_O|B_I|B_O$ partition\footnote{A positive operator $\mathcal{O}_{YZ}\in\mathcal{L}_+(\mathcal{H}_Y\otimes\mathcal{H}_Z)$ is called separable across $Y|Z$ cut if it allows a decomposition of the form $\mathcal{O}_{YZ}=\sum_i\mathcal{O}^i_Y\otimes \mathcal{O}^i_Z$, where $\forall~i,~\mathcal{O}^i_{Y}\in\mathcal{L}_+(\mathcal{H}_{Y})~\&~\mathcal{O}^i_{Z}\in\mathcal{L}_+(\mathcal{H}_{Z})$.}:
\footnotesize
\begin{align}
&W^{Cyril}_{A_IA_OB_IB_O}=\frac{1}{4}\left[\mathbb{I}^{\otimes4}+\frac{1}{\sqrt{2}}\left(\sigma^3\sigma^3\sigma^3\mathbb{I}+\sigma^3\mathbb{I}\sigma^1\sigma^1\right)\right]_{A_IA_OB_IB_O} \nonumber\\
&~~~~~=\frac{1}{2}\left[P^{z}_{+}P^{z}_{+}P^{\alpha}_{+}P^{x}_{+}
+P^{z}_{-}P^{z}_{+}P^{\alpha}_{-}P^{x}_{+}
+P^{z}_{+}P^{z}_{+}P^{\beta}_{+}P^{x}_{-}
\right.\nonumber\\
&\hspace{1cm}+P^{z}_{-}P^{z}_{+}P^{\beta}_{-}P^{x}_{-}
+P^{z}_{+}P^{z}_{-}P^{\beta}_{-}P^{x}_{+}
+P^{z}_{-}P^{z}_{-}P^{\beta}_{+}P^{x}_{+}\nonumber\\
&\left.\hspace{1cm}+P^{z}_{+}P^{z}_{-}P^{\alpha}_{-}P^{x}_{-}
+P^{z}_{-}P^{z}_{-}P^{\alpha}_{+}P^{x}_{-}\right]_{A_IA_OB_IB_O}~;\\
&\mbox{where},~~P^z_\pm:=\frac{1}{2}\left(\mathbb{I}\pm\sigma^3\right),~~~P^\alpha_\pm:=\frac{1}{2}\left(\mathbb{I}\pm\frac{1}{\sqrt{2}}\left(\sigma^3+\sigma^1\right)\right),~\nonumber\\
&\hspace{.95cm}P^x_\pm:=\frac{1}{2}\left(\mathbb{I}\pm\sigma^1\right),~~~P^\beta_\pm:=\frac{1}{2}\left(\mathbb{I}\pm\frac{1}{\sqrt{2}}\left(\sigma^3-\sigma^1\right)\right).\nonumber
\end{align}
\normalsize
While Proposition \ref{prop4} excludes all causally separable bipartite processes to be useful in DR-B task, Theorem \ref{theo2} excludes processes that are PPT in Alice vs Bob bipartition. In fact, a larger class of bipartite processes can be excluded for the task in question.
\begin{corollary}\label{cor1}
Any bipartite process matrix $W$ will not provide a nontrivial success in DR-B task if it can be obtained through probabilistic mixture of two other processes $W^\prime~\&~W^{\prime\prime}$, where $W^\prime$ is extensibly causal and $W^{\prime\prime}$ is PPT in Alice vs Bob bipartition.
\end{corollary}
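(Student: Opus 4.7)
The plan is to exploit the linearity (in $W$) of the DR-B success probability for any fixed joint strategy, and then combine it with the two obstructions already established: Corollary \ref{corollary1} (extensibly causal processes give $P_{succ}^{\scalebox{.6}{DR-B}}\le 1/2$) and Theorem \ref{theo2} (PPT processes across $A_IA_O|B_IB_O$ give $P_{succ}^{\scalebox{.6}{DR-B}}\le 1/2$).

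First I would fix an arbitrary encoding state $\mathcal{B}^{\mathbf{x}}_{AB}$ and arbitrary instruments $\{M^a_{AA_IA_O}\}_{a=0}^1$ and $\{M^b_{BB_IB_O}\}_{b=0}^1$ for Alice and Bob. Since $p(a,b\mid\mathcal{B}^{\mathbf{x}}_{AB}) = \Tr[(\mathcal{B}^{\mathbf{x}}_{AB}\otimes W)(M^a\otimes M^b)]$ is linear in $W$, the overall success $P_{succ}^{\scalebox{.6}{DR-B}}(W)$ for that fixed strategy is a linear functional of $W$. Writing $W=pW'+(1-p)W''$, one gets
\begin{equation}
P_{succ}^{\scalebox{.6}{DR-B}}(W)=p\,P_{succ}^{\scalebox{.6}{DR-B}}(W')+(1-p)\,P_{succ}^{\scalebox{.6}{DR-B}}(W''),
\end{equation}
for that same strategy.

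Next, I would take the supremum over all encoding-compatible strategies. Since the supremum of a convex combination is bounded by the convex combination of suprema, the optimal success of $W$ satisfies
\begin{equation}
P_{succ}^{\scalebox{.6}{DR-B},*}(W)\le p\,P_{succ}^{\scalebox{.6}{DR-B},*}(W')+(1-p)\,P_{succ}^{\scalebox{.6}{DR-B},*}(W'').
\end{equation}
Corollary \ref{corollary1} gives $P_{succ}^{\scalebox{.6}{DR-B},*}(W')\le 1/2$ because $W'$ is extensibly causal, and Theorem \ref{theo2} gives $P_{succ}^{\scalebox{.6}{DR-B},*}(W'')\le 1/2$ because $W''$ is PPT across the $A_IA_O|B_IB_O$ cut (since if it gave nontrivial success it would have to be NPT). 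Adding these two bounds with weights $p$ and $1-p$ closes the argument at $1/2$.

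There is essentially no obstacle here; the corollary is a convexity statement on top of two already-proved no-go results, and the only thing to check is that the max over strategies is taken with respect to the \emph{same} instruments on both sides of the decomposition, which is automatic since the two instruments Alice and Bob pick act identically on $W'$ and $W''$. The only mild subtlety worth flagging in the write-up is that one must apply the two bounds strategy-by-strategy before taking the maximum, rather than afterwards, but this is precisely what the linearity in $W$ allows.
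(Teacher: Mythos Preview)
Your proposal is correct and mirrors the paper's own argument exactly: the paper states that the proof ``simply follows from Corollary~\ref{corollary1} and Theorem~\ref{theo2} due to the fact that success probability is a linear function of the processes,'' which is precisely the linearity-plus-convexity reasoning you spell out. Your write-up is more detailed than the paper's one-line justification, but the content is identical.
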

Proof simply follows from Corollary \ref{corollary1} and Theorem \ref{theo2} due to the fact that success probability is a linear function of the processes. A pictorial depiction of this corollary is shown in Fig.\ref{fig3}. We now proceed to present an intriguing super-activation phenomenon involving process matrices.  

\subsection{Super-Activation Phenomenon}\label{s4b}

Super-activation, where two or more `useless' resources become `useful' when combined for a specific task, is a ubiquitous phenomenon in quantum information processing. Here, we demonstrate super-activation in the context of causal indefiniteness. Similar results have been established in previous studies \cite{Oreshkov2016, Feix2016}, where the violation of certain causal inequalities was used as the utility function to show super-activation. However, super-activation can manifest with different utility functions. Here, we consider the success probability in the DR-B task as the utility function and address the question: \textit{Can there be two quantum processes $W$ and $W^\prime$, neither providing any advantage in the DR-B task, but yield a nontrivial success while their composition $W\otimes W^\prime$ is considered?} 

Before addressing this question, a careful analysis is required to determine whether the composite object $W\otimes W^\prime$ represents a valid quantum process. As pointed out by Jia \& Sakharwade \cite{Jia2018}, generally $W\otimes W^\prime$ violates the normalization condition of probabilities, leading to paradoxes (see also \cite{Gurin2019}). However, the composition represents a valid quantum process when one of them is a no-signaling process, namely a bipartite quantum state, and the other is any general quantum process. In fact, the existence of such a composition is required to prove the positivity of a generic quantum process matrix \cite{Oreshkov2012}. Particularly, Eq. (\ref{positivity}) ensures $W_{A_IA_OB_IB_O}\in\mathcal{L}(\mathcal{H}_{A_I}\otimes\mathcal{H}_{A_O}\otimes\mathcal{H}_{B_I}\otimes\mathcal{H}_{B_O})$ to be POPT (Positive on Product test) operator, whereas its positivity in ensured in Eq. (\ref{process}), demanding existence of the composite process $W_{A_IA_OB_IB_O}\otimes\rho_{AB}$. Thus, the question of super-activation of causal indefiniteness makes sense, and we provide an affirmative answer to this question.
\begin{theorem}
    There exists two processes $W$ and $W'$ such that the success probability for DR-B task is bounded by $1/2$ for both the processes, but when used in composition $W\otimes W'$ we can achieve a success strictly greater than $1/2$. \label{superactivation}
\end{theorem}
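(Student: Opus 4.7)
The plan is to exhibit explicit witnesses for super-activation by choosing $W = W^{Cyril}_{A_IA_OB_IB_O}$ (the process in Eq.~(\ref{gyni0})) and $W' = \mathcal{B}^{00}_{AB} = \ket{\phi^+}\bra{\phi^+}$, viewed as a no-signalling (hence causally separable) process shared between Alice and Bob on ancillary wires. The composition $W\otimes W'$ is a legitimate quantum process because, as noted in the excerpt, composing an arbitrary process with a bipartite state yields a valid process (this is exactly the positivity condition used to upgrade POPT to fully positive operators). So the three things I need to check are: (i) $W$ alone is useless for DR-B, (ii) $W'$ alone is useless for DR-B, and (iii) $W\otimes W'$ achieves success strictly greater than $1/2$.

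For (i), I would invoke Theorem~\ref{theo2}: $W^{Cyril}$ was explicitly shown in the discussion right after Theorem~\ref{theo2} to be PPT across the $A_IA_O|B_IB_O$ bipartition (indeed, it admits an explicit fully separable decomposition there). By the contrapositive of Theorem~\ref{theo2}, any strategy using $W^{Cyril}$ alone is bounded by $P^{\scalebox{.6}{DR-B}}_{succ}\le 1/2$. For (ii), the shared state $\mathcal{B}^{00}$ carries no signalling between Alice and Bob, so as a process it is of the spacelike type $W^{B\nprec\nsucc A}$ and hence causally separable; Proposition~\ref{prop4} then directly gives $P^{\scalebox{.6}{DR-B}}_{succ}\le 1/2$ for this process on its own.

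For (iii), I would apply the \emph{if} direction of Theorem~\ref{theo1}. The point is that $W^{Cyril}$ achieves $P^{\scalebox{.6}{GYNI}}_{succ} = \tfrac{5}{16}(1+\tfrac{1}{\sqrt{2}}) > 1/2$ using the instruments given in Eq.~(\ref{gyni0}). The constructive protocol in the \emph{if} part of Theorem~\ref{theo1} takes exactly this GYNI strategy together with an auxiliary Bell pair $\mathcal{B}^{00}_{A'B'}$ and converts it into a DR-B protocol achieving the \emph{same} success probability. Crucially, the auxiliary Bell pair used there is precisely our $W'$, so the composite resource $W\otimes W'$ attains $P^{\scalebox{.6}{DR-B}}_{succ} = \tfrac{5}{16}(1+\tfrac{1}{\sqrt{2}}) > 1/2$, establishing the claim.

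The only non-routine point is confirming that $W'=\mathcal{B}^{00}$ truly gives no DR-B advantage on its own: one must be careful that ``using $W'$ alone'' means the players share only the encoded Bell state and one extra copy $\mathcal{B}^{00}$, with no communication wires between them. In this setting the protocol reduces to a local-measurement-with-shared-entanglement strategy on two Bell states, which is known to be bounded by $1/2$ (equivalently, it is a causally separable process and Proposition~\ref{prop4} applies). This observation, together with steps (i) and (iii), completes the proof.
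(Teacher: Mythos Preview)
Your proposal is correct and essentially identical to the paper's own proof: the paper also takes $W=W^{Cyril}$ and $W'=\phi^+_{AB}$, invokes Theorem~\ref{theo2} (PPT of $W^{Cyril}$) and Proposition~\ref{prop4} (causal separability of $\phi^+$) to bound each individually by $1/2$, and then applies the \emph{if} part of Theorem~\ref{theo1} to the GYNI strategy in Eq.~(\ref{gyni0}) to obtain $P^{\scalebox{.6}{DR-B}}_{succ}=\tfrac{5}{16}(1+\tfrac{1}{\sqrt{2}})>1/2$ for the composite. The only minor addition in the paper is the remark that the $1/2$ bound is actually attained for each of $W$ and $W'$ via the protocol of Proposition~\ref{prop1}.
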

\begin{proof}
The proof is constructive. Considering $W=W^{Cyril}_{A_IA_OB_IB_O}$ and $W'=\phi^+_{AB}$. While Proposition \ref{prop4} bounds success probability of DR-B task for $W'$ to be $1/2$, Theorem \ref{theo2} imposes the same bound for $W$. In both cases, the success $1/2$ can be achieved simply by following the protocol stated in Proposition \ref{prop1}. On the other hand, using the protocol stated in Eq.(\ref{gyni0}), a success $5/16(1+1/\sqrt{2})>1/2$ can be achieved in GYNI game with the process $W$. Therefore, following the protocol discussed in the `{\it if part}' proof of Theorem \ref{theo1}, we can obtain the success $5/16(1+1/\sqrt{2})$ in the DR-B task with the composite process $W\otimes W'$. This establishes the super-activation of causal indefiniteness.
\end{proof}
Notably, the pair $(W,W')\equiv(W^{Cyril},\phi^+)$ is not the only instance of process-pair exhibiting such super-activation phenomenon -- here $W^{Cyril}$ can be replaced by any process  $W\in \mbox{ConvHull}(\textbf{W}^{EC}\cup\textbf{W}^{PPT})\setminus \textbf{W}^{EC}$ and yielding nontrivial advantage in GYNI game. An interesting question is which other no-signaling processes ({\it i.e}, bipartite quantum states) can be used as $W'_{AB}$ to activate causal indefiniteness of such $W_{A_IA_OB_IB_O}$'s? A partial answer follows from Theorem \ref{theo2}. Any PPT state $\rho^{PPT}_{AB}$ cannot be used for the purpose as $W_{A_IA_OB_IB_O}\otimes\rho^{PPT}_{AB}$ is PPT across $AA_IA_O|BB_IB_O$ whenever $W_{A_IA_OB_IB_O}$ is PPT across $A_IA_O|B_IB_O$. In general, it would be of interest to explore which NPT states are useful for this purpose.\footnote{As a simple example, one might investigate the range of the parameter $\lambda$ in the Werner-class of states $W'=\lambda\phi^++(1-\lambda)\mathbb{I}/4$ \cite{Werner1989} that activate $W^{Cyril}$ in DR-B task.}

At this point, the results in \cite{Oreshkov2016,Feix2016} are worth mentioning. In \cite{Oreshkov2016}, the authors introduce the notion of causal and causally separable quantum processes. While the causal processes never violates any causal inequality, the causally separable processes allow a canonical decomposition (see Theorem {\bf 2.2} in \cite{Oreshkov2016}).\footnote{In a sense, they have analogy with the notions of Bell-local and separable quantum states.} The authors also provide example of a tripartite quantum process that is causal but not causally separable. They also show example of tripartite causally separable processes that become non-causal when extended by supplying the parties with entangled ancillas. This exhibits a kind of `causal activation' phenomenon. In \cite{Feix2016}, the authors provide example of bipartite causally nonseparable processes that allow causal model, and they also show evidence of `causal activation' phenomenon where combination of two causal process becomes non-causal.

\subsection{Advantage of classical causal-indefinite processes in DR task}\label{s5}
Assuming  quantum theory to be valid locally, relaxation of global time order between multiple parties led to the formalism of Process Matrices that accommodates the notion of causal indefiniteness \cite{Oreshkov2012}. Notably, by assuming local operations to be strictly classical the authors in \cite{Oreshkov2012} have shown impossibility of bipartite causally inseparable  processes in classical case, conjecturing the same to hold in the multipartite setting as well. However, quite surprisingly the authors in \cite{Baumeler2014,Baumeler2016} prove the above conjecture to be false, implying causal indefiniteness to be a feature not inherent to quantum theory only. In this section we will analyse whether such causally indefinite classical process could be advantageous in multipartite DR task. Here we present the main findings by considering a specific example of a tripartite DR task and refer to Appendix \ref{ClassicalICO} for detailed explanation of the prerequisites and results.\\

\noindent\textbf{Tripartite DR task (T-DR)}: A Referee encodes the strings ${\bf x}\equiv\mathbf{x_1}\mathbf{x_2}\mathbf{x_3}\equiv x_1x'_1x_2x'_2x_3x'_3\in\{0,1\}^{\times 6}$ into 
\begin{align}
\rho_{ABC}^{\mathbf{x}}= (\mathcal{B}_{AC}^{\mathbf{x_1}})^{\otimes2}\otimes(\mathcal{B}_{BA}^{\mathbf{x_2}})^{\otimes2}\otimes(\mathcal{B}_{CB}^{\mathbf{x_3}})^{\otimes2}.\label{T-DRencoding}
\end{align}
and distributes respective subsystems to Alice, Bob, and Charlie. The hiding condition is satisfied as 
\begin{align}
\rho^{\bf x}_{\mathcal{K}}:=\Tr_{\bar{\mathcal{K}}}\rho^{\bf x}_{ABC}=\left(\mathbb{I}/2\right)^{\otimes 4},~~\forall~{\bf x},
\end{align}
for all $\mathcal{K}\in\{A,B,C\}$, where $\bar{A}:=BC$ and {\it etc.} Each player guesses a two bit string and accordingly will be given some payoff. Their guesses are correct if they have some definitive information about the given messages. For instance, Alice's guess $a_1a'_1$ could be correct in two ways: (i) she perfectly predicts the given string $x_1x'_1$, (ii) she perfectly eliminates one of the strings not given to her. Let us define the sets 
\begin{subequations}
\begin{align}
\$^{\bf y}&\equiv\$^{yy'}:=\{0y\bar{y}',0\bar{y}y',0\bar{y}\bar{y}',1yy'\},\\
\pounds^{\bf x}&\equiv\pounds^{\bf x_1x_2x_3}:=\$^{\bf x_1}\times \$^{\bf x_2}\times \$^{\bf x_3}.
\end{align}
\end{subequations}
Here $\bar{0}=1$ and $\bar{1}=0$. Accordingly, the winning condition reads as 
\begin{align}
&\left\{\!\begin{aligned}
&\left({\bf a}:=a_0a_1a'_1\in\$^{\bf x_1}\right)
~\land~\left({\bf b}:=b_0b_1b'_1\in\$^{\bf x_2}\right)\\
&\hspace{2cm}\land~\left({\bf c}:=c_0c_1c'_1\in\$^{\bf x_3}\right)
\end{aligned}\right\},\nonumber\\
&\hspace{1cm}\mbox{or~equivalently},~~{\bf g}\equiv{\bf abc}\in\pounds^{\bf x}.\label{wincon}
\end{align}
The first bit of a player's guess, {\it i.e.}, $a_0/b_0/c_0$ denotes whether they chooses to identify the string given to them or choose to eliminate it. Our next result shows that success of T-DR is non-trivially bounded for any such bi-causal process however there exists a tripartite classical process for which this bound can be surpassed. This establishes the genuine tripartite indefiniteness of the corresponding classical process.

\begin{theorem}\label{theo3}
For any bi-causal process the success of T-DR task is always upper bounded by $3/4$, however there exists a tripartite classical process that yields a success of $27/32>3/4$. \end{theorem}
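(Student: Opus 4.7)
The plan is to prove the two halves by complementary arguments.

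For the bi-causal upper bound, I would invoke the canonical decomposition in Eq.~(\ref{bcs}) and exploit linearity of the success functional in the shared process: by convexity it suffices to establish the $3/4$ bound separately for each of the six extremal types $W^{X\not\prec YZ}$ and $W^{YZ\not\prec X}$. For a term of the form $W^{YZ\not\prec X}$, party $X$ cannot be signaled by either of the other two, so her only input is the marginal $\rho^{\mathbf{x}}_{X}=(\mathbb{I}/2)^{\otimes 4}$, which is independent of $\mathbf{x}$ by the hiding condition. Consequently any outcome she produces is independent of $\mathbf{x_X}$, and the best fixed guess lies in $\$^{\mathbf{x_X}}$ with probability exactly $3/4$ when averaged over uniform $\mathbf{x_X}$ (any guess whose first bit is $0$ attains this bound). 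For a term of the form $W^{X\not\prec YZ}$ the situation is dual: although $X$ can now receive signals and can identify her own pair, the player whose bit-pair is encoded on the Bell state shared with $X$---for instance Bob in the term $W^{A\not\prec BC}$, since $\mathbf{x_2}$ lives on $\mathcal{B}_{BA}$---still receives nothing carrying information about that pair. The complementary share is held by $X$, who is now forbidden from sending, and no other party holds a share correlated with that bit-pair, so this second player is again capped at $3/4$. Since the remaining two players can always achieve unit success via the permitted signaling directions, each extremal term yields joint success at most $1\cdot 1\cdot 3/4=3/4$, and the bound propagates to all bi-causal processes by convexity.

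For the achievability half, I would construct an explicit strategy built on a classical tripartite process with causal indefiniteness of the Baumeler-Wolf type, whose defining feature is a consistent cyclic classical-signaling primitive $C\to A\to B\to C$ that is realizable in no causal order. Each party measures one copy of each of its Bell pairs in the computational basis and the other in the Hadamard basis: since the XOR of partner outcomes in either basis reveals one of the two bits of the encoded pair, full knowledge of both partner outcome-pairs determines $\mathbf{x_i}$ exactly. The classical process is then used to relay Charlie's $\mathcal{B}_{AC}$-outcomes to Alice, Alice's $\mathcal{B}_{BA}$-outcomes to Bob, and Bob's $\mathcal{B}_{CB}$-outcomes to Charlie. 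Each party completes her guess by combining her own outcomes with the relayed bits, either committing to an identification guess when the relayed data is internally consistent or falling back on a pre-committed elimination string of the form $0\bar{y}\ast$, which is in $\$^{yy'}$ with probability $3/4$.

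The main obstacle will be showing that the resulting success equals precisely $27/32$, and not just some strictly smaller value exceeding $3/4$. This requires a careful interleaving of three sources of stochasticity: the Baumeler-Wolf fixed-point constraints, which restrict the admissible joint input/output patterns of the process; the statistics of the computational- and Hadamard-basis outcomes on each Bell state as a function of $\mathbf{x}$; and the uniform average over $\mathbf{x}\in\{0,1\}^{6}$. One must verify that the successful cyclic identifications combine with the elimination fallback to give exactly $27/32$. The enumeration is combinatorial and benefits from the $\mathbb{Z}_3$-symmetry cycling the three parties; the detailed calculation is naturally relegated to the appendix referenced in the paper.
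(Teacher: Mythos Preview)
Your treatment of the bi-causal upper bound is correct and matches the paper's Proposition~\ref{prop8}: decompose via Eq.~(\ref{bcs}), identify for each extremal term one party whose view is statistically independent of her target pair (hence marginally capped at $3/4$), and conclude by convexity and monotonicity of the joint success under intersection of events. Your analysis is in fact more explicit than the paper's in tracing why, in a term like $W^{A\not\prec BC}$, no route exists for information about $\mathbf{x_2}$ to reach Bob (Alice holds the only correlated share and cannot signal; Charlie holds nothing relevant).

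The achievability half, however, has a genuine gap. The Baumeler--Wolf process $\mathbb{E}^{BW}$ of Eq.~(\ref{EBW}) has \emph{one-bit} inputs and outputs per party, so it cannot ``relay Charlie's $\mathcal{B}_{AC}$-outcomes to Alice'' in the sense you describe---those outcomes comprise two bits, and a naive two-bit cyclic relay is not a logically consistent process. The paper's protocol is more delicate and is not the one you sketch: each party sends to the environment the single bit $o_X$ given by the \emph{AND} of her $Z$- and $X$-outcomes on the Bell pair she shares with her successor in the cycle, and uses the returned bit $i_X$ directly as the flag $a_0/b_0/c_0$. Crucially, the remaining two guess bits are not built from relayed data at all but are fixed to be the \emph{complements of her own outcomes} on the pair shared with her predecessor. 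A short computation (valid for every $\mathbf{x}$) shows this guess is a correct identification iff the predecessor's AND bit equals $1$ and a correct elimination iff it equals $0$; hence each party wins precisely when $i_X$ equals that AND bit, which holds throughout the $\mathrm{maj}=0$ branch of $\mathbb{E}^{BW}$. Since each $o_X=1$ independently with probability $1/4$, one gets $P(\mathrm{maj}=0)=(3/4)^3+3\cdot(1/4)(3/4)^2=27/32$. Your ``pre-committed elimination string'' fallback is a different mechanism that does not interlock with the one-bit $\mathbb{E}^{BW}$ in this way; the compression-via-AND trick and the complement-based guess, which together make a single relayed bit sufficient, are the missing ingredients in your outline.
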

The proof is provided in the Appendix \ref{s5d}. See Prop.(\ref{prop8}).

\section{Discussion}\label{s6}
In this work, we study causal indefiniteness with respect to its utility in retrieving locally inaccessible data. We consider a simple data retrieval (DR) task and demonstrate that, under the single-opening setup, parties sharing causally inseparable processes generally outperform those sharing causally separable processes. Along these lines, we present several intriguing findings, which are discussed comprehensively below, along with their nontrivial implications.

$\bullet$ \textit{Duality between the DR task and the GYNI game.--} 
For the bipartite case, we demonstrate (Theorems \ref{theo1} and \ref{theo1'}) that if two parties achieve a success probability $\mu$ in the DR task from maximally entangled states, a dual protocol exists to achieve the same success in the corresponding GYNI game, and vice versa. This implies that the optimal success probabilities for these seemingly distinct tasks are identical. Duality often plays an important role in both mathematics and physics by bridging seemingly distinct concepts, and providing alternative approaches to solving problems that are challenging in one domain but simpler in their dual formulations. Our established duality offers promising insights into the nature of indefinite causal structures. In particular, a key question in causal indefiniteness is determining the optimal violation of causal inequalities by quantum processes (analogous to the Tsirelson bound for quantum nonlocal correlations).

This question has been partially explored in a recent work by Liu and Chiribella \cite{Liu2024}, which proposes nontrivial upper bounds for general causal inequalities, demonstrating their achievability for specific classes, called the single-trigger inequalities. In this context, the duality we establish could provide an alternative framework for tackling this problem. Specifically, the optimal success probability for the DR task hinges on the maximum information retrievable from bipartite ensembles in a single-opening setup. Hence, deriving a Holevo-like bound for this scenario could directly inform a Tsirelson bound for causal indefiniteness. This represents a compelling direction for future exploration, with potential to deepen our understanding of quantum causal structures.
 
$\bullet$ \textit{Peres-like criterion for bipartite quantum processes.--} We demonstrate that causal inseparability alone is insufficient to provide a nontrivial advantage in the data retrieval (DR) task. As a result, we derive a stricter necessary criterion for bipartite quantum processes to be useful in the bipartite data retrieval task from Bell states (DR-B). Specifically, we show that bipartite processes \( W_{A_IA_OB_IB_O} \) that are positive under partial transpose (PPT) across the \( A_IA_O | B_IB_O \) partition are not useful in DR-B (Theorem \ref{theo2}), even if they violate a causal inequality. Consequently, in Corollary \ref{cor1}, we establish that any bipartite process lying within the convex hull of extensibly causal processes and PPT processes, \( \text{ConvHull}(\mathbf{W}^{EC} \cup \mathbf{W}^{PPT}) \) [see Fig. \ref{fig3}], does not provide any nontrivial advantage in the DR-B task. However, it remains an open question whether there exist processes outside the convex hull that still fail to yield nontrivial success in the DR-B task.

This result introduces a further layer of classification in the quantum process space. Since NPT (non-positive under partial transpose) states are generally more resourceful than PPT states in the LOCC paradigm, a natural expectation is that a similar hierarchy should also manifest at the process level. However, to the best of our knowledge, no operational task exhibiting this hierarchy has been identified in the literature. Our DR-B task provides an explicit demonstration of such a task, highlighting the greater resourcefulness of NPT processes compared to PPT ones.

$\bullet$ \textit{Super-activation of causal indefiniteness.--} We have also reported an intriguing super-activation phenomenon involving quantum processes. Particularly, an entangled state shared between Alice and Bob, being a no-signalling resource, by its own, does not provide a nontrivial success in DR-B task. On the other hand, a process lying within the set \( \text{ConvHull}(\mathbf{W}^{EC} \cup \mathbf{W}^{PPT}) \) is also not useful for this task by its own. However, as shown in the {\bf only if} part of our Theorem \ref{theo1}, any process $W\in \mbox{ConvHull}(\textbf{W}^{EC}\cup\textbf{W}^{PPT})\setminus \textbf{W}^{EC}$ violating GYNI inequality will become useful in DR-B task when assisted with a two-qubit maximally entangled state, demonstrating the super-activation phenomenon. An explicit example of such a process is the $W^{Cyril}$ process of Eq. (\ref{gyni0}). 

$\bullet$ \textit{Advantage of causally inseparable classical processes.--} Considering the tripartite version of DR task (T-DR), we have shown that the advantage of causal indefiniteness in DR task is not exclusive to the quantum nature of process matrices, rather it persists in classical processes as well. Our T-DR task demonstrates that certain tripartite classical processes can outperform bicausal quantum processes (Section \ref{s5d}), establishing the efficacy of genuine causal indefiniteness. 

To conclude, our exploration sheds new light on several previously unexplored facets of the causal indefiniteness. On one hand, we uncover new structural characterizations of quantum processes; on the other hand, the established duality between the DR task and the GYNI game opens up the possibility of an information-theoretic approach to address the question of the optimal quantum violation of causal inequalities. Establishing similar dualities between generalized versions of DR tasks and other causal games may provide new foundational insights into the structure of causally indefinite processes, paving the way for future research directions.

{\bf Acknowledgment:} We heartily thank Ananda G. Maity and Ognyan Oreshkov for useful suggestions on earlier version of the manuscript. SGN acknowledges support from the CSIR project 09/0575(15951)/2022-EMR-I. MA acknowledges the funding supported by the European Union  (Project QURES- GA No. 101153001). MB acknowledges funding from the National Mission in Interdisciplinary Cyber-Physical systems from the Department of Science and Technology through the I-HUB Quantum Technology Foundation (Grant no: I-HUB/PDF/2021-22/008).


\begin{thebibliography}{73}%
	\makeatletter
	\providecommand \@ifxundefined [1]{%
		\@ifx{#1\undefined}
	}%
	\providecommand \@ifnum [1]{%
		\ifnum #1\expandafter \@firstoftwo
		\else \expandafter \@secondoftwo
		\fi
	}%
	\providecommand \@ifx [1]{%
		\ifx #1\expandafter \@firstoftwo
		\else \expandafter \@secondoftwo
		\fi
	}%
	\providecommand \natexlab [1]{#1}%
	\providecommand \enquote  [1]{``#1''}%
	\providecommand \bibnamefont  [1]{#1}%
	\providecommand \bibfnamefont [1]{#1}%
	\providecommand \citenamefont [1]{#1}%
	\providecommand \href@noop [0]{\@secondoftwo}%
	\providecommand \href [0]{\begingroup \@sanitize@url \@href}%
	\providecommand \@href[1]{\@@startlink{#1}\@@href}%
	\providecommand \@@href[1]{\endgroup#1\@@endlink}%
	\providecommand \@sanitize@url [0]{\catcode `\\12\catcode `\$12\catcode
		`\&12\catcode `\#12\catcode `\^12\catcode `\_12\catcode `\%12\relax}%
	\providecommand \@@startlink[1]{}%
	\providecommand \@@endlink[0]{}%
	\providecommand \url  [0]{\begingroup\@sanitize@url \@url }%
	\providecommand \@url [1]{\endgroup\@href {#1}{\urlprefix }}%
	\providecommand \urlprefix  [0]{URL }%
	\providecommand \Eprint [0]{\href }%
	\providecommand \doibase [0]{https://doi.org/}%
	\providecommand \selectlanguage [0]{\@gobble}%
	\providecommand \bibinfo  [0]{\@secondoftwo}%
	\providecommand \bibfield  [0]{\@secondoftwo}%
	\providecommand \translation [1]{[#1]}%
	\providecommand \BibitemOpen [0]{}%
	\providecommand \bibitemStop [0]{}%
	\providecommand \bibitemNoStop [0]{.\EOS\space}%
	\providecommand \EOS [0]{\spacefactor3000\relax}%
	\providecommand \BibitemShut  [1]{\csname bibitem#1\endcsname}%
	\let\auto@bib@innerbib\@empty
	\bibitem [{\citenamefont {Earman}(1977)}]{Earman1977}%
	\BibitemOpen
	\bibfield  {author} {\bibinfo {author} {\bibfnamefont {J.}~\bibnamefont
			{Earman}},\ }\bibfield  {title} {\bibinfo {title} {Determinism in the
			physical sciences. {I}n {J}eremy {B}utterfield \& {J}ohn {E}arman (eds.)}\
	}(\bibinfo {year} {1977})\ p.\ \bibinfo {pages} {232}\BibitemShut {NoStop}%
	\bibitem [{\citenamefont {Peres}(1993)}]{Peres1993}%
	\BibitemOpen
	\bibfield  {author} {\bibinfo {author} {\bibfnamefont {A.}~\bibnamefont
			{Peres}},\ }\href@noop {} {\emph {\bibinfo {title} {Quantum Theory: Concepts
				and Methods}}}\ (\bibinfo  {publisher} {Kluwer Academic Publishers},\
	\bibinfo {year} {1993})\BibitemShut {NoStop}%
	\bibitem [{\citenamefont {Kraus}\ \emph {et~al.}(1983)\citenamefont {Kraus},
		\citenamefont {Böhm}, \citenamefont {Dollard},\ and\ \citenamefont
		{Wootters}}]{Kraus1983}%
	\BibitemOpen
	\bibfield  {author} {\bibinfo {author} {\bibfnamefont {K.}~\bibnamefont
			{Kraus}}, \bibinfo {author} {\bibfnamefont {A.}~\bibnamefont {Böhm}},
		\bibinfo {author} {\bibfnamefont {J.~D.}\ \bibnamefont {Dollard}},\ and\
		\bibinfo {author} {\bibfnamefont {W.~H.}\ \bibnamefont {Wootters}},\ }\href
	{https://doi.org/10.1007/3-540-12732-1} {\emph {\bibinfo {title} {States,
				Effects, and Operations: Fundamental Notions of Quantum Theory}}}\ (\bibinfo
	{publisher} {Springer Berlin Heidelberg},\ \bibinfo {year}
	{1983})\BibitemShut {NoStop}%
	\bibitem [{\citenamefont {Wald}(2010)}]{Wald2010}%
	\BibitemOpen
	\bibfield  {author} {\bibinfo {author} {\bibfnamefont {R.~M.}\ \bibnamefont
			{Wald}},\ }\href@noop {} {\emph {\bibinfo {title} {General Relativity}}}\
	(\bibinfo  {publisher} {The University of Chicago Press},\ \bibinfo {year}
	{2010})\BibitemShut {NoStop}%
	\bibitem [{\citenamefont {Hardy}(2005)}]{Hardy2005}%
	\BibitemOpen
	\bibfield  {author} {\bibinfo {author} {\bibfnamefont {L.}~\bibnamefont
			{Hardy}},\ }\href {https://arxiv.org/abs/gr-qc/0509120} {\bibinfo {title}
		{Probability theories with dynamic causal structure: A new framework for
			quantum gravity}} (\bibinfo {year} {2005}),\ \Eprint
	{https://arxiv.org/abs/gr-qc/0509120} {arXiv:gr-qc/0509120 [gr-qc]}
	\BibitemShut {NoStop}%
	\bibitem [{\citenamefont {Hardy}(2007)}]{Hardy2007}%
	\BibitemOpen
	\bibfield  {author} {\bibinfo {author} {\bibfnamefont {L.}~\bibnamefont
			{Hardy}},\ }\bibfield  {title} {\bibinfo {title} {Towards quantum gravity: a
			framework for probabilistic theories with non-fixed causal structure},\
	}\href {https://doi.org/10.1088/1751-8113/40/12/s12} {\bibfield  {journal}
		{\bibinfo  {journal} {J. Phys. A: Math. Theo.}\ }\textbf {\bibinfo {volume}
			{40}},\ \bibinfo {pages} {3081–3099} (\bibinfo {year} {2007})}\BibitemShut
	{NoStop}%
	\bibitem [{\citenamefont {Hardy}(2009)}]{Hardy2009}%
	\BibitemOpen
	\bibfield  {author} {\bibinfo {author} {\bibfnamefont {L.}~\bibnamefont
			{Hardy}},\ }\bibinfo {title} {Quantum gravity computers: On the theory of
		computation with indefinite causal structure},\ in\ \href
	{https://doi.org/10.1007/978-1-4020-9107-0_21} {\emph {\bibinfo {booktitle}
			{Quantum Reality, Relativistic Causality, and Closing the Epistemic
				Circle}}}\ (\bibinfo  {publisher} {Springer Netherlands},\ \bibinfo {year}
	{2009})\ p.\ \bibinfo {pages} {379–401}\BibitemShut {NoStop}%
	\bibitem [{\citenamefont {Chiribella}\ \emph {et~al.}(2013)\citenamefont
		{Chiribella}, \citenamefont {D'Ariano}, \citenamefont {Perinotti},\ and\
		\citenamefont {Valiron}}]{Chiribella2013}%
	\BibitemOpen
	\bibfield  {author} {\bibinfo {author} {\bibfnamefont {G.}~\bibnamefont
			{Chiribella}}, \bibinfo {author} {\bibfnamefont {G.~M.}\ \bibnamefont
			{D'Ariano}}, \bibinfo {author} {\bibfnamefont {P.}~\bibnamefont
			{Perinotti}},\ and\ \bibinfo {author} {\bibfnamefont {B.}~\bibnamefont
			{Valiron}},\ }\bibfield  {title} {\bibinfo {title} {Quantum computations
			without definite causal structure},\ }\href
	{https://doi.org/10.1103/PhysRevA.88.022318} {\bibfield  {journal} {\bibinfo
			{journal} {Phys. Rev. A}\ }\textbf {\bibinfo {volume} {88}},\ \bibinfo
		{pages} {022318} (\bibinfo {year} {2013})}\BibitemShut {NoStop}%
	\bibitem [{\citenamefont {Oreshkov}\ \emph {et~al.}(2012)\citenamefont
		{Oreshkov}, \citenamefont {Costa},\ and\ \citenamefont
		{Brukner}}]{Oreshkov2012}%
	\BibitemOpen
	\bibfield  {author} {\bibinfo {author} {\bibfnamefont {O.}~\bibnamefont
			{Oreshkov}}, \bibinfo {author} {\bibfnamefont {F.}~\bibnamefont {Costa}},\
		and\ \bibinfo {author} {\bibfnamefont {C.}~\bibnamefont {Brukner}},\
	}\bibfield  {title} {\bibinfo {title} {Quantum correlations with no causal
			order},\ }\href {https://doi.org/10.1038/ncomms2076} {\bibfield  {journal}
		{\bibinfo  {journal} {Nat. Commun.}\ }\textbf {\bibinfo {volume} {3}},\
		\bibinfo {pages} {1092} (\bibinfo {year} {2012})}\BibitemShut {NoStop}%
	\bibitem [{\citenamefont {Perinotti}(2017)}]{Perinotti2017}%
	\BibitemOpen
	\bibfield  {author} {\bibinfo {author} {\bibfnamefont {P.}~\bibnamefont
			{Perinotti}},\ }\bibinfo {title} {Causal structures and the classification of
		higher order quantum computations},\ in\ \href
	{https://doi.org/10.1007/978-3-319-68655-4_7} {\emph {\bibinfo {booktitle}
			{Tutorials, Schools, and Workshops in the Mathematical Sciences}}}\ (\bibinfo
	{publisher} {Springer International Publishing},\ \bibinfo {year} {2017})\
	p.\ \bibinfo {pages} {103–127}\BibitemShut {NoStop}%
	\bibitem [{\citenamefont {Oeckl}(2019)}]{Oeckl2019}%
	\BibitemOpen
	\bibfield  {author} {\bibinfo {author} {\bibfnamefont {R.}~\bibnamefont
			{Oeckl}},\ }\bibfield  {title} {\bibinfo {title} {A local and operational
			framework for the foundations of physics},\ }\href
	{https://doi.org/10.4310/atmp.2019.v23.n2.a4} {\bibfield  {journal} {\bibinfo
			{journal} {Ad. Theo. Math. Phys.}\ }\textbf {\bibinfo {volume} {23}},\
		\bibinfo {pages} {437–592} (\bibinfo {year} {2019})}\BibitemShut {NoStop}%
	\bibitem [{\citenamefont {Shannon}(1948)}]{Shannon1948}%
	\BibitemOpen
	\bibfield  {author} {\bibinfo {author} {\bibfnamefont {C.~E.}\ \bibnamefont
			{Shannon}},\ }\bibfield  {title} {\bibinfo {title} {A mathematical theory of
			communication},\ }\href {https://doi.org/10.1002/j.1538-7305.1948.tb01338.x}
	{\bibfield  {journal} {\bibinfo  {journal} {Bell Syst. Tech. J.}\ }\textbf
		{\bibinfo {volume} {27}},\ \bibinfo {pages} {379–423} (\bibinfo {year}
		{1948})}\BibitemShut {NoStop}%
	\bibitem [{\citenamefont {Deutsch}(1985)}]{Deutsch1985}%
	\BibitemOpen
	\bibfield  {author} {\bibinfo {author} {\bibfnamefont {D.}~\bibnamefont
			{Deutsch}},\ }\bibfield  {title} {\bibinfo {title} {Quantum theory, the
			church–turing principle and the universal quantum computer},\ }\href
	{https://doi.org/10.1098/rspa.1985.0070} {\bibfield  {journal} {\bibinfo
			{journal} {Proc. R. Soc. Lond. A}\ }\textbf {\bibinfo {volume} {400}},\
		\bibinfo {pages} {97–117} (\bibinfo {year} {1985})}\BibitemShut {NoStop}%
	\bibitem [{\citenamefont {Bennett}\ and\ \citenamefont
		{Wiesner}(1992)}]{Bennett1992}%
	\BibitemOpen
	\bibfield  {author} {\bibinfo {author} {\bibfnamefont {C.~H.}\ \bibnamefont
			{Bennett}}\ and\ \bibinfo {author} {\bibfnamefont {S.~J.}\ \bibnamefont
			{Wiesner}},\ }\bibfield  {title} {\bibinfo {title} {Communication via one-
			and two-particle operators on einstein-podolsky-rosen states},\ }\href
	{https://doi.org/10.1103/PhysRevLett.69.2881} {\bibfield  {journal} {\bibinfo
			{journal} {Phys. Rev. Lett.}\ }\textbf {\bibinfo {volume} {69}},\ \bibinfo
		{pages} {2881} (\bibinfo {year} {1992})}\BibitemShut {NoStop}%
	\bibitem [{\citenamefont {Bennett}\ \emph {et~al.}(1993)\citenamefont
		{Bennett}, \citenamefont {Brassard}, \citenamefont {Cr\'epeau}, \citenamefont
		{Jozsa}, \citenamefont {Peres},\ and\ \citenamefont
		{Wootters}}]{Bennett1993}%
	\BibitemOpen
	\bibfield  {author} {\bibinfo {author} {\bibfnamefont {C.~H.}\ \bibnamefont
			{Bennett}}, \bibinfo {author} {\bibfnamefont {G.}~\bibnamefont {Brassard}},
		\bibinfo {author} {\bibfnamefont {C.}~\bibnamefont {Cr\'epeau}}, \bibinfo
		{author} {\bibfnamefont {R.}~\bibnamefont {Jozsa}}, \bibinfo {author}
		{\bibfnamefont {A.}~\bibnamefont {Peres}},\ and\ \bibinfo {author}
		{\bibfnamefont {W.~K.}\ \bibnamefont {Wootters}},\ }\bibfield  {title}
	{\bibinfo {title} {Teleporting an unknown quantum state via dual classical
			and einstein-podolsky-rosen channels},\ }\href
	{https://doi.org/10.1103/PhysRevLett.70.1895} {\bibfield  {journal} {\bibinfo
			{journal} {Phys. Rev. Lett.}\ }\textbf {\bibinfo {volume} {70}},\ \bibinfo
		{pages} {1895} (\bibinfo {year} {1993})}\BibitemShut {NoStop}%
	\bibitem [{\citenamefont {Shor}()}]{Shor94}%
	\BibitemOpen
	\bibfield  {author} {\bibinfo {author} {\bibfnamefont {P.}~\bibnamefont
			{Shor}},\ }\bibfield  {title} {\bibinfo {title} {Algorithms for quantum
			computation: discrete logarithms and factoring},\ }in\ \href
	{https://doi.org/10.1109/sfcs.1994.365700} {\emph {\bibinfo {booktitle}
			{Proceedings 35th Annual Symposium on Foundations of Computer Science}}},\
	\bibinfo {series and number} {SFCS-94}\ (\bibinfo  {publisher} {IEEE Comput.
		Soc. Press})\BibitemShut {NoStop}%
	\bibitem [{\citenamefont {Grover}(1996)}]{Grover1996}%
	\BibitemOpen
	\bibfield  {author} {\bibinfo {author} {\bibfnamefont {L.~K.}\ \bibnamefont
			{Grover}},\ }\bibfield  {title} {\bibinfo {title} {A fast quantum mechanical
			algorithm for database search},\ }in\ \href
	{https://doi.org/10.1145/237814.237866} {\emph {\bibinfo {booktitle}
			{Proceedings of the twenty-eighth annual ACM symposium on Theory of computing
				- STOC ’96}}},\ \bibinfo {series and number} {STOC ’96}\ (\bibinfo
	{publisher} {ACM Press},\ \bibinfo {year} {1996})\BibitemShut {NoStop}%
	\bibitem [{\citenamefont {Chiribella}(2012)}]{Chiribella2012}%
	\BibitemOpen
	\bibfield  {author} {\bibinfo {author} {\bibfnamefont {G.}~\bibnamefont
			{Chiribella}},\ }\bibfield  {title} {\bibinfo {title} {Perfect discrimination
			of no-signalling channels via quantum superposition of causal structures},\
	}\href {https://doi.org/10.1103/PhysRevA.86.040301} {\bibfield  {journal}
		{\bibinfo  {journal} {Phys. Rev. A}\ }\textbf {\bibinfo {volume} {86}},\
		\bibinfo {pages} {040301} (\bibinfo {year} {2012})}\BibitemShut {NoStop}%
	\bibitem [{\citenamefont {Ara\'ujo}\ \emph {et~al.}(2014)\citenamefont
		{Ara\'ujo}, \citenamefont {Costa},\ and\ \citenamefont
		{Brukner}}]{Araujo2014}%
	\BibitemOpen
	\bibfield  {author} {\bibinfo {author} {\bibfnamefont {M.}~\bibnamefont
			{Ara\'ujo}}, \bibinfo {author} {\bibfnamefont {F.}~\bibnamefont {Costa}},\
		and\ \bibinfo {author} {\bibfnamefont {C.}~\bibnamefont {Brukner}},\
	}\bibfield  {title} {\bibinfo {title} {Computational advantage from
			quantum-controlled ordering of gates},\ }\href
	{https://doi.org/10.1103/PhysRevLett.113.250402} {\bibfield  {journal}
		{\bibinfo  {journal} {Phys. Rev. Lett.}\ }\textbf {\bibinfo {volume} {113}},\
		\bibinfo {pages} {250402} (\bibinfo {year} {2014})}\BibitemShut {NoStop}%
	\bibitem [{\citenamefont {Gu\'erin}\ \emph {et~al.}(2016)\citenamefont
		{Gu\'erin}, \citenamefont {Feix}, \citenamefont {Ara\'ujo},\ and\
		\citenamefont {Brukner}}]{Guerin2016}%
	\BibitemOpen
	\bibfield  {author} {\bibinfo {author} {\bibfnamefont {P.~A.}\ \bibnamefont
			{Gu\'erin}}, \bibinfo {author} {\bibfnamefont {A.}~\bibnamefont {Feix}},
		\bibinfo {author} {\bibfnamefont {M.}~\bibnamefont {Ara\'ujo}},\ and\
		\bibinfo {author} {\bibfnamefont {C.}~\bibnamefont {Brukner}},\ }\bibfield
	{title} {\bibinfo {title} {Exponential communication complexity advantage
			from quantum superposition of the direction of communication},\ }\href
	{https://doi.org/10.1103/PhysRevLett.117.100502} {\bibfield  {journal}
		{\bibinfo  {journal} {Phys. Rev. Lett.}\ }\textbf {\bibinfo {volume} {117}},\
		\bibinfo {pages} {100502} (\bibinfo {year} {2016})}\BibitemShut {NoStop}%
	\bibitem [{\citenamefont {Zhao}\ \emph {et~al.}(2020)\citenamefont {Zhao},
		\citenamefont {Yang},\ and\ \citenamefont {Chiribella}}]{Zhao2020}%
	\BibitemOpen
	\bibfield  {author} {\bibinfo {author} {\bibfnamefont {X.}~\bibnamefont
			{Zhao}}, \bibinfo {author} {\bibfnamefont {Y.}~\bibnamefont {Yang}},\ and\
		\bibinfo {author} {\bibfnamefont {G.}~\bibnamefont {Chiribella}},\ }\bibfield
	{title} {\bibinfo {title} {Quantum metrology with indefinite causal order},\
	}\href {https://doi.org/10.1103/PhysRevLett.124.190503} {\bibfield  {journal}
		{\bibinfo  {journal} {Phys. Rev. Lett.}\ }\textbf {\bibinfo {volume} {124}},\
		\bibinfo {pages} {190503} (\bibinfo {year} {2020})}\BibitemShut {NoStop}%
	\bibitem [{\citenamefont {Felce}\ and\ \citenamefont
		{Vedral}(2020)}]{Felce2020}%
	\BibitemOpen
	\bibfield  {author} {\bibinfo {author} {\bibfnamefont {D.}~\bibnamefont
			{Felce}}\ and\ \bibinfo {author} {\bibfnamefont {V.}~\bibnamefont {Vedral}},\
	}\bibfield  {title} {\bibinfo {title} {Quantum refrigeration with indefinite
			causal order},\ }\href {https://doi.org/10.1103/PhysRevLett.125.070603}
	{\bibfield  {journal} {\bibinfo  {journal} {Phys. Rev. Lett.}\ }\textbf
		{\bibinfo {volume} {125}},\ \bibinfo {pages} {070603} (\bibinfo {year}
		{2020})}\BibitemShut {NoStop}%
	\bibitem [{\citenamefont {Guha}\ \emph {et~al.}(2020)\citenamefont {Guha},
		\citenamefont {Alimuddin},\ and\ \citenamefont {Parashar}}]{Guha2020}%
	\BibitemOpen
	\bibfield  {author} {\bibinfo {author} {\bibfnamefont {T.}~\bibnamefont
			{Guha}}, \bibinfo {author} {\bibfnamefont {M.}~\bibnamefont {Alimuddin}},\
		and\ \bibinfo {author} {\bibfnamefont {P.}~\bibnamefont {Parashar}},\
	}\bibfield  {title} {\bibinfo {title} {Thermodynamic advancement in the
			causally inseparable occurrence of thermal maps},\ }\href
	{https://doi.org/10.1103/PhysRevA.102.032215} {\bibfield  {journal} {\bibinfo
			{journal} {Phys. Rev. A}\ }\textbf {\bibinfo {volume} {102}},\ \bibinfo
		{pages} {032215} (\bibinfo {year} {2020})}\BibitemShut {NoStop}%
	\bibitem [{\citenamefont {Zhu}\ \emph {et~al.}(2023)\citenamefont {Zhu},
		\citenamefont {Chen}, \citenamefont {Hasegawa},\ and\ \citenamefont
		{Xue}}]{Zhu2023}%
	\BibitemOpen
	\bibfield  {author} {\bibinfo {author} {\bibfnamefont {G.}~\bibnamefont
			{Zhu}}, \bibinfo {author} {\bibfnamefont {Y.}~\bibnamefont {Chen}}, \bibinfo
		{author} {\bibfnamefont {Y.}~\bibnamefont {Hasegawa}},\ and\ \bibinfo
		{author} {\bibfnamefont {P.}~\bibnamefont {Xue}},\ }\bibfield  {title}
	{\bibinfo {title} {Charging quantum batteries via indefinite causal order:
			Theory and experiment},\ }\href
	{https://doi.org/10.1103/PhysRevLett.131.240401} {\bibfield  {journal}
		{\bibinfo  {journal} {Phys. Rev. Lett.}\ }\textbf {\bibinfo {volume} {131}},\
		\bibinfo {pages} {240401} (\bibinfo {year} {2023})}\BibitemShut {NoStop}%
	\bibitem [{\citenamefont {Ebler}\ \emph {et~al.}(2018)\citenamefont {Ebler},
		\citenamefont {Salek},\ and\ \citenamefont {Chiribella}}]{Ebler2018}%
	\BibitemOpen
	\bibfield  {author} {\bibinfo {author} {\bibfnamefont {D.}~\bibnamefont
			{Ebler}}, \bibinfo {author} {\bibfnamefont {S.}~\bibnamefont {Salek}},\ and\
		\bibinfo {author} {\bibfnamefont {G.}~\bibnamefont {Chiribella}},\ }\bibfield
	{title} {\bibinfo {title} {Enhanced communication with the assistance of
			indefinite causal order},\ }\href
	{https://doi.org/10.1103/PhysRevLett.120.120502} {\bibfield  {journal}
		{\bibinfo  {journal} {Phys. Rev. Lett.}\ }\textbf {\bibinfo {volume} {120}},\
		\bibinfo {pages} {120502} (\bibinfo {year} {2018})}\BibitemShut {NoStop}%
	\bibitem [{\citenamefont {Chiribella}\ \emph {et~al.}(2021)\citenamefont
		{Chiribella}, \citenamefont {Banik}, \citenamefont {Bhattacharya},
		\citenamefont {Guha}, \citenamefont {Alimuddin}, \citenamefont {Roy},
		\citenamefont {Saha}, \citenamefont {Agrawal},\ and\ \citenamefont
		{Kar}}]{Chiribella2021}%
	\BibitemOpen
	\bibfield  {author} {\bibinfo {author} {\bibfnamefont {G.}~\bibnamefont
			{Chiribella}}, \bibinfo {author} {\bibfnamefont {M.}~\bibnamefont {Banik}},
		\bibinfo {author} {\bibfnamefont {S.~S.}\ \bibnamefont {Bhattacharya}},
		\bibinfo {author} {\bibfnamefont {T.}~\bibnamefont {Guha}}, \bibinfo {author}
		{\bibfnamefont {M.}~\bibnamefont {Alimuddin}}, \bibinfo {author}
		{\bibfnamefont {A.}~\bibnamefont {Roy}}, \bibinfo {author} {\bibfnamefont
			{S.}~\bibnamefont {Saha}}, \bibinfo {author} {\bibfnamefont {S.}~\bibnamefont
			{Agrawal}},\ and\ \bibinfo {author} {\bibfnamefont {G.}~\bibnamefont {Kar}},\
	}\bibfield  {title} {\bibinfo {title} {Indefinite causal order enables
			perfect quantum communication with zero capacity channels},\ }\href
	{https://doi.org/10.1088/1367-2630/abe7a0} {\bibfield  {journal} {\bibinfo
			{journal} {New J. Phys.}\ }\textbf {\bibinfo {volume} {23}},\ \bibinfo
		{pages} {033039} (\bibinfo {year} {2021})}\BibitemShut {NoStop}%
	\bibitem [{\citenamefont {Bhattacharya}\ \emph {et~al.}(2021)\citenamefont
		{Bhattacharya}, \citenamefont {Maity}, \citenamefont {Guha}, \citenamefont
		{Chiribella},\ and\ \citenamefont {Banik}}]{Bhattacharya2021}%
	\BibitemOpen
	\bibfield  {author} {\bibinfo {author} {\bibfnamefont {S.~S.}\ \bibnamefont
			{Bhattacharya}}, \bibinfo {author} {\bibfnamefont {A.~G.}\ \bibnamefont
			{Maity}}, \bibinfo {author} {\bibfnamefont {T.}~\bibnamefont {Guha}},
		\bibinfo {author} {\bibfnamefont {G.}~\bibnamefont {Chiribella}},\ and\
		\bibinfo {author} {\bibfnamefont {M.}~\bibnamefont {Banik}},\ }\bibfield
	{title} {\bibinfo {title} {Random-receiver quantum communication},\ }\href
	{https://doi.org/10.1103/PRXQuantum.2.020350} {\bibfield  {journal} {\bibinfo
			{journal} {PRX Quantum}\ }\textbf {\bibinfo {volume} {2}},\ \bibinfo {pages}
		{020350} (\bibinfo {year} {2021})}\BibitemShut {NoStop}%
	\bibitem [{\citenamefont {Zych}\ \emph {et~al.}(2019)\citenamefont {Zych},
		\citenamefont {Costa}, \citenamefont {Pikovski},\ and\ \citenamefont
		{Brukner}}]{zych2019}%
	\BibitemOpen
	\bibfield  {author} {\bibinfo {author} {\bibfnamefont {M.}~\bibnamefont
			{Zych}}, \bibinfo {author} {\bibfnamefont {F.}~\bibnamefont {Costa}},
		\bibinfo {author} {\bibfnamefont {I.}~\bibnamefont {Pikovski}},\ and\
		\bibinfo {author} {\bibfnamefont {{\v{C}}.}~\bibnamefont {Brukner}},\
	}\bibfield  {title} {\bibinfo {title} {Bell’s theorem for temporal order},\
	}\href {https://doi.org/10.1038/s41467-019-11579-x} {\bibfield  {journal}
		{\bibinfo  {journal} {Nature communications}\ }\textbf {\bibinfo {volume}
			{10}},\ \bibinfo {pages} {1} (\bibinfo {year} {2019})}\BibitemShut {NoStop}%
	\bibitem [{\citenamefont {Ghosal}\ \emph {et~al.}(2023)\citenamefont {Ghosal},
		\citenamefont {Ghosal}, \citenamefont {Das},\ and\ \citenamefont
		{Maity}}]{Ghosal2023}%
	\BibitemOpen
	\bibfield  {author} {\bibinfo {author} {\bibfnamefont {P.}~\bibnamefont
			{Ghosal}}, \bibinfo {author} {\bibfnamefont {A.}~\bibnamefont {Ghosal}},
		\bibinfo {author} {\bibfnamefont {D.}~\bibnamefont {Das}},\ and\ \bibinfo
		{author} {\bibfnamefont {A.~G.}\ \bibnamefont {Maity}},\ }\bibfield  {title}
	{\bibinfo {title} {Quantum superposition of causal structures as a universal
			resource for local implementation of nonlocal quantum operations},\ }\href
	{https://doi.org/10.1103/PhysRevA.107.022613} {\bibfield  {journal} {\bibinfo
			{journal} {Phys. Rev. A}\ }\textbf {\bibinfo {volume} {107}},\ \bibinfo
		{pages} {022613} (\bibinfo {year} {2023})}\BibitemShut {NoStop}%
	\bibitem [{\citenamefont {Procopio}\ \emph {et~al.}(2015)\citenamefont
		{Procopio}, \citenamefont {Moqanaki}, \citenamefont {Araújo}, \citenamefont
		{Costa}, \citenamefont {Alonso~Calafell}, \citenamefont {Dowd}, \citenamefont
		{Hamel}, \citenamefont {Rozema}, \citenamefont {Brukner},\ and\ \citenamefont
		{Walther}}]{Procopio2015}%
	\BibitemOpen
	\bibfield  {author} {\bibinfo {author} {\bibfnamefont {L.~M.}\ \bibnamefont
			{Procopio}}, \bibinfo {author} {\bibfnamefont {A.}~\bibnamefont {Moqanaki}},
		\bibinfo {author} {\bibfnamefont {M.}~\bibnamefont {Araújo}}, \bibinfo
		{author} {\bibfnamefont {F.}~\bibnamefont {Costa}}, \bibinfo {author}
		{\bibfnamefont {I.}~\bibnamefont {Alonso~Calafell}}, \bibinfo {author}
		{\bibfnamefont {E.~G.}\ \bibnamefont {Dowd}}, \bibinfo {author}
		{\bibfnamefont {D.~R.}\ \bibnamefont {Hamel}}, \bibinfo {author}
		{\bibfnamefont {L.~A.}\ \bibnamefont {Rozema}}, \bibinfo {author}
		{\bibfnamefont {C.}~\bibnamefont {Brukner}},\ and\ \bibinfo {author}
		{\bibfnamefont {P.}~\bibnamefont {Walther}},\ }\bibfield  {title} {\bibinfo
		{title} {Experimental superposition of orders of quantum gates},\ }\href
	{https://doi.org/10.1038/ncomms8913} {\bibfield  {journal} {\bibinfo
			{journal} {Nat. Commun.}\ }\textbf {\bibinfo {volume} {6}},\ \bibinfo {pages}
		{7913} (\bibinfo {year} {2015})}\BibitemShut {NoStop}%
	\bibitem [{\citenamefont {Rubino}\ \emph {et~al.}(2017)\citenamefont {Rubino},
		\citenamefont {Rozema}, \citenamefont {Feix}, \citenamefont {Araújo},
		\citenamefont {Zeuner}, \citenamefont {Procopio}, \citenamefont {Brukner},\
		and\ \citenamefont {Walther}}]{Rubino2017}%
	\BibitemOpen
	\bibfield  {author} {\bibinfo {author} {\bibfnamefont {G.}~\bibnamefont
			{Rubino}}, \bibinfo {author} {\bibfnamefont {L.~A.}\ \bibnamefont {Rozema}},
		\bibinfo {author} {\bibfnamefont {A.}~\bibnamefont {Feix}}, \bibinfo {author}
		{\bibfnamefont {M.}~\bibnamefont {Araújo}}, \bibinfo {author} {\bibfnamefont
			{J.~M.}\ \bibnamefont {Zeuner}}, \bibinfo {author} {\bibfnamefont {L.~M.}\
			\bibnamefont {Procopio}}, \bibinfo {author} {\bibfnamefont {C.}~\bibnamefont
			{Brukner}},\ and\ \bibinfo {author} {\bibfnamefont {P.}~\bibnamefont
			{Walther}},\ }\bibfield  {title} {\bibinfo {title} {Experimental verification
			of an indefinite causal order},\ }\href
	{https://doi.org/10.1126/sciadv.1602589} {\bibfield  {journal} {\bibinfo
			{journal} {Science Advances}\ }\textbf {\bibinfo {volume} {3}},\ \bibinfo
		{pages} {e1602589} (\bibinfo {year} {2017})}\BibitemShut {NoStop}%
	\bibitem [{\citenamefont {Goswami}\ \emph {et~al.}(2018)\citenamefont
		{Goswami}, \citenamefont {Giarmatzi}, \citenamefont {Kewming}, \citenamefont
		{Costa}, \citenamefont {Branciard}, \citenamefont {Romero},\ and\
		\citenamefont {White}}]{Goswami2018}%
	\BibitemOpen
	\bibfield  {author} {\bibinfo {author} {\bibfnamefont {K.}~\bibnamefont
			{Goswami}}, \bibinfo {author} {\bibfnamefont {C.}~\bibnamefont {Giarmatzi}},
		\bibinfo {author} {\bibfnamefont {M.}~\bibnamefont {Kewming}}, \bibinfo
		{author} {\bibfnamefont {F.}~\bibnamefont {Costa}}, \bibinfo {author}
		{\bibfnamefont {C.}~\bibnamefont {Branciard}}, \bibinfo {author}
		{\bibfnamefont {J.}~\bibnamefont {Romero}},\ and\ \bibinfo {author}
		{\bibfnamefont {A.~G.}\ \bibnamefont {White}},\ }\bibfield  {title} {\bibinfo
		{title} {Indefinite causal order in a quantum switch},\ }\href
	{https://doi.org/10.1103/PhysRevLett.121.090503} {\bibfield  {journal}
		{\bibinfo  {journal} {Phys. Rev. Lett.}\ }\textbf {\bibinfo {volume} {121}},\
		\bibinfo {pages} {090503} (\bibinfo {year} {2018})}\BibitemShut {NoStop}%
	\bibitem [{\citenamefont {Wei~{\it et al}.}(2019)}]{Wei2019}%
	\BibitemOpen
	\bibfield  {author} {\bibinfo {author} {\bibfnamefont {K.}~\bibnamefont
			{Wei~{\it et al}.}},\ }\bibfield  {title} {\bibinfo {title} {Experimental
			quantum switching for exponentially superior quantum communication
			complexity},\ }\href {https://doi.org/10.1103/PhysRevLett.122.120504}
	{\bibfield  {journal} {\bibinfo  {journal} {Phys. Rev. Lett.}\ }\textbf
		{\bibinfo {volume} {122}},\ \bibinfo {pages} {120504} (\bibinfo {year}
		{2019})}\BibitemShut {NoStop}%
	\bibitem [{\citenamefont {Guo}\ \emph {et~al.}(2020)\citenamefont {Guo},
		\citenamefont {Hu}, \citenamefont {Hou}, \citenamefont {Cao}, \citenamefont
		{Cui}, \citenamefont {Liu}, \citenamefont {Huang}, \citenamefont {Li},
		\citenamefont {Guo},\ and\ \citenamefont {Chiribella}}]{Guo2020}%
	\BibitemOpen
	\bibfield  {author} {\bibinfo {author} {\bibfnamefont {Y.}~\bibnamefont
			{Guo}}, \bibinfo {author} {\bibfnamefont {X.-M.}\ \bibnamefont {Hu}},
		\bibinfo {author} {\bibfnamefont {Z.-B.}\ \bibnamefont {Hou}}, \bibinfo
		{author} {\bibfnamefont {H.}~\bibnamefont {Cao}}, \bibinfo {author}
		{\bibfnamefont {J.-M.}\ \bibnamefont {Cui}}, \bibinfo {author} {\bibfnamefont
			{B.-H.}\ \bibnamefont {Liu}}, \bibinfo {author} {\bibfnamefont {Y.-F.}\
			\bibnamefont {Huang}}, \bibinfo {author} {\bibfnamefont {C.-F.}\ \bibnamefont
			{Li}}, \bibinfo {author} {\bibfnamefont {G.-C.}\ \bibnamefont {Guo}},\ and\
		\bibinfo {author} {\bibfnamefont {G.}~\bibnamefont {Chiribella}},\ }\bibfield
	{title} {\bibinfo {title} {Experimental transmission of quantum information
			using a superposition of causal orders},\ }\href
	{https://doi.org/10.1103/PhysRevLett.124.030502} {\bibfield  {journal}
		{\bibinfo  {journal} {Phys. Rev. Lett.}\ }\textbf {\bibinfo {volume} {124}},\
		\bibinfo {pages} {030502} (\bibinfo {year} {2020})}\BibitemShut {NoStop}%
	\bibitem [{\citenamefont {Taddei}\ \emph {et~al.}(2021)\citenamefont {Taddei},
		\citenamefont {Cari\~ne}, \citenamefont {Mart\'{\i}nez}, \citenamefont
		{Garc\'{\i}a}, \citenamefont {Guerrero}, \citenamefont {Abbott},
		\citenamefont {Ara\'ujo}, \citenamefont {Branciard}, \citenamefont {G\'omez},
		\citenamefont {Walborn}, \citenamefont {Aolita},\ and\ \citenamefont
		{Lima}}]{Taddei2021}%
	\BibitemOpen
	\bibfield  {author} {\bibinfo {author} {\bibfnamefont {M.~M.}\ \bibnamefont
			{Taddei}}, \bibinfo {author} {\bibfnamefont {J.}~\bibnamefont {Cari\~ne}},
		\bibinfo {author} {\bibfnamefont {D.}~\bibnamefont {Mart\'{\i}nez}}, \bibinfo
		{author} {\bibfnamefont {T.}~\bibnamefont {Garc\'{\i}a}}, \bibinfo {author}
		{\bibfnamefont {N.}~\bibnamefont {Guerrero}}, \bibinfo {author}
		{\bibfnamefont {A.~A.}\ \bibnamefont {Abbott}}, \bibinfo {author}
		{\bibfnamefont {M.}~\bibnamefont {Ara\'ujo}}, \bibinfo {author}
		{\bibfnamefont {C.}~\bibnamefont {Branciard}}, \bibinfo {author}
		{\bibfnamefont {E.~S.}\ \bibnamefont {G\'omez}}, \bibinfo {author}
		{\bibfnamefont {S.~P.}\ \bibnamefont {Walborn}}, \bibinfo {author}
		{\bibfnamefont {L.}~\bibnamefont {Aolita}},\ and\ \bibinfo {author}
		{\bibfnamefont {G.}~\bibnamefont {Lima}},\ }\bibfield  {title} {\bibinfo
		{title} {Computational advantage from the quantum superposition of multiple
			temporal orders of photonic gates},\ }\href
	{https://doi.org/10.1103/PRXQuantum.2.010320} {\bibfield  {journal} {\bibinfo
			{journal} {PRX Quantum}\ }\textbf {\bibinfo {volume} {2}},\ \bibinfo {pages}
		{010320} (\bibinfo {year} {2021})}\BibitemShut {NoStop}%
	\bibitem [{\citenamefont {van~der Lugt}\ \emph {et~al.}(2023)\citenamefont
		{van~der Lugt}, \citenamefont {Barrett},\ and\ \citenamefont
		{Chiribella}}]{vanderLugt2023}%
	\BibitemOpen
	\bibfield  {author} {\bibinfo {author} {\bibfnamefont {T.}~\bibnamefont
			{van~der Lugt}}, \bibinfo {author} {\bibfnamefont {J.}~\bibnamefont
			{Barrett}},\ and\ \bibinfo {author} {\bibfnamefont {G.}~\bibnamefont
			{Chiribella}},\ }\bibfield  {title} {\bibinfo {title} {Device-independent
			certification of indefinite causal order in the quantum switch},\ }\href
	{https://doi.org/10.1038/s41467-023-40162-8} {\bibfield  {journal} {\bibinfo
			{journal} {Nat. Commun.}\ }\textbf {\bibinfo {volume} {14}},\ \bibinfo
		{pages} {5811} (\bibinfo {year} {2023})}\BibitemShut {NoStop}%
	\bibitem [{\citenamefont {Badzia\ifmmode~\mbox{\c{}}\else \c{}\fi{}g}\ \emph
		{et~al.}(2003)\citenamefont {Badzia\ifmmode~\mbox{\c{}}\else \c{}\fi{}g},
		\citenamefont {Horodecki}, \citenamefont {Sen(De)},\ and\ \citenamefont
		{Sen}}]{Badziag2003}%
	\BibitemOpen
	\bibfield  {author} {\bibinfo {author} {\bibfnamefont {P.}~\bibnamefont
			{Badzia\ifmmode~\mbox{\c{}}\else \c{}\fi{}g}}, \bibinfo {author}
		{\bibfnamefont {M.}~\bibnamefont {Horodecki}}, \bibinfo {author}
		{\bibfnamefont {A.}~\bibnamefont {Sen(De)}},\ and\ \bibinfo {author}
		{\bibfnamefont {U.}~\bibnamefont {Sen}},\ }\bibfield  {title} {\bibinfo
		{title} {Locally accessible information: How much can the parties gain by
			cooperating?},\ }\href {https://doi.org/10.1103/PhysRevLett.91.117901}
	{\bibfield  {journal} {\bibinfo  {journal} {Phys. Rev. Lett.}\ }\textbf
		{\bibinfo {volume} {91}},\ \bibinfo {pages} {117901} (\bibinfo {year}
		{2003})}\BibitemShut {NoStop}%
	\bibitem [{\citenamefont {Branciard}\ \emph {et~al.}(2015)\citenamefont
		{Branciard}, \citenamefont {Araújo}, \citenamefont {Feix}, \citenamefont
		{Costa},\ and\ \citenamefont {Brukner}}]{Branciard2015}%
	\BibitemOpen
	\bibfield  {author} {\bibinfo {author} {\bibfnamefont {C.}~\bibnamefont
			{Branciard}}, \bibinfo {author} {\bibfnamefont {M.}~\bibnamefont {Araújo}},
		\bibinfo {author} {\bibfnamefont {A.}~\bibnamefont {Feix}}, \bibinfo {author}
		{\bibfnamefont {F.}~\bibnamefont {Costa}},\ and\ \bibinfo {author}
		{\bibfnamefont {C.}~\bibnamefont {Brukner}},\ }\bibfield  {title} {\bibinfo
		{title} {The simplest causal inequalities and their violation},\ }\href
	{https://doi.org/10.1088/1367-2630/18/1/013008} {\bibfield  {journal}
		{\bibinfo  {journal} {New J. Phys.}\ }\textbf {\bibinfo {volume} {18}},\
		\bibinfo {pages} {013008} (\bibinfo {year} {2015})}\BibitemShut {NoStop}%
	\bibitem [{\citenamefont {Liu}\ and\ \citenamefont
		{Chiribella}(2024)}]{Liu2024}%
	\BibitemOpen
	\bibfield  {author} {\bibinfo {author} {\bibfnamefont {Z.}~\bibnamefont
			{Liu}}\ and\ \bibinfo {author} {\bibfnamefont {G.}~\bibnamefont
			{Chiribella}},\ }\href {https://arxiv.org/abs/2403.02749} {\bibinfo {title}
		{Tsirelson bounds for quantum correlations with indefinite causal order}}
	(\bibinfo {year} {2024}),\ \Eprint {https://arxiv.org/abs/2403.02749}
	{arXiv:2403.02749 [quant-ph]} \BibitemShut {NoStop}%
	\bibitem [{\citenamefont {Peres}(1996)}]{Peres1996}%
	\BibitemOpen
	\bibfield  {author} {\bibinfo {author} {\bibfnamefont {A.}~\bibnamefont
			{Peres}},\ }\bibfield  {title} {\bibinfo {title} {Separability criterion for
			density matrices},\ }\href {https://doi.org/10.1103/PhysRevLett.77.1413}
	{\bibfield  {journal} {\bibinfo  {journal} {Phys. Rev. Lett.}\ }\textbf
		{\bibinfo {volume} {77}},\ \bibinfo {pages} {1413} (\bibinfo {year}
		{1996})}\BibitemShut {NoStop}%
	\bibitem [{\citenamefont {Horodecki}\ \emph {et~al.}(1996)\citenamefont
		{Horodecki}, \citenamefont {Horodecki},\ and\ \citenamefont
		{Horodecki}}]{Horodecki1996}%
	\BibitemOpen
	\bibfield  {author} {\bibinfo {author} {\bibfnamefont {M.}~\bibnamefont
			{Horodecki}}, \bibinfo {author} {\bibfnamefont {P.}~\bibnamefont
			{Horodecki}},\ and\ \bibinfo {author} {\bibfnamefont {R.}~\bibnamefont
			{Horodecki}},\ }\bibfield  {title} {\bibinfo {title} {Separability of mixed
			states: necessary and sufficient conditions},\ }\href
	{https://doi.org/https://doi.org/10.1016/S0375-9601(96)00706-2} {\bibfield
		{journal} {\bibinfo  {journal} {Phys. Lett. A}\ }\textbf {\bibinfo {volume}
			{223}},\ \bibinfo {pages} {1} (\bibinfo {year} {1996})}\BibitemShut {NoStop}%
	\bibitem [{\citenamefont {Smith}\ and\ \citenamefont {Yard}(2008)}]{Smith2008}%
	\BibitemOpen
	\bibfield  {author} {\bibinfo {author} {\bibfnamefont {G.}~\bibnamefont
			{Smith}}\ and\ \bibinfo {author} {\bibfnamefont {J.}~\bibnamefont {Yard}},\
	}\bibfield  {title} {\bibinfo {title} {Quantum communication with
			zero-capacity channels},\ }\href {https://doi.org/10.1126/science.1162242}
	{\bibfield  {journal} {\bibinfo  {journal} {Science}\ }\textbf {\bibinfo
			{volume} {321}},\ \bibinfo {pages} {1812–1815} (\bibinfo {year}
		{2008})}\BibitemShut {NoStop}%
	\bibitem [{\citenamefont {Leditzky}\ \emph {et~al.}(2018)\citenamefont
		{Leditzky}, \citenamefont {Leung},\ and\ \citenamefont
		{Smith}}]{Leditzky2018}%
	\BibitemOpen
	\bibfield  {author} {\bibinfo {author} {\bibfnamefont {F.}~\bibnamefont
			{Leditzky}}, \bibinfo {author} {\bibfnamefont {D.}~\bibnamefont {Leung}},\
		and\ \bibinfo {author} {\bibfnamefont {G.}~\bibnamefont {Smith}},\ }\bibfield
	{title} {\bibinfo {title} {Dephrasure channel and superadditivity of
			coherent information},\ }\href
	{https://doi.org/10.1103/PhysRevLett.121.160501} {\bibfield  {journal}
		{\bibinfo  {journal} {Phys. Rev. Lett.}\ }\textbf {\bibinfo {volume} {121}},\
		\bibinfo {pages} {160501} (\bibinfo {year} {2018})}\BibitemShut {NoStop}%
	\bibitem [{\citenamefont {Yu}\ \emph {et~al.}(2020)\citenamefont {Yu},
		\citenamefont {Meng}, \citenamefont {Patel}, \citenamefont {Wang},
		\citenamefont {Ke}, \citenamefont {Liu}, \citenamefont {Li}, \citenamefont
		{Yang}, \citenamefont {Zhang}, \citenamefont {Tang}, \citenamefont {Li},\
		and\ \citenamefont {Guo}}]{Yu2020}%
	\BibitemOpen
	\bibfield  {author} {\bibinfo {author} {\bibfnamefont {S.}~\bibnamefont
			{Yu}}, \bibinfo {author} {\bibfnamefont {Y.}~\bibnamefont {Meng}}, \bibinfo
		{author} {\bibfnamefont {R.~B.}\ \bibnamefont {Patel}}, \bibinfo {author}
		{\bibfnamefont {Y.-T.}\ \bibnamefont {Wang}}, \bibinfo {author}
		{\bibfnamefont {Z.-J.}\ \bibnamefont {Ke}}, \bibinfo {author} {\bibfnamefont
			{W.}~\bibnamefont {Liu}}, \bibinfo {author} {\bibfnamefont {Z.-P.}\
			\bibnamefont {Li}}, \bibinfo {author} {\bibfnamefont {Y.-Z.}\ \bibnamefont
			{Yang}}, \bibinfo {author} {\bibfnamefont {W.-H.}\ \bibnamefont {Zhang}},
		\bibinfo {author} {\bibfnamefont {J.-S.}\ \bibnamefont {Tang}}, \bibinfo
		{author} {\bibfnamefont {C.-F.}\ \bibnamefont {Li}},\ and\ \bibinfo {author}
		{\bibfnamefont {G.-C.}\ \bibnamefont {Guo}},\ }\bibfield  {title} {\bibinfo
		{title} {Experimental observation of coherent-information superadditivity in
			a dephrasure channel},\ }\href
	{https://doi.org/10.1103/PhysRevLett.125.060502} {\bibfield  {journal}
		{\bibinfo  {journal} {Phys. Rev. Lett.}\ }\textbf {\bibinfo {volume} {125}},\
		\bibinfo {pages} {060502} (\bibinfo {year} {2020})}\BibitemShut {NoStop}%
	\bibitem [{\citenamefont {Sidhardh}\ \emph {et~al.}(2022)\citenamefont
		{Sidhardh}, \citenamefont {Alimuddin},\ and\ \citenamefont
		{Banik}}]{Sidhardh2022}%
	\BibitemOpen
	\bibfield  {author} {\bibinfo {author} {\bibfnamefont {G.~L.}\ \bibnamefont
			{Sidhardh}}, \bibinfo {author} {\bibfnamefont {M.}~\bibnamefont
			{Alimuddin}},\ and\ \bibinfo {author} {\bibfnamefont {M.}~\bibnamefont
			{Banik}},\ }\bibfield  {title} {\bibinfo {title} {Exploring superadditivity
			of coherent information of noisy quantum channels through genetic
			algorithms},\ }\href {https://doi.org/10.1103/PhysRevA.106.012432} {\bibfield
		{journal} {\bibinfo  {journal} {Phys. Rev. A}\ }\textbf {\bibinfo {volume}
			{106}},\ \bibinfo {pages} {012432} (\bibinfo {year} {2022})}\BibitemShut
	{NoStop}%
	\bibitem [{\citenamefont {Li}\ \emph {et~al.}(2009)\citenamefont {Li},
		\citenamefont {Winter}, \citenamefont {Zou},\ and\ \citenamefont
		{Guo}}]{Li2009}%
	\BibitemOpen
	\bibfield  {author} {\bibinfo {author} {\bibfnamefont {K.}~\bibnamefont
			{Li}}, \bibinfo {author} {\bibfnamefont {A.}~\bibnamefont {Winter}}, \bibinfo
		{author} {\bibfnamefont {X.}~\bibnamefont {Zou}},\ and\ \bibinfo {author}
		{\bibfnamefont {G.}~\bibnamefont {Guo}},\ }\bibfield  {title} {\bibinfo
		{title} {Private capacity of quantum channels is not additive},\ }\href
	{https://doi.org/10.1103/PhysRevLett.103.120501} {\bibfield  {journal}
		{\bibinfo  {journal} {Phys. Rev. Lett.}\ }\textbf {\bibinfo {volume} {103}},\
		\bibinfo {pages} {120501} (\bibinfo {year} {2009})}\BibitemShut {NoStop}%
	\bibitem [{\citenamefont {Chen}\ \emph {et~al.}(2010)\citenamefont {Chen},
		\citenamefont {Cubitt}, \citenamefont {Harrow},\ and\ \citenamefont
		{Smith}}]{Chen2010}%
	\BibitemOpen
	\bibfield  {author} {\bibinfo {author} {\bibfnamefont {J.}~\bibnamefont
			{Chen}}, \bibinfo {author} {\bibfnamefont {T.~S.}\ \bibnamefont {Cubitt}},
		\bibinfo {author} {\bibfnamefont {A.~W.}\ \bibnamefont {Harrow}},\ and\
		\bibinfo {author} {\bibfnamefont {G.}~\bibnamefont {Smith}},\ }\bibfield
	{title} {\bibinfo {title} {Super-duper-activation of the zero-error quantum
			capacity},\ }in\ \href {https://doi.org/10.1109/isit.2010.5513780} {\emph
		{\bibinfo {booktitle} {2010 IEEE International Symposium on Information
				Theory}}}\ (\bibinfo  {publisher} {IEEE},\ \bibinfo {year}
	{2010})\BibitemShut {NoStop}%
	\bibitem [{\citenamefont {Oreshkov}\ and\ \citenamefont
		{Giarmatzi}(2016)}]{Oreshkov2016}%
	\BibitemOpen
	\bibfield  {author} {\bibinfo {author} {\bibfnamefont {O.}~\bibnamefont
			{Oreshkov}}\ and\ \bibinfo {author} {\bibfnamefont {C.}~\bibnamefont
			{Giarmatzi}},\ }\bibfield  {title} {\bibinfo {title} {Causal and causally
			separable processes},\ }\href {https://doi.org/10.1088/1367-2630/18/9/093020}
	{\bibfield  {journal} {\bibinfo  {journal} {New J. Phys.}\ }\textbf {\bibinfo
			{volume} {18}},\ \bibinfo {pages} {093020} (\bibinfo {year}
		{2016})}\BibitemShut {NoStop}%
	\bibitem [{\citenamefont {Feix}\ \emph {et~al.}(2016)\citenamefont {Feix},
		\citenamefont {Araújo},\ and\ \citenamefont {Brukner}}]{Feix2016}%
	\BibitemOpen
	\bibfield  {author} {\bibinfo {author} {\bibfnamefont {A.}~\bibnamefont
			{Feix}}, \bibinfo {author} {\bibfnamefont {M.}~\bibnamefont {Araújo}},\ and\
		\bibinfo {author} {\bibfnamefont {C.}~\bibnamefont {Brukner}},\ }\bibfield
	{title} {\bibinfo {title} {Causally nonseparable processes admitting a causal
			model},\ }\href {https://doi.org/10.1088/1367-2630/18/8/083040} {\bibfield
		{journal} {\bibinfo  {journal} {New J. Phys.}\ }\textbf {\bibinfo {volume}
			{18}},\ \bibinfo {pages} {083040} (\bibinfo {year} {2016})}\BibitemShut
	{NoStop}%
	\bibitem [{\citenamefont {Wechs}\ \emph {et~al.}(2021)\citenamefont {Wechs},
		\citenamefont {Dourdent}, \citenamefont {Abbott},\ and\ \citenamefont
		{Branciard}}]{Wechs2021}%
	\BibitemOpen
	\bibfield  {author} {\bibinfo {author} {\bibfnamefont {J.}~\bibnamefont
			{Wechs}}, \bibinfo {author} {\bibfnamefont {H.}~\bibnamefont {Dourdent}},
		\bibinfo {author} {\bibfnamefont {A.~A.}\ \bibnamefont {Abbott}},\ and\
		\bibinfo {author} {\bibfnamefont {C.}~\bibnamefont {Branciard}},\ }\bibfield
	{title} {\bibinfo {title} {Quantum circuits with classical versus quantum
			control of causal order},\ }\href
	{https://doi.org/10.1103/PRXQuantum.2.030335} {\bibfield  {journal} {\bibinfo
			{journal} {PRX Quantum}\ }\textbf {\bibinfo {volume} {2}},\ \bibinfo {pages}
		{030335} (\bibinfo {year} {2021})}\BibitemShut {NoStop}%
	\bibitem [{\citenamefont {Jamiołkowski}(1972)}]{Jamiokowski1972}%
	\BibitemOpen
	\bibfield  {author} {\bibinfo {author} {\bibfnamefont {A.}~\bibnamefont
			{Jamiołkowski}},\ }\bibfield  {title} {\bibinfo {title} {Linear
			transformations which preserve trace and positive semidefiniteness of
			operators},\ }\href {https://doi.org/10.1016/0034-4877(72)90011-0} {\bibfield
		{journal} {\bibinfo  {journal} {Rep. Math. Phys.}\ }\textbf {\bibinfo
			{volume} {3}},\ \bibinfo {pages} {275–278} (\bibinfo {year}
		{1972})}\BibitemShut {NoStop}%
	\bibitem [{\citenamefont {Choi}(1975)}]{Choi1975}%
	\BibitemOpen
	\bibfield  {author} {\bibinfo {author} {\bibfnamefont {M.-D.}\ \bibnamefont
			{Choi}},\ }\bibfield  {title} {\bibinfo {title} {Completely positive linear
			maps on complex matrices},\ }\href
	{https://doi.org/10.1016/0024-3795(75)90075-0} {\bibfield  {journal}
		{\bibinfo  {journal} {Linear Algebra Its Appl.}\ }\textbf {\bibinfo {volume}
			{10}},\ \bibinfo {pages} {285–290} (\bibinfo {year} {1975})}\BibitemShut
	{NoStop}%
	\bibitem [{\citenamefont {Wallach}(2000)}]{Wallach2000}%
	\BibitemOpen
	\bibfield  {author} {\bibinfo {author} {\bibfnamefont {N.~R.}\ \bibnamefont
			{Wallach}},\ }\href {https://arxiv.org/abs/quant-ph/0002058} {\bibinfo
		{title} {An unentangled gleason's theorem}} (\bibinfo {year} {2000}),\
	\Eprint {https://arxiv.org/abs/quant-ph/0002058} {arXiv:quant-ph/0002058
		[quant-ph]} \BibitemShut {NoStop}%
	\bibitem [{\citenamefont {Rudolph}\ and\ \citenamefont
		{Wright}(2000)}]{Rudolph2000}%
	\BibitemOpen
	\bibfield  {author} {\bibinfo {author} {\bibfnamefont {O.}~\bibnamefont
			{Rudolph}}\ and\ \bibinfo {author} {\bibfnamefont {J.~D.~M.}\ \bibnamefont
			{Wright}},\ }\href {https://doi.org/10.1023/a:1007696209271} {\bibfield
		{journal} {\bibinfo  {journal} {Lett. Math. Phys.}\ }\textbf {\bibinfo
			{volume} {52}},\ \bibinfo {pages} {239–245} (\bibinfo {year}
		{2000})}\BibitemShut {NoStop}%
	\bibitem [{\citenamefont {Caves}\ \emph {et~al.}(2004)\citenamefont {Caves},
		\citenamefont {Fuchs}, \citenamefont {Manne},\ and\ \citenamefont
		{Renes}}]{Caves2004}%
	\BibitemOpen
	\bibfield  {author} {\bibinfo {author} {\bibfnamefont {C.~M.}\ \bibnamefont
			{Caves}}, \bibinfo {author} {\bibfnamefont {C.~A.}\ \bibnamefont {Fuchs}},
		\bibinfo {author} {\bibfnamefont {K.~K.}\ \bibnamefont {Manne}},\ and\
		\bibinfo {author} {\bibfnamefont {J.~M.}\ \bibnamefont {Renes}},\ }\bibfield
	{title} {\bibinfo {title} {Gleason-type derivations of the quantum
			probability rule for generalized measurements},\ }\href
	{https://doi.org/10.1023/b:foop.0000019581.00318.a5} {\bibfield  {journal}
		{\bibinfo  {journal} {Found. Phys.}\ }\textbf {\bibinfo {volume} {34}},\
		\bibinfo {pages} {193–209} (\bibinfo {year} {2004})}\BibitemShut {NoStop}%
	\bibitem [{\citenamefont {Barnum}\ \emph {et~al.}(2010)\citenamefont {Barnum},
		\citenamefont {Beigi}, \citenamefont {Boixo}, \citenamefont {Elliott},\ and\
		\citenamefont {Wehner}}]{Barnum2010}%
	\BibitemOpen
	\bibfield  {author} {\bibinfo {author} {\bibfnamefont {H.}~\bibnamefont
			{Barnum}}, \bibinfo {author} {\bibfnamefont {S.}~\bibnamefont {Beigi}},
		\bibinfo {author} {\bibfnamefont {S.}~\bibnamefont {Boixo}}, \bibinfo
		{author} {\bibfnamefont {M.~B.}\ \bibnamefont {Elliott}},\ and\ \bibinfo
		{author} {\bibfnamefont {S.}~\bibnamefont {Wehner}},\ }\bibfield  {title}
	{\bibinfo {title} {Local quantum measurement and no-signaling imply quantum
			correlations},\ }\href {https://doi.org/10.1103/PhysRevLett.104.140401}
	{\bibfield  {journal} {\bibinfo  {journal} {Phys. Rev. Lett.}\ }\textbf
		{\bibinfo {volume} {104}},\ \bibinfo {pages} {140401} (\bibinfo {year}
		{2010})}\BibitemShut {NoStop}%
	\bibitem [{\citenamefont {Naik}\ \emph {et~al.}(2022)\citenamefont {Naik},
		\citenamefont {Lobo}, \citenamefont {Sen}, \citenamefont {Patra},
		\citenamefont {Alimuddin}, \citenamefont {Guha}, \citenamefont
		{Bhattacharya},\ and\ \citenamefont {Banik}}]{Naik2022}%
	\BibitemOpen
	\bibfield  {author} {\bibinfo {author} {\bibfnamefont {S.~G.}\ \bibnamefont
			{Naik}}, \bibinfo {author} {\bibfnamefont {E.~P.}\ \bibnamefont {Lobo}},
		\bibinfo {author} {\bibfnamefont {S.}~\bibnamefont {Sen}}, \bibinfo {author}
		{\bibfnamefont {R.~K.}\ \bibnamefont {Patra}}, \bibinfo {author}
		{\bibfnamefont {M.}~\bibnamefont {Alimuddin}}, \bibinfo {author}
		{\bibfnamefont {T.}~\bibnamefont {Guha}}, \bibinfo {author} {\bibfnamefont
			{S.~S.}\ \bibnamefont {Bhattacharya}},\ and\ \bibinfo {author} {\bibfnamefont
			{M.}~\bibnamefont {Banik}},\ }\bibfield  {title} {\bibinfo {title}
		{Composition of multipartite quantum systems: Perspective from timelike
			paradigm},\ }\href {https://doi.org/10.1103/PhysRevLett.128.140401}
	{\bibfield  {journal} {\bibinfo  {journal} {Phys. Rev. Lett.}\ }\textbf
		{\bibinfo {volume} {128}},\ \bibinfo {pages} {140401} (\bibinfo {year}
		{2022})}\BibitemShut {NoStop}%
	\bibitem [{\citenamefont {Lobo}\ \emph {et~al.}(2022)\citenamefont {Lobo},
		\citenamefont {Naik}, \citenamefont {Sen}, \citenamefont {Patra},
		\citenamefont {Banik},\ and\ \citenamefont {Alimuddin}}]{Lobo2022}%
	\BibitemOpen
	\bibfield  {author} {\bibinfo {author} {\bibfnamefont {E.~P.}\ \bibnamefont
			{Lobo}}, \bibinfo {author} {\bibfnamefont {S.~G.}\ \bibnamefont {Naik}},
		\bibinfo {author} {\bibfnamefont {S.}~\bibnamefont {Sen}}, \bibinfo {author}
		{\bibfnamefont {R.~K.}\ \bibnamefont {Patra}}, \bibinfo {author}
		{\bibfnamefont {M.}~\bibnamefont {Banik}},\ and\ \bibinfo {author}
		{\bibfnamefont {M.}~\bibnamefont {Alimuddin}},\ }\bibfield  {title} {\bibinfo
		{title} {Certifying beyond quantumness of locally quantum no-signaling
			theories through a quantum-input bell test},\ }\href
	{https://doi.org/10.1103/PhysRevA.106.L040201} {\bibfield  {journal}
		{\bibinfo  {journal} {Phys. Rev. A}\ }\textbf {\bibinfo {volume} {106}},\
		\bibinfo {pages} {L040201} (\bibinfo {year} {2022})}\BibitemShut {NoStop}%
	\bibitem [{\citenamefont {Patra}\ \emph {et~al.}(2023)\citenamefont {Patra},
		\citenamefont {Naik}, \citenamefont {Lobo}, \citenamefont {Sen},
		\citenamefont {Sidhardh}, \citenamefont {Alimuddin},\ and\ \citenamefont
		{Banik}}]{Patra2023}%
	\BibitemOpen
	\bibfield  {author} {\bibinfo {author} {\bibfnamefont {R.~K.}\ \bibnamefont
			{Patra}}, \bibinfo {author} {\bibfnamefont {S.~G.}\ \bibnamefont {Naik}},
		\bibinfo {author} {\bibfnamefont {E.~P.}\ \bibnamefont {Lobo}}, \bibinfo
		{author} {\bibfnamefont {S.}~\bibnamefont {Sen}}, \bibinfo {author}
		{\bibfnamefont {G.~L.}\ \bibnamefont {Sidhardh}}, \bibinfo {author}
		{\bibfnamefont {M.}~\bibnamefont {Alimuddin}},\ and\ \bibinfo {author}
		{\bibfnamefont {M.}~\bibnamefont {Banik}},\ }\bibfield  {title} {\bibinfo
		{title} {Principle of information causality rationalizes quantum
			composition},\ }\href {https://doi.org/10.1103/PhysRevLett.130.110202}
	{\bibfield  {journal} {\bibinfo  {journal} {Phys. Rev. Lett.}\ }\textbf
		{\bibinfo {volume} {130}},\ \bibinfo {pages} {110202} (\bibinfo {year}
		{2023})}\BibitemShut {NoStop}%
	\bibitem [{\citenamefont {Barnum}\ \emph {et~al.}(2005)\citenamefont {Barnum},
		\citenamefont {Fuchs}, \citenamefont {Renes},\ and\ \citenamefont
		{Wilce}}]{Barnum2005}%
	\BibitemOpen
	\bibfield  {author} {\bibinfo {author} {\bibfnamefont {H.}~\bibnamefont
			{Barnum}}, \bibinfo {author} {\bibfnamefont {C.~A.}\ \bibnamefont {Fuchs}},
		\bibinfo {author} {\bibfnamefont {J.~M.}\ \bibnamefont {Renes}},\ and\
		\bibinfo {author} {\bibfnamefont {A.}~\bibnamefont {Wilce}},\ }\href
	{https://arxiv.org/abs/quant-ph/0507108} {\bibinfo {title} {Influence-free
			states on compound quantum systems}} (\bibinfo {year} {2005}),\ \Eprint
	{https://arxiv.org/abs/quant-ph/0507108} {arXiv:quant-ph/0507108 [quant-ph]}
	\BibitemShut {NoStop}%
	\bibitem [{\citenamefont {Araújo}\ \emph {et~al.}(2015)\citenamefont
		{Araújo}, \citenamefont {Branciard}, \citenamefont {Costa}, \citenamefont
		{Feix}, \citenamefont {Giarmatzi},\ and\ \citenamefont
		{Brukner}}]{Arajo2015}%
	\BibitemOpen
	\bibfield  {author} {\bibinfo {author} {\bibfnamefont {M.}~\bibnamefont
			{Araújo}}, \bibinfo {author} {\bibfnamefont {C.}~\bibnamefont {Branciard}},
		\bibinfo {author} {\bibfnamefont {F.}~\bibnamefont {Costa}}, \bibinfo
		{author} {\bibfnamefont {A.}~\bibnamefont {Feix}}, \bibinfo {author}
		{\bibfnamefont {C.}~\bibnamefont {Giarmatzi}},\ and\ \bibinfo {author}
		{\bibfnamefont {C.}~\bibnamefont {Brukner}},\ }\bibfield  {title} {\bibinfo
		{title} {Witnessing causal nonseparability},\ }\href
	{https://doi.org/10.1088/1367-2630/17/10/102001} {\bibfield  {journal}
		{\bibinfo  {journal} {New J. Phys.}\ }\textbf {\bibinfo {volume} {17}},\
		\bibinfo {pages} {102001} (\bibinfo {year} {2015})}\BibitemShut {NoStop}%
	\bibitem [{\citenamefont {Horodecki}\ \emph {et~al.}(2009)\citenamefont
		{Horodecki}, \citenamefont {Horodecki}, \citenamefont {Horodecki},\ and\
		\citenamefont {Horodecki}}]{Horodecki2009}%
	\BibitemOpen
	\bibfield  {author} {\bibinfo {author} {\bibfnamefont {R.}~\bibnamefont
			{Horodecki}}, \bibinfo {author} {\bibfnamefont {P.}~\bibnamefont
			{Horodecki}}, \bibinfo {author} {\bibfnamefont {M.}~\bibnamefont
			{Horodecki}},\ and\ \bibinfo {author} {\bibfnamefont {K.}~\bibnamefont
			{Horodecki}},\ }\bibfield  {title} {\bibinfo {title} {Quantum entanglement},\
	}\href {https://doi.org/10.1103/RevModPhys.81.865} {\bibfield  {journal}
		{\bibinfo  {journal} {Rev. Mod. Phys.}\ }\textbf {\bibinfo {volume} {81}},\
		\bibinfo {pages} {865} (\bibinfo {year} {2009})}\BibitemShut {NoStop}%
	\bibitem [{\citenamefont {Chitambar}\ \emph {et~al.}(2014)\citenamefont
		{Chitambar}, \citenamefont {Leung}, \citenamefont {Mančinska}, \citenamefont
		{Ozols},\ and\ \citenamefont {Winter}}]{Chitambar2014}%
	\BibitemOpen
	\bibfield  {author} {\bibinfo {author} {\bibfnamefont {E.}~\bibnamefont
			{Chitambar}}, \bibinfo {author} {\bibfnamefont {D.}~\bibnamefont {Leung}},
		\bibinfo {author} {\bibfnamefont {L.}~\bibnamefont {Mančinska}}, \bibinfo
		{author} {\bibfnamefont {M.}~\bibnamefont {Ozols}},\ and\ \bibinfo {author}
		{\bibfnamefont {A.}~\bibnamefont {Winter}},\ }\bibfield  {title} {\bibinfo
		{title} {Everything you always wanted to know about locc (but were afraid to
			ask)},\ }\href {https://doi.org/10.1007/s00220-014-1953-9} {\bibfield
		{journal} {\bibinfo  {journal} {Commun. Math. Phys.}\ }\textbf {\bibinfo
			{volume} {328}},\ \bibinfo {pages} {303–326} (\bibinfo {year}
		{2014})}\BibitemShut {NoStop}%
	\bibitem [{\citenamefont {Ghosh}\ \emph {et~al.}(2001)\citenamefont {Ghosh},
		\citenamefont {Kar}, \citenamefont {Roy}, \citenamefont {Sen(De)},\ and\
		\citenamefont {Sen}}]{Ghosh2001}%
	\BibitemOpen
	\bibfield  {author} {\bibinfo {author} {\bibfnamefont {S.}~\bibnamefont
			{Ghosh}}, \bibinfo {author} {\bibfnamefont {G.}~\bibnamefont {Kar}}, \bibinfo
		{author} {\bibfnamefont {A.}~\bibnamefont {Roy}}, \bibinfo {author}
		{\bibfnamefont {A.}~\bibnamefont {Sen(De)}},\ and\ \bibinfo {author}
		{\bibfnamefont {U.}~\bibnamefont {Sen}},\ }\bibfield  {title} {\bibinfo
		{title} {Distinguishability of bell states},\ }\href
	{https://doi.org/10.1103/PhysRevLett.87.277902} {\bibfield  {journal}
		{\bibinfo  {journal} {Phys. Rev. Lett.}\ }\textbf {\bibinfo {volume} {87}},\
		\bibinfo {pages} {277902} (\bibinfo {year} {2001})}\BibitemShut {NoStop}%
	\bibitem [{\citenamefont {Nathanson}(2005)}]{Nathanson2005}%
	\BibitemOpen
	\bibfield  {author} {\bibinfo {author} {\bibfnamefont {M.}~\bibnamefont
			{Nathanson}},\ }\bibfield  {title} {\bibinfo {title} {{Distinguishing
				bipartitite orthogonal states using LOCC: Best and worst cases}},\ }\href
	{https://doi.org/10.1063/1.1914731} {\bibfield  {journal} {\bibinfo
			{journal} {J. Math. Phys.}\ }\textbf {\bibinfo {volume} {46}},\ \bibinfo
		{pages} {062103} (\bibinfo {year} {2005})}\BibitemShut {NoStop}%
	\bibitem [{\citenamefont {Bandyopadhyay}\ \emph {et~al.}(2015)\citenamefont
		{Bandyopadhyay}, \citenamefont {Cosentino}, \citenamefont {Johnston},
		\citenamefont {Russo}, \citenamefont {Watrous},\ and\ \citenamefont
		{Yu}}]{Bandyopadhyay2015}%
	\BibitemOpen
	\bibfield  {author} {\bibinfo {author} {\bibfnamefont {S.}~\bibnamefont
			{Bandyopadhyay}}, \bibinfo {author} {\bibfnamefont {A.}~\bibnamefont
			{Cosentino}}, \bibinfo {author} {\bibfnamefont {N.}~\bibnamefont {Johnston}},
		\bibinfo {author} {\bibfnamefont {V.}~\bibnamefont {Russo}}, \bibinfo
		{author} {\bibfnamefont {J.}~\bibnamefont {Watrous}},\ and\ \bibinfo {author}
		{\bibfnamefont {N.}~\bibnamefont {Yu}},\ }\bibfield  {title} {\bibinfo
		{title} {Limitations on separable measurements by convex optimization},\
	}\href {https://doi.org/10.1109/TIT.2015.2417755} {\bibfield  {journal}
		{\bibinfo  {journal} {IEEE Trans. Inf. Theory}\ }\textbf {\bibinfo {volume}
			{61}},\ \bibinfo {pages} {3593} (\bibinfo {year} {2015})}\BibitemShut
	{NoStop}%
	\bibitem [{\citenamefont {Jia}\ and\ \citenamefont
		{Sakharwade}(2018)}]{Jia2018}%
	\BibitemOpen
	\bibfield  {author} {\bibinfo {author} {\bibfnamefont {D.}~\bibnamefont
			{Jia}}\ and\ \bibinfo {author} {\bibfnamefont {N.}~\bibnamefont
			{Sakharwade}},\ }\bibfield  {title} {\bibinfo {title} {Tensor products of
			process matrices with indefinite causal structure},\ }\href
	{https://doi.org/10.1103/PhysRevA.97.032110} {\bibfield  {journal} {\bibinfo
			{journal} {Phys. Rev. A}\ }\textbf {\bibinfo {volume} {97}},\ \bibinfo
		{pages} {032110} (\bibinfo {year} {2018})}\BibitemShut {NoStop}%
	\bibitem [{\citenamefont {Guerin}\ \emph {et~al.}(2019)\citenamefont {Guerin},
		\citenamefont {Krumm}, \citenamefont {Budroni},\ and\ \citenamefont
		{Brukner}}]{Gurin2019}%
	\BibitemOpen
	\bibfield  {author} {\bibinfo {author} {\bibfnamefont {P.~A.}\ \bibnamefont
			{Guerin}}, \bibinfo {author} {\bibfnamefont {M.}~\bibnamefont {Krumm}},
		\bibinfo {author} {\bibfnamefont {C.}~\bibnamefont {Budroni}},\ and\ \bibinfo
		{author} {\bibfnamefont {C.}~\bibnamefont {Brukner}},\ }\bibfield  {title}
	{\bibinfo {title} {Composition rules for quantum processes: a no-go
			theorem},\ }\href {https://doi.org/10.1088/1367-2630/aafef7} {\bibfield
		{journal} {\bibinfo  {journal} {New J. Phys.}\ }\textbf {\bibinfo {volume}
			{21}},\ \bibinfo {pages} {012001} (\bibinfo {year} {2019})}\BibitemShut
	{NoStop}%
	\bibitem [{\citenamefont {Werner}(1989)}]{Werner1989}%
	\BibitemOpen
	\bibfield  {author} {\bibinfo {author} {\bibfnamefont {R.~F.}\ \bibnamefont
			{Werner}},\ }\bibfield  {title} {\bibinfo {title} {Quantum states with
			einstein-podolsky-rosen correlations admitting a hidden-variable model},\
	}\href {https://doi.org/10.1103/PhysRevA.40.4277} {\bibfield  {journal}
		{\bibinfo  {journal} {Phys. Rev. A}\ }\textbf {\bibinfo {volume} {40}},\
		\bibinfo {pages} {4277} (\bibinfo {year} {1989})}\BibitemShut {NoStop}%
	\bibitem [{\citenamefont {Baumeler}\ \emph {et~al.}(2014)\citenamefont
		{Baumeler}, \citenamefont {Feix},\ and\ \citenamefont {Wolf}}]{Baumeler2014}%
	\BibitemOpen
	\bibfield  {author} {\bibinfo {author} {\bibfnamefont {A.}~\bibnamefont
			{Baumeler}}, \bibinfo {author} {\bibfnamefont {A.}~\bibnamefont {Feix}},\
		and\ \bibinfo {author} {\bibfnamefont {S.}~\bibnamefont {Wolf}},\ }\bibfield
	{title} {\bibinfo {title} {Maximal incompatibility of locally classical
			behavior and global causal order in multiparty scenarios},\ }\href
	{https://doi.org/10.1103/PhysRevA.90.042106} {\bibfield  {journal} {\bibinfo
			{journal} {Phys. Rev. A}\ }\textbf {\bibinfo {volume} {90}},\ \bibinfo
		{pages} {042106} (\bibinfo {year} {2014})}\BibitemShut {NoStop}%
	\bibitem [{\citenamefont {Baumeler}\ and\ \citenamefont
		{Wolf}(2016)}]{Baumeler2016}%
	\BibitemOpen
	\bibfield  {author} {\bibinfo {author} {\bibfnamefont {A.}~\bibnamefont
			{Baumeler}}\ and\ \bibinfo {author} {\bibfnamefont {S.}~\bibnamefont
			{Wolf}},\ }\bibfield  {title} {\bibinfo {title} {The space of logically
			consistent classical processes without causal order},\ }\href
	{https://doi.org/10.1088/1367-2630/18/1/013036} {\bibfield  {journal}
		{\bibinfo  {journal} {New J. Phys.}\ }\textbf {\bibinfo {volume} {18}},\
		\bibinfo {pages} {013036} (\bibinfo {year} {2016})}\BibitemShut {NoStop}%
	\bibitem [{\citenamefont {Walgate}\ \emph {et~al.}(2000)\citenamefont
		{Walgate}, \citenamefont {Short}, \citenamefont {Hardy},\ and\ \citenamefont
		{Vedral}}]{Walgate2000}%
	\BibitemOpen
	\bibfield  {author} {\bibinfo {author} {\bibfnamefont {J.}~\bibnamefont
			{Walgate}}, \bibinfo {author} {\bibfnamefont {A.~J.}\ \bibnamefont {Short}},
		\bibinfo {author} {\bibfnamefont {L.}~\bibnamefont {Hardy}},\ and\ \bibinfo
		{author} {\bibfnamefont {V.}~\bibnamefont {Vedral}},\ }\bibfield  {title}
	{\bibinfo {title} {Local distinguishability of multipartite orthogonal
			quantum states},\ }\href {https://doi.org/10.1103/PhysRevLett.85.4972}
	{\bibfield  {journal} {\bibinfo  {journal} {Phys. Rev. Lett.}\ }\textbf
		{\bibinfo {volume} {85}},\ \bibinfo {pages} {4972} (\bibinfo {year}
		{2000})}\BibitemShut {NoStop}%
	\bibitem [{\citenamefont {D\"ur}\ \emph {et~al.}(2000)\citenamefont {D\"ur},
		\citenamefont {Vidal},\ and\ \citenamefont {Cirac}}]{Dur2000}%
	\BibitemOpen
	\bibfield  {author} {\bibinfo {author} {\bibfnamefont {W.}~\bibnamefont
			{D\"ur}}, \bibinfo {author} {\bibfnamefont {G.}~\bibnamefont {Vidal}},\ and\
		\bibinfo {author} {\bibfnamefont {J.~I.}\ \bibnamefont {Cirac}},\ }\bibfield
	{title} {\bibinfo {title} {Three qubits can be entangled in two inequivalent
			ways},\ }\href {https://doi.org/10.1103/PhysRevA.62.062314} {\bibfield
		{journal} {\bibinfo  {journal} {Phys. Rev. A}\ }\textbf {\bibinfo {volume}
			{62}},\ \bibinfo {pages} {062314} (\bibinfo {year} {2000})}\BibitemShut
	{NoStop}%
\end{thebibliography}

%

\onecolumngrid
\appendix
\section{Locally Inaccessible Data Retrieval from Maximally Entangled States}\label{appendix-gen}
In the main manuscript, we observed a strict duality between the success probability of the bipartite DR-B task and the success probability of the GYNI game. In this section, we will extend this concept of theorem \ref{theo1} with Local Dit Hiding in higher dimensional Maximally Entangled States(DR-ME) and GYNI with dit inputs.

{\bf DR-ME:} Referee encodes the string $\mathbf{x}=x_1~x_2 \in\{0,1,\cdots,d-1\}^2$ in bipartite maximally entangles states as follows:
\begin{subequations}
\begin{align}
\mathbf{x}&\to\ket{\mathcal{B}^{\mathbf{x}}}_{AB}:=\frac{1}{\sqrt{d}}\sum_{k=0}^{d-1}\omega^{x_2k}\ket{k}_A\otimes \ket{k \oplus_{d} x_1}_B,\nonumber\\
&\hspace{1cm}=\left(Z^{x_2}_A\otimes X^{x_1}_B\right)\frac{1}{\sqrt{d}}\sum_{k=0}^{d-1}\ket{k}_A\otimes \ket{k}_B,\\
&~Z_A\ket{k}:=e^{\frac{2\pi ik}{d}}\ket{k},~~\&~~X_A\ket{k}:=\ket{k\oplus_d1}\label{app1}
\end{align}
\end{subequations}
with $\omega=e^{\frac{2\pi i}{d}}$ and $\oplus_{d}$ representing modulo $d$ addition. The local marginals of Alice and Bob are the maximally mixed states $\mathbf{I}/d$ for every encoded state, and thus the hiding condition is satisfied. The success probability of DR-ME task is given by
\begin{align}
P_{succ}^{\scalebox{.6}{DR-ME}}:=\sum_{x_1,x_2=0}^{d-1}\frac{1}{d^2}P(a=x_1,b=x_2|\mathcal{B}^{x_1x_2}_{AB}).
\end{align}
{\bf GYNI-d:} Alice (Bob) tosses a random $d$ sided coin to generate a random dit $i_1~(i_2)\in\{0,1,\cdots,d-1\}$. Each party aims to guess the coin state of the other party. Denoting their guesses as $a$ and $b$ respectively, the success probability reads as 
\begin{align}
P_{succ}^{\scalebox{.6}{GYNI-d}}=\sum_{i_1,i_2=0}^{d-1}\frac{1}{d^2}P(a=i_2,b=i_1|i_1,i_2). 
\end{align}
The optimal winning probabilities for GYNI-d with an indefinite causal ordered process are unknown. However, the duality established in Theorem \ref{theo1} extends to this higher dimensional case.
\begin{theorem}\label{theo1'}
A success probability $P_{succ}^{\scalebox{.6}{DR-ME}}=\mu$ in DR-ME task is achievable if and only if the same success is achievable in GYNI-d game, {\it i.e.}, $P_{succ}^{\scalebox{.6}{GYNI-d}}=\mu$.
\end{theorem}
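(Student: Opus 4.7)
The plan is to lift the bipartite argument of Theorem \ref{theo1} to arbitrary qudit dimension $d$ by replacing the qubit Paulis with the Weyl operators $X,Z$ of Eq.~(\ref{app1}); the structural identity $|\mathcal{B}^{x_1x_2}\rangle_{AB}=(Z^{x_2}_A\otimes X^{x_1}_B)|\mathcal{B}^{00}\rangle_{AB}$, which drove both directions of the $d=2$ proof, survives verbatim and will again be the workhorse.

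For the \textbf{only if} direction I would mimic Eq.~(\ref{onlyif}). Starting from a DR-ME protocol on $W_{A_IA_OB_IB_O}$ with instruments $\{M^a_{AA_IA_O}\}_{a=0}^{d-1},\{M^b_{BB_IB_O}\}_{b=0}^{d-1}$ yielding $P^{\scalebox{.6}{DR-ME}}_{succ}=\mu$, I would construct a GYNI-d protocol on $W\otimes\mathcal{B}^{00}_{AB}$ in which, on reading her coin value $i_1$, Alice conjugates her instrument with $Z^{i_1}_A$ and, on reading $i_2$, Bob conjugates his with $X^{i_2}_B$. Operationally this is equivalent to leaving the instruments untouched while promoting the ancilla to $(Z^{i_1}_A\otimes X^{i_2}_B)\mathcal{B}^{00}(Z^{i_1}_A\otimes X^{i_2}_B)^{\dagger}=\mathcal{B}^{i_2i_1}_{AB}$. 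Hence the probability of producing $(a,b)=(i_2,i_1)$ equals the DR-ME success on the string $x_1x_2=i_2i_1$, and averaging over $(i_1,i_2)$ returns $P^{\scalebox{.6}{GYNI-d}}_{succ}=\mu$.

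For the \textbf{if} direction the heart of the argument is to design a local unitary that mixes the two maximally entangled pairs so that subsequent computational-basis measurements yield two independent modular correlations, one carrying $x_1$ and one carrying $x_2$. I propose
\begin{align*}
U_{AA'}|l\rangle_A|k\rangle_{A'}&=(F_A\otimes I_{A'})|l\rangle_A|k\oplus_{d}l\rangle_{A'},\\
V_{BB'}|l\rangle_B|k\rangle_{B'}&=(F^{\dagger}_B\otimes I_{B'})|l\rangle_B|k\oplus_{d}l\rangle_{B'},
\end{align*}
namely a generalized CNOT followed by a qudit Fourier (respectively its inverse) on the control register. A direct computation using the character orthogonality $\sum_{l}\omega^{cl}=d\,\delta_{c,0}$ shows that $(U_{AA'}\otimes V_{BB'})(\mathcal{B}^{00}_{A'B'}\otimes\mathcal{B}^{x_1x_2}_{AB})$ is supported, up to an irrelevant global phase $\omega^{-x_1x_2}$ and an unobservable Weyl phase on Alice's $A$ register, on computational-basis strings in which Alice's outcomes $(u,u')$ and Bob's outcomes $(v,v')$ obey $v-u\equiv x_2$ and $v'-u'\equiv x_1 \pmod d$, all $d^2$ admissible quadruples being equiprobable. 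Alice then feeds $u$ into the assumed GYNI-d protocol on $W'$, Bob feeds $v'$, and on receiving GYNI-d outputs $a',b'$ they publish the DR-ME guesses $a=a'-u'\pmod d$ and $b=v-b'\pmod d$. Every GYNI-d win ($a'=v'$, $b'=u$) forces $a=x_1$ and $b=x_2$, so a chain of delta-collapses patterned on Eq.~(\ref{if}), with $\oplus$ replaced by $\oplus_{d}$, delivers $P^{\scalebox{.6}{DR-ME}}_{succ}=P^{\scalebox{.6}{GYNI-d}}_{succ}=\mu$.

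The main obstacle is the unitary itself: the \emph{single} unitary that served both parties at $d=2$ relied on the qubit coincidences $F=F^{\dagger}$ (Hadamard) and $\oplus=\ominus$, neither of which survives at general $d$. Splitting the parties' operations into the Fourier/inverse-Fourier pair above restores the symmetry exploited in Theorem \ref{theo1}, after which tracking the Weyl commutation $XZ=\omega^{-1}ZX$ and the signs of modular exponents inside the delta-collapses is tedious but mechanical.
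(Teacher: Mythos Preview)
Your proposal is correct and follows essentially the same route as the paper: the \emph{only if} direction is identical (conjugate the DR-ME instruments by $Z^{i_1}_A$ and $X^{i_2}_B$ so that the ancilla $\mathcal{B}^{00}_{AB}$ is promoted to $\mathcal{B}^{i_2i_1}_{AB}$), and the \emph{if} direction uses the same two-step local unitary---a controlled shift $|l\rangle|k\rangle\mapsto|l\rangle|k\oplus_d l\rangle$ that copies the $X$-label onto the ancillary pair, followed by a Fourier transform on the control register so that computational-basis measurements expose both modular correlations---after which the GYNI-d strategy on $W'$ is invoked exactly as in the $d=2$ proof.

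The one genuine refinement in your version is the asymmetric choice $F_A$ versus $F^{\dagger}_B$. The paper applies the \emph{same} Fourier $F$ on both sides and writes the resulting correlations as $u\oplus_d v=x_2$, $u'\oplus_d v'=x_1$; strictly speaking, with $F\otimes F$ one actually obtains $u+v\equiv -x_2\pmod d$ rather than $+x_2$, a discrepancy that is invisible at $d=2$ (where $\oplus=\ominus$ and $F=F^{\dagger}$) but requires either a sign-flip in the final guess or your $F/F^{\dagger}$ split for general $d$. Your observation that ``the single unitary that served both parties at $d=2$ relied on the qubit coincidences $F=F^{\dagger}$ and $\oplus=\ominus$'' pinpoints exactly this, and your fix is the cleanest way to make the delta-collapse chain of Eq.~(\ref{if}) go through verbatim with $\oplus_d$ replaced by modular subtraction.
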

\begin{proof}
As before the proof is done in two parts:
\begin{itemize}
\item[(i)] {\it only if} part: $P_{succ}^{\scalebox{.6}{DR-ME}}=\mu$ ensures a protocol for GYNI-d game yielding success probability $P_{succ}^{\scalebox{.6}{GYNI-d}}=\mu$.
\item[(ii)] {\it if} part: $P_{succ}^{\scalebox{.6}{GYNI-d}}=\mu$ ensures a protocol for DR-ME task yielding success probability $P_{succ}^{\scalebox{.6}{DR-ME}}=\mu$.
\end{itemize}
\underline{\bf only if part:}\\
Given the encoded states $\{\mathcal{B}^{\bf x}_{AB}\}$, let the process matrix $W_{A_IA_OB_IB_O}$ yields a success $P_{succ}^{\scalebox{.6}{DR-ME}}=\mu$ with Alice and Bob applying the quantum instruments $\mathcal{I_A}=\{M^a_{AA_IA_O}\}_{a=0}^{d-1}$ and $\mathcal{I_B}=\{M^b_{BB_IB_O}\}_{b=0}^{d-1}$, respectively. Thus we have, 
\footnotesize
\begin{align}
&P_{succ}^{\scalebox{.6}{DR-ME}}= \frac{1}{d^2}\sum_{x_1,x_2=0}^{d-1} p(a=x_1,b=x_2|\mathcal{B}^{x_1x_2}_{AB})=\mu,~~\mbox{with},\label{DRd}\\
&p(a,b|\mathcal{B}^{x_1x_2}_{AB}):=\Tr[(\mathcal{B}^{x_1x_2}_{AB}\otimes W)(M^a_{AA_IA_O}\otimes M^b_{BB_IB_O})].\nonumber
\end{align}
\normalsize

For playing the GYNI-d game, let Alice and Bob share the process Matrix $W^\prime=W_{A_IA_OB_IB_O}\otimes \mathcal{B}^{00}_{AB}$. Based on their coin states $i_1,i_2\in\{0,1,\cdots,d-1\}$, Alice and Bob respectively perform quantum instruments 
\footnotesize
\begin{align*}
\mathcal{I_A}^{(i_1)}&:=\left\{\mathcal{Z}^{i_1}_A\left(M^{a}_{AA_IA_O}\right)\right\}_{a=0}^{d-1}\equiv\left\{Z^{i_1}_{A}M^a_{AA_IA_O}Z^{i_1}_{A}\right\}_{a=0}^{d-1},\\
\mathcal{I_B}^{(i_2)}&=\left\{\mathcal{X}^{i_2}_B\left(M^{b}_{BB_IB_O}\right)\right\}_{b=0}^{d-1}\equiv\left\{X^{i_2}_{B}M^b_{BB_IB_O}X^{i_2}_{B}\right\}_{b=0}^{d-1},
\end{align*}
\normalsize
where $Z$ \& $X$ are Pauli gates on $\mathbb{C}^d$ as defined in eq.(\ref{app1}) and $\{M^a_{AA_IA_O}\}_{a=0}^{d-1}$ \& $\{M^b_{BB_IB_O}\}_{b=0}^{d-1}$ are the instruments used in DR-ME task. With this protocol the success probability of GYNI-d game becomes
\footnotesize
\begin{align}
&P_{succ}^{\scalebox{.6}{GYNI-d}}=\sum_{i_1,i_2=0}^{d-1}\frac{1}{d^2}p(a=i_2,b=i_1|i_1,i_2)\nonumber\\
&=\frac{1}{d^2}\sum_{i_1,i_2=0}^{d-1}\Tr\left[\left(\mathcal{B}^{00}_{AB}\otimes W\right)\left(M^{a=i_2|i_1}_{AA_IA_O}\otimes M^{b=i_1|i_2}_{BB_IB_O}\right)\right]\nonumber\\
&=\frac{1}{d^2}\sum_{i_1,i_2=0}^{d-1}\Tr\left[\left(\mathcal{B}^{00}_{AB}\otimes W\right)\left(\mathcal{Z}^{i_1}_{A}\left(M^{a=i_2}_{AA_IA_O}\right)\otimes \mathcal{X}^{i_2}_{B}\left(M^{b=i_1}_{BB_IB_O}\right)\right)\right]\nonumber\\
&=\frac{1}{d^2}\sum_{i_1,i_2=0}^{d-1}\Tr\left[\left(\left(\mathcal{Z}^{i_1}\otimes \mathcal{X}^{i_2}\left(\mathcal{B}^{00}\right)\right)\otimes W\right)\left(M^{a=i_2}_{AA_IA_O}\otimes M^{b=i_1}_{BB_IB_O}\right)\right]\nonumber\\
&=\frac{1}{d^2}\sum_{i_1,i_2=0}^{d-1}\Tr\left[\left(\mathcal{B}^{i_2i_1}_{AB} \otimes W\right)\left(M^{a=i_2}_{AA_IA_O}\otimes M^{b=i_1}_{BB_IB_O}\right)\right]\nonumber\\
&=\mu=P_{succ}^{\scalebox{.6}{DR-B}},~~[\mbox{using~Eq}.(\ref{DR})].\nonumber
\end{align}
\normalsize
This completes the {\it only if} part of the claim.\\\\
\underline{\bf if part:}\\
Given $x_1$ and $x_2$ being the respective coin states of Alice and Bob, let the process matrix $W'_{A_IA_OB_IB_O}$ yields a success $P_{succ}^{\scalebox{.6}{GYNI}}=\mu$, with Alice and Bob performing quantum instruments $\mathcal{I_A}^{(x_1)}=\{M^{a|x_1}_{A_IA_O}\}_{a=0}^{d-1}$ and $\mathcal{I_B}^{(x_2)}=\{M^{b|x_2}_{B_IB_O}\}_{b=0}^{d-1}$, respectively. Thus we have, 
\footnotesize
\begin{align}
&P_{succ}^{\scalebox{.6}{GYNI}}=\frac{1}{d^2}\sum_{x_1,x_2=0}^{d-1} p(a=x_2,b=x_1|x_1,x_2)=\mu,~~\mbox{with},\label{gynid}\\
&p(a,b|x_1,x_2):=\Tr\left[\left(M^{a|x_1}_{A_IA_O}\otimes M^{b|x_2}_{B_IB_O}\right)W'\right].\nonumber
\end{align}
\normalsize

To perform the DR-ME task, Alice and Bob share the Process Matrix $W'_{A_IA_OB_IB_O}\otimes \mathcal{B}^{00}_{A'B'}$. Now, given the encoded state $\mathcal{B}^{x_1x_2}_{AB}$, Alice and Bob apply the following Controlled-Shift(CS) unitary operation on parts of their local systems 
\footnotesize
\begin{align}
CS_{AA'}\ket{m}_{A}\ket{n}_{A'}=\ket{m}_A\ket{n\oplus_d m}_{A'}
\end{align}
\normalsize
They follow this with a discrete Fourier transformation, $F\ket{k}=\frac{1}{\sqrt{d}}\sum_{q=0}^{d-1}\omega^{qk}\ket{q}$, on their respective unprimed parts, which results in
\footnotesize
\begin{align}
&\mathcal{F}_{A'}\circ \mathcal{CS}_{AA'}\otimes\mathcal{F}_{B'}\circ\mathcal{CS}_{BB'}\left(W'_{A_IA_OB_IB_O}\otimes \mathcal{B}^{00}_{A'B'}\otimes\mathcal{B}^{x_1x_2}_{AB}\right)\nonumber\\
&\hspace{1cm}=W'_{A_IA_OB_IB_O}\otimes \mathcal{B}^{x_10}_{A'B'}\otimes\mathcal{B}^{x_1x_2}_{AB},
\end{align}
\normalsize
where $\mathcal{F}$ and $\mathcal{CS}$ denote the linear maps corresponding to the unitary operations $F$ and $CS$ respectively. After this Alice performs computational basis measurement on $A$ and $A'$, resulting in outcomes $u,u'\in\{0,1,\cdots,d-1\}$. Similarly, Bob obtains the outcomes $v,v'\in\{0,1,\cdots,d-1\}$. Clearly, due to the correlation of the state, we have
\begin{align}
u\oplus_d v=x_2,~~\&~~u'\oplus_d v'=x_1.\label{xor}
\end{align}
Therefore, guessing the value of $v'$ and $u$ respectively by Alice and Bob with the probability $\mu$ will ensure the same success in the DR-ME task. At this point the process $W'_{A_IA_OB_IB_O}$ proves to be helpful in this task, which can be accordingly chosen looking into its advantage in GYNI-d game. The rest of the protocol mimics the GYNI-d strategy with $u$ and $v'$ being the inputs of Alice and Bob, respectively. Denoting $a'$ and $b'$ as the output of the GYNI strategy the final guess in the DR-ME task by Alice and Bob are respectively, 
\begin{align}
a=a'\oplus_d u',~~\&~~b=b'\oplus_d v.
\end{align}
On the composite process $W'_{A_IA_OB_IB_O}\otimes \mathcal{B}^{00}_{A'B'}\otimes\mathcal{B}^{x_1x_2}_{AB}$, the effective instruments $\{M^a_{AA'A_IA_O}\}_{a=0}^{d-1}$ and $\{M^b_{BB'B_IB_O}\}_{b=0}^{d-1}$ are respectively given by
\footnotesize
\begin{align*}
&\sum_{u,u',a'=0}^{d-1}\delta_{a,a'\oplus_d u'}\mathcal{F}_{A'}\circ\mathcal{CS}_{AA'}\otimes \textbf{id}_{A_IA_O}(\ket{uu'}_{AA'}\bra{uu'}\otimes M^{a'|u}_{A_IA_O}),\\
&\sum_{v,v',b'=0}^{d-1}\delta_{b,b'\oplus_d v}\mathcal{F}_{B'}\circ\mathcal{CS}_{BB'}\otimes \textbf{id}_{B_IB_O}(\ket{vv'}_{BB'}\bra{vv'}\otimes M^{b'|v'}_{B_IB_O}),
\end{align*}
\normalsize
where $\{M^{a'|u}_{A_IA_O}\}$ and $\{M^{b'|v'}_{B_IB_O}\}$ are the instruments used by Alice and Bob in GYNI-d game. The success probability of the DR-ME task with the aforesaid protocol becomes
\footnotesize
\begin{align}
&P_{succ}^{\scalebox{.6}{DR-B}}=\sum_{x_1,x_2=0}^1\frac{1}{d^2}P(a=x_1,b=x_2|\mathcal{B}^{x_1x_2}_{AB})\nonumber\\
&=\sum_{x_1,x_2=0}^{d-1}\frac{1}{d^2}\Tr\left[\left(W'\otimes \mathcal{B}^{00}\otimes \mathcal{B}^{x_1x_2}\right)\left(M^{a=x_1}_{AA'A_IA_O}\otimes M^{b=x_2}_{BB'B_IB_O}\right)\right]\nonumber\\
&=\sum_{x_1,x_2}\frac{1}{d^2}\sum_{u,v'}\frac{1}{d^2}P(v',u|u,v')\nonumber\\
&=\sum_{x_1,x_2}\frac{1}{d^2}\mu=\mu.~~[\mbox{using~Eq}.(\ref{gynid})].
\end{align}
\normalsize
This completes the {\it if} part of the claim, and hence the Theorem is proved.
\end{proof}

\section{Advantage in DR from Indefinite Ordered Classical Processes }\label{ClassicalICO}
\subsection{Causal Indefiniteness in Classical Setup}\label{s5a}
The state cone ($\Omega^n_+$) and normalised state space ($\Omega^n$) of an $n$ level classical system is described as
\begin{subequations}
\begin{align}
\Omega^n_+&:=\left\{\vec{p}~|~\vec{p}\in\mathbb{R}^n~,p_i\geq0~\forall~i\right\},\\
\Omega^n&:=\left\{\vec{p}~|~\vec{p}\in\mathbb{R}^n~,p_i\geq0~\forall~i~,\&~\sum_{i=0}^{n-1}p_i=1\right\}. 
\end{align}
\end{subequations}

Pure state of $\Omega^n$ are $\vec{l}:=\{\delta_{il}\}_{i=0}^{n-1}$ for $l\in\{0,\cdots,n-1\}$. Later, sometime we will denote $\vec{l}\equiv l$ simply. The most general operation that an agent (say X) can apply on a classical system is described by a classical instrument 
\begin{align}
\mathcal{I}^c_{X}\equiv\left\{S^k_X|S^k_X:\Omega^{I_X}_+\mapsto\Omega^{O_X}_+\right\}_{k=1}^N,    
\end{align}
where $S^k_X$ are positive linear maps mapping the state cone of the input $I_X$ level classical system to the state cone of the output $O_X$ level classical system, with $I_X,O_X <\infty$. Moreover, $S^k_X$'s sum up to a stochastic map $\mathbb{S}_X$, \emph{i.e.},
\begin{align}
\mathbb{S}_X:=\sum_{k=1}^N S^k_X,~s.t.~\mathbb{S}_X(\Omega^{I_X}) \subseteq \Omega^{O_X}.   
\end{align}
The stochasticity condition is analogous to the trace preserving condition in the quantum case. Let us consider the case involving two parties say Alice and Bob with
\begin{subequations}
\begin{align}
&\mathcal{S}_A:=\{S_A~|~ S_A:\Omega^{I_A}_+\mapsto\Omega^{O_A}_+\},\\
&\mathcal{S}_B:=\{S_B~|~ S_B:\Omega^{I_B}_+\mapsto\Omega^{O_B}_+\},
\end{align}  
\end{subequations}
denoting the sets of all state-cone preserving maps for Alice and Bob, respectively. Any such linear map $S:\Omega^n_+\mapsto\Omega^m_+$ can be represented as an $m\times n$ real matrix, which can be uniquely specified by it's action on pure states $\{l\}_{l=0}^{n-1}$ of $\Omega^n$. Without assuming any background causal structure among Alice's and Bob's actions, the most general statistics observed is given by a bi-linear functional,
\begin{subequations}
\begin{align}
&P:\mathcal{S}_A\times\mathcal{S}_B\mapsto[0,\infty),\label{positivityc}\\
&P(\mathbb{S}_A,\mathbb{S}_B)=1,~\forall~\mathbb{S}_A,\mathbb{S}_B.\label{normalizationc}
\end{align}  
\end{subequations}

Any such bi-linear functional reads as a Trace-rule over a stochastic map $\mathbb{E}_{AB}$ \cite{Baumeler2016}, {\it i.e.},
\begin{subequations}
\begin{align}
P(S_A,S_B)=\Tr\left[\mathbb{E}_{AB}\left(S_A\otimes S_B\right)\right],\\
\Tr[\mathbb{E}_{AB}(\mathbb{S}_A\otimes \mathbb{S}_B)]=1,~\forall~\mathbb{S}_A,\mathbb{S}_B\\
\mathbb{E}_{AB}(\Omega^{O_AO_B})\subseteq\Omega^{I_AI_B}.
\end{align}    
\end{subequations}
\begin{figure}[t!]
\centering
\includegraphics[scale=0.36]{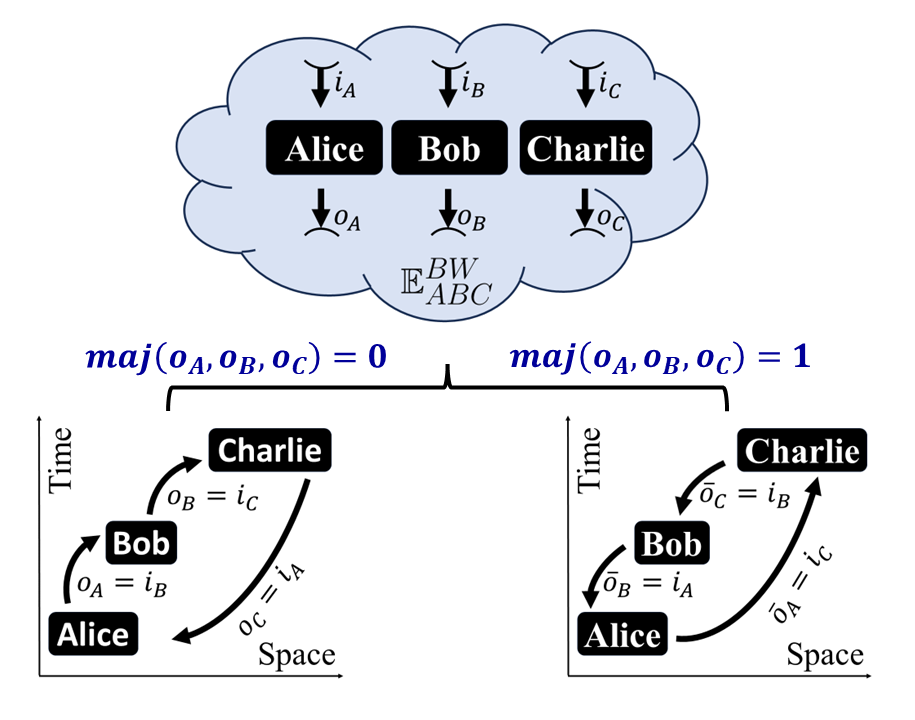}
\caption{The tripartite classical causal indefinite process as given in Eq.(\ref{EBW}). While each of the branches, described in Eq.(\ref{maj}), lead to logical paradoxes when described in a definite spacetime, their combination yields a logically-consistent-classical-process  $\mathbb{E}^{BW}_{ABC}$. }
\label{fig4}
\vspace{-.5cm}
\end{figure}

Such a $\mathbb{E}_{AB}$ is termed as logically-consistent-classical process (LCCP). As shown in \cite{Baumeler2016} (see also \cite{Oreshkov2012}) all bipartite LCCPs are causally definite, {\it i.e.},
\begin{align}\label{cs}
\mathbb{E}_{AB}=p_1\mathbb{E}_{AB}^{A\prec B}+p_2\mathbb{E}_{AB}^{B\prec A}+p_3\mathbb{E}_{AB}^{B\nprec\nsucc A},    
\end{align}
where $\mathbb{E}_{AB}^{A\prec B}~(\mathbb{E}_{AB}^{B\prec A})$ denotes a process where Alice (Bob) is in the causal past of Bob (Alice), and $\mathbb{E}_{AB}^{B\nprec\nsucc A}$ represents a process with $A$ and $B$ being spacelike separated; with $\Vec{p}=(p_1,p_2,p_3)^T$ being a probability vector. However, for multipartite case there are classical processes which do not admit a notion of causal ordering among the parties. For instance, consider the classical process,
\begin{subequations}
\begin{align}
&\mathbb{E}^{BW}_{ABC}(o_A,o_B,o_C)=(i_A,i_B,i_C),~{\mbox{with}}\label{EBW}\\
&(i_A,i_B,i_C)\equiv\begin{cases}
(o_C,o_A,o_B),\\
\hspace{1cm}\mbox{if~maj}(o_A,o_B,o_C)=0,\\
(\Bar{o}_B,\Bar{o}_C,\Bar{o}_A),\\
\hspace{1cm}\mbox{if~maj}(o_A,o_B,o_C)=1,
\end{cases}\label{maj}
\end{align}\label{CP}
\end{subequations}
\normalsize
where $o_A,o_B,o_C,i_A,i_B,i_C\in\{0,1\},\Bar{0}=1$ and $\Bar{1}=0$ (see Fig.\ref{fig4}). As shown by the the authors in \cite{Baumeler2016}, the LCCP $\mathbb{E}^{BW}$ violates a tripartite causal inequality, establishing that causal indefiniteness is no longer an artifact of quantum processes. At this point, one might ask whether advantage in DR task stems from indefinite quantum processes only or is it a general trait of causal indefiniteness. We answer this question in affirmative by providing a tripartite variant of DR task wherein $\mathbb{E}^{BW}$ provides a nontrivial advantage.

\subsection{T-DR Success Under Different Collaboration Scenarios}
In this subsection we analyse the success probability of T-DR task under different collaboration scenarios. We start by considering the no-collaboration case. 
\begin{proposition}\label{prop6}
Without any collaboration Alice,  Bob and Charlie can achieve the success $P_{succ}^{\scalebox{0.6}{T-DR}}=27/64$.
\end{proposition}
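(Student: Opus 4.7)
The plan is to exploit the hiding condition to reduce the task, in the no-collaboration setting, to each player independently guessing their two-bit substring with no information whatsoever, and then to verify by a direct combinatorial calculation that the best such uninformed strategy achieves exactly $3/4$ per player.

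First I would invoke the hiding condition explicitly: for each $\mathcal{K}\in\{A,B,C\}$ the local marginal $\rho^{\mathbf{x}}_{\mathcal{K}}=(\mathbb{I}/2)^{\otimes 4}$ is independent of $\mathbf{x}$, so any local quantum instrument a player applies produces outcome statistics statistically independent of $\mathbf{x}$. Consequently, without collaboration, each player's 3-bit answer is a random variable statistically independent of the referee's message, and the individual success probabilities reduce to a purely combinatorial question about the structure of the winning sets.

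Second, I would unpack the set $\$^{yy'}=\{0y\bar{y}',0\bar{y}y',0\bar{y}\bar{y}',1yy'\}$, which consists of three ``eliminate'' answers of the form $0zz'$ with $zz'\neq yy'$ together with a single ``identify'' answer $1yy'$. Since the target substring $\mathbf{x_1}$ is uniform on $\{0,1\}^2$, any fixed eliminate answer lies in $\$^{\mathbf{x_1}}$ with probability $3/4$, while any fixed identify answer lies in $\$^{\mathbf{x_1}}$ with probability $1/4$. The optimal information-free strategy is therefore to always output a fixed eliminate answer, say $000$, yielding individual success probability $3/4$; and by linearity any randomized uninformed strategy is a convex combination of the eight deterministic answers and so cannot exceed $3/4$.

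Finally, I would combine the three local successes. Under the uniform prior on $\mathbf{x}\in\{0,1\}^6$, the three substrings $\mathbf{x_1},\mathbf{x_2},\mathbf{x_3}$ are independent, and in the no-collaboration setting the three players' answers are independent random variables (each determined only by that player's share, whose statistics are independent of $\mathbf{x}$). Hence the joint winning event $\mathbf{g}\in\pounds^{\mathbf{x}}$ factorizes across the players into three independent events of probability $3/4$, giving
\[P_{succ}^{\scalebox{.6}{T-DR}}=\left(\frac{3}{4}\right)^3=\frac{27}{64}.\]
I do not anticipate any genuine obstacle: once the hiding condition strips away all information, everything that remains is elementary combinatorics. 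The only mild care needed is the upper-bound step, verifying that no randomized mixture of the eight deterministic answers can exceed $3/4$, which is immediate from the two per-answer probabilities just computed.
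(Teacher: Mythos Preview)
Your proposal is correct and follows essentially the same approach as the paper: invoke the hiding condition to conclude each player is effectively guessing blind, observe that an ``eliminate'' answer $0zz'$ succeeds with probability $3/4$ while an ``identify'' answer succeeds with probability $1/4$, and multiply the three independent per-player optima to get $(3/4)^3=27/64$. The paper's own proof is considerably terser and more heuristic---it simply asserts that ``the best Alice can do is to answer $a_0=0$ and guess a value for $a_1a'_1$, which leads to a success $3/4$'' and then cubes---so your version is in fact more careful about the combinatorics of $\$^{yy'}$ and the convexity step for the upper bound.
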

\begin{proof}
The only way Alice can learn something definitive about her string is through communication from Charlie. Similarly, Bob and Charlie need communication from Alice and Bob, respectively. Without any such communication, the best Alice can do is to answer $a_0=0$ and guess a value for $a_1a'_1$, which leads to a success $3/4$. A similar strategy followed by Bob and Charlie yields an overall success  $P_{succ}^{\scalebox{0.6}{T-DR}}=(3/4)^3=27/64$.
\end{proof}
However, unlike the DR-B task, LOCC collaboration turns out to be advantageous in this case. 
\begin{proposition}\label{prop7}
Under LOCC collaboration $P_{succ}^{\scalebox{0.6}{T-DR}}=1$. 
\end{proposition}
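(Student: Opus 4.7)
My plan is to exhibit an explicit LOCC protocol that achieves deterministic success. The driving observation is the (well-known) fact that two copies of an unknown two-qubit Bell state, held by a pair of parties, can be identified perfectly by LOCC between them. Since the T-DR encoding in Eq.~(\ref{T-DRencoding}) consists of exactly such two-copy Bell pairs shared by each of the three pairs of players, three parallel invocations of this subroutine, followed by a single classical broadcast per pair, will suffice.

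For a generic pair holding $\ket{\mathcal{B}^{yy'}}^{\otimes 2}$ the subroutine goes as follows. On the first copy both parties measure in the computational basis and classically exchange outcomes: since $\ket{\phi^\pm}$ yield correlated outcomes while $\ket{\psi^\pm}$ yield anticorrelated ones, the XOR of their two bits equals the amplitude bit $y$. On the second copy each party first applies a Hadamard gate and then measures in the computational basis. A direct check of the identities
\begin{align*}
H^{\otimes 2}\ket{\phi^+} &= \ket{\phi^+}, & H^{\otimes 2}\ket{\phi^-} &= \ket{\psi^+},\\
H^{\otimes 2}\ket{\psi^+} &= \ket{\phi^-}, & H^{\otimes 2}\ket{\psi^-} &= -\ket{\psi^-},
\end{align*}
shows that the XOR of the second-copy outcomes reveals the phase bit $y'$. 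Together the two bits determine $yy'$ deterministically, so the pair perfectly identifies which Bell label was used.

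Running this subroutine in parallel on the three encoded pairs, Alice and Charlie learn $\mathbf{x_1}$, Alice and Bob learn $\mathbf{x_2}$, and Bob and Charlie learn $\mathbf{x_3}$. One further classical broadcast within each pair delivers each string to the player who owns it: Charlie sends $\mathbf{x_1}$ to Alice, Alice sends $\mathbf{x_2}$ to Bob, and Bob sends $\mathbf{x_3}$ to Charlie. Each player then outputs in identify mode the exact string they were assigned, which trivially lies in the winning set $\$^{\mathbf{x_i}}$ of Eq.~(\ref{wincon}), so $P_{succ}^{\scalebox{.6}{T-DR}} = 1$. There is no serious obstacle in the argument; the only nontrivial step is verifying the four Hadamard identities above, each of which is a one-line calculation and establishes that the Hadamard-basis comparison extracts precisely the phase bit that the computational-basis comparison misses.
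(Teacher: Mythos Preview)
Your proof is correct and follows essentially the same route as the paper: both exploit that two copies of an unknown Bell state can be perfectly identified by LOCC via $Z$-basis and $X$-basis measurements on the two copies, and then thread this pairwise subroutine through the three Bell pairs of the T-DR encoding. Your Hadamard-then-measure step is exactly an $X$-basis measurement, so the protocols coincide; the only difference is that you spell out the four $H^{\otimes 2}$ identities explicitly, whereas the paper just invokes the two-copy LOCC distinguishability result of \cite{Walgate2000}.
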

\begin{proof}
Recall that two-qubit Bell basis shared between two distant parties cannot be perfectly distinguished by LOCC \cite{Ghosh2001}. However, according to the result in \cite{Walgate2000}, given two copies of the states, they can be perfectly distinguished under LOCC. The protocol goes as follows: both the players perform $Z$-basis measurement on their parts of the first copy and $X$-basis measurement on second. One of the players communicate the results to the other player, who can accordingly identify the given Bell state. This ensures a perfect success of T-DR task under LOCC.  
\end{proof}
However, the protocol in Proposition \ref{prop7} demands multi-round communication among the players. For instance, let Alice first communicate her results to Bob implying Alice's measurement event to be in the causal past of Bob's guess. Similarly, Bob to Charlie communication demands Bob's measurement event to be in the causal past of Charlie's guess. Finally, Charlie to Alice communication demands Charlie's measurement event to be in the causal past of Alice's guess. Since, in the single-opening setup communication entering into a local laboratory must happen before any communication going outside it, therefore the above protocol cannot be implemented within this setup. Therefore, the question of optimal success of T-DR is worth exploring in single-opening scenario. However, likewise the notions of genuine and non-genuine entanglement on multipartite case \cite{Dur2000}, the notion of causal indefiniteness can also have different manifestations when more than two parties are involved. Before proceeding further, here we first recall the definition of bi-causal       
/genuine quantum process. 
  
\begin{proposition}\label{prop8}
Under bi-causal collaboration $P_{succ}^{\scalebox{.6}{T-DR}}\le3/4$.  
\end{proposition}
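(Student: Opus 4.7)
The plan is to reduce the question to a single-branch bound and then invoke convexity. By Definition~\ref{bcpdef} and Eq.~(\ref{bcs}), every bi-causal process is a convex mixture of the six one-party-blocked branches $W^{X\not\prec YZ}$ and $W^{YZ\not\prec X}$ with $X\in\{A,B,C\}$; since $P_{succ}^{\scalebox{.6}{T-DR}}$ is linear in the shared process, it suffices to establish $P_{succ}^{\scalebox{.6}{T-DR}}\le 3/4$ separately for each of these six branches.

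First I would locate, in each branch, the \emph{deprived} party -- the player whose target string has no physical route into their laboratory. The encoding (\ref{T-DRencoding}) places Alice's target $\mathbf{x_1}$ on the $\mathcal{B}^{\mathbf{x_1}}_{AC}$-pairs she shares with Charlie, Bob's target $\mathbf{x_2}$ on the $\mathcal{B}^{\mathbf{x_2}}_{BA}$-pairs he shares with Alice, and Charlie's target $\mathbf{x_3}$ on the $\mathcal{B}^{\mathbf{x_3}}_{CB}$-pairs he shares with Bob. Hence the two branches blocking $A\!\to\! B$ signaling ($W^{A\not\prec BC}$ and $W^{AC\not\prec B}$) deprive Bob of $\mathbf{x_2}$; those blocking $B\!\to\! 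C$ deprive Charlie of $\mathbf{x_3}$; and those blocking $C\!\to\! A$ deprive Alice of $\mathbf{x_1}$. Combining the hiding condition $\rho^{\mathbf{x}}_\mathcal{K}=(\mathbb{I}/2)^{\otimes 4}$ with the branch's one-way no-signaling constraint then forces the deprived party's marginal output distribution to be independent of their own target; for example, in $W^{A\not\prec BC}$ one obtains $P(\mathbf{b}|\mathbf{x})=P(\mathbf{b}|\mathbf{x_1},\mathbf{x_3})$, because the only subsystems feeding Bob's instrument are his maximally-mixed local qubits and communications from Charlie, whose holdings are entirely disjoint from the $\mathbf{x_2}$-encoding.

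The bound then follows from an elementary counting step on $\$^{yy'}=\{0y\bar{y}',0\bar{y}y',0\bar{y}\bar{y}',1yy'\}$: a direct check shows $|\{yy':\mathbf{b}\in\$^{yy'}\}|=3$ when $b_0=0$ (the elimination strategy) and $=1$ when $b_0=1$ (the identification strategy), hence $\tfrac{1}{4}\sum_{\mathbf{x_2}}[\mathbf{b}\in\$^{\mathbf{x_2}}]\le 3/4$ for every candidate $\mathbf{b}=b_0b_1b'_1$. Upper-bounding the tripartite winning indicator from (\ref{wincon}) by the deprived party's single indicator, and then using the $\mathbf{x_2}$-independence established above, yields $P_{succ}^{\scalebox{.6}{T-DR}}\le 3/4$ for that branch; convexity over the decomposition (\ref{bcs}) closes the argument. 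The hard part will be the middle step: the target-independence must be derived rigorously from the tripartite analogues of the one-way no-signaling relations (extending Eq.~(\ref{cs1})), by writing Bob's reduced classical channel from $\mathbf{x_2}$ as a composition that factors through only $\mathbf{x_2}$-independent subsystems, rather than appealing to intuitions about ``what Bob could learn''.
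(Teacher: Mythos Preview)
Your proposal is correct and follows essentially the same route as the paper: reduce via convexity over the decomposition~(\ref{bcs}), identify in each of the six branches a party who is cut off from the unique neighbour holding the other half of their target Bell pairs, and bound that party's individual success by $3/4$. The paper argues the last step by a brief appeal to Proposition~\ref{prop6}, whereas you spell out the no-signaling/hiding reduction and the $|\{yy':\mathbf{b}\in\$^{yy'}\}|\le 3$ counting explicitly; this is the same argument at a higher level of detail, not a different approach.
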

\begin{proof}
In a process of type $W^{A\not\prec BC}$ communication from Alice is not possible to Bob as well as to Charlie. Thus Bob's success is bounded by $3/4$ (see Proposition \ref{prop6}). On the other hand, in a process of type $W^{BC\not\prec A}$ Alice's success is bounded by $3/4$ as neither Bob nor Charlie can communicate to Alice. Similar arguments hold for all other terms in Eq.(\ref{bcs}), and hence the claim follows from convexity. To achieve the bound, they can share a definite order process $W^{A\prec B\prec C}$ where Alice is in the causal past of Bob, who is in the causal past of Charlie. Using the strategy discussed in Proposition \ref{prop7}, Bob's and Charlie's guesses will be perfect whereas Alice's success is bounded by $3/4$. This completes the proof. 
\end{proof}
Naturally, the question arises whether a genuine inseparable causal process could be advantageous over the bi-causal processes. In the following section, we show this is indeed possible, even with a classical indefinite process.  

\subsection{Nontrivial Success in T-DR with LCCP}\label{s5d}
Given the $\mathbb{E}^{BW}\in$ LCCP Eq.(\ref{CP}), the players can obtain a nontrivial success in T-DR task. In the encoded state given to the players, two Bell states are shared between each pair of the players. Of course, the identity of the Bell state is not known to the individual parties. Given an encoded state, Alice performs $Z$-basis measurement on her part of one of the the Bell state shared with Bob, and performs $X$-basis measurement on her part of the the Bell state shared with Bob. Similarly, $Z$ and $X$ measurements are performed on the parts of Bell states shared with Charlie. Bob and Charlie follow a similar protocol. Outcome of all these different measurements can be compactly expressed as $G^{(H)}_{K}\in\{0,1\}$ -- outcome of $K$-basis measurement performed by the player $G$ on her part of the Bell state shared with the player $H$; with $K\in\{Z,X\}$ and $G,H\in\{A,B,C\}$. Given the encoded state $\rho^{\mathbf{x_1}\mathbf{x_2}\mathbf{x_3}}_{ABC}$, we have
\begin{subequations}\label{tcorelation}
\begin{align}
A^{(C)}_Z\oplus C^{(A)}_Z &=x_1,~~A^{(C)}_X\oplus C^{(A)}_X =x'_1,\label{corra}\\
B^{(A)}_Z\oplus A^{(B)}_Z &=x_2,~~B^{(A)}_X\oplus A^{(B)}_X =x'_2,\label{corrb}\\
C^{(B)}_Z\oplus B^{(C)}_Z &=x_3,~~C^{(B)}_X\oplus B^{(C)}_X =x'_3.\label{corrc}
\end{align}    
\end{subequations}
For their local measurement outcomes the players respectively evaluate a bit values
\footnotesize
\begin{align}
o_A=A^{(B)}_ZA^{(B)}_X,~~o_B=B^{(C)}_ZB^{(C)}_X,~~o_C=C^{(A)}_ZC^{(A)}_X, 
\end{align}
\normalsize
and send them to the environment $\mathbb{E}^{BW}$, which on the other hand returns back the bits $i_A$, $i_B$, and $i_C$ to Alice, Bob, and Charlie. The guesses in T-DR task for Alice, Bob and Charlie are given by 
\begin{subequations}
\begin{align}
a_0&=i_A,~~~a_1=\overline{A^{(C)}_Z},~~~a'_1=\overline{A^{(C)}_X}\label{guessa}\\
b_0&=i_B,~~~b_1=\overline{B^{(A)}_Z},~~~b'_1=\overline{B^{(A)}_X}\label{guessb}\\
c_0&=i_C,~~~a_1=\overline{C^{(B)}_Z},~~~c'_1=\overline{C^{(B)}_X}\label{guessc}
\end{align}   
\end{subequations}
\begin{center}
\begin{table}[t!]
\setlength{\tabcolsep}{0.3em} 
{\renewcommand{\arraystretch}{1.4}
\begin{tabular}{|c|c|c|c|c|c|}
\hline
maj & $(o_A,o_B,o_C)$ & $(i_A,i_B,i_C)$  & $\bf a$ & $\bf b$ & $\bf c$ \\ \hline\hline
\multirow{4}{*}{0} & $(0,0,0)$ & $(0,0,0)$  & $0\neg(00)$ & $0\neg(00)$ & $0\neg(00)$ \\
\cline{2-6}
& $(0,0,1)$ & $(1,0,0)$  & $100$ & $0\neg(00)$ & $0\neg(00)$ \\ 
\cline{2-6}
& $(1,0,0)$ & $(0,1,0)$  & $0\neg(00)$ & $100$ & $0\neg(00)$ \\ 
\cline{2-6}
&$(0,1,0)$ & $(0,0,1)$  & $0\neg(00)$ & $0\neg(00)$ & $100$ \\
\hline
\multirow{4}{*}{1} &$(0,1,1)$ & $(0,0,1)$  & $\textcolor{red}{000}$ & $0\neg(00)$ & $100$ \\ 
\cline{2-6}
&$(1,0,1)$ & $(1,0,0)$  & $100$ & $\textcolor{red}{000}$ & $0\neg(00)$ \\ 
\cline{2-6}
&$(1,1,0)$ & $(0,1,0)$  & $0\neg(00)$ & $100$ & $\textcolor{red}{000}$ \\ 
\cline{2-6}
&$(1,1,1)$ & $(0,0,0)$  & $\textcolor{red}{000}$ & $\textcolor{red}{000}$ & $\textcolor{red}{000}$ \\ 
\hline
\end{tabular}}
\caption{Input ${\bf x}=\mathbf{0}$. For the case ``maj$(o_A,o_B,o_C)=0$", all three players guess correctly. However, for the case ``maj$(o_A,o_B,o_C)=1$", at-least one of players' guess is not correct (marked in red). Here, $\neg(00)$ indicates any string not equal to $00$ i.e. $01/10/11$.}\label{tab2}
\vspace{-.5cm}
\end{table}
\vspace{-.7cm}
\end{center}
The success probability for ${\bf x}=\mathbf{0}\equiv000000$, turns out to be 
\footnotesize
\begin{align}
&P^{\scalebox{.6}{T-DR}}_{succ}({\bf x}=\mathbf{0})=\sum_{o_A,o_B,o_C}\sum_{{\bf g}\in\pounds^{\bf 0}}p(o_Ao_Bo_C{\bf g}|{\bf x}=\mathbf{0})\nonumber\\
=&\sum_{maj(o_A,o_B,o_C)=0}\sum_{{\bf g}\in\pounds^{\bf 0}}p(o_Ao_Bo_C|{\bf x}=\mathbf{0})p({\bf g}|{\bf x}=\mathbf{0}o_Ao_Bo_C)~+\nonumber\\
&\sum_{maj(o_A,o_B,o_C)=1}\sum_{{\bf g}\in\pounds^{\bf 0}}p(o_Ao_Bo_C|{\bf x}=\mathbf{0})p({\bf g}|{\bf x}=\mathbf{0}o_Ao_Bo_C).\label{win0}
\end{align}
\normalsize
As we can see from Table \ref{tab2}, for the case ``maj$(o_A,o_B,o_C)=1$", atleast one of the players violates the winning condition (\ref{wincon}), {\it i.e.},
\begin{align}
\sum_{{\bf g}\in\pounds^{\bf 0}}p({\bf g}|{\bf x}=\mathbf{0}o_Ao_Bo_C)=0.  
\end{align}
However, for the case ``maj$(o_A,o_B,o_C)=0$", all the players satisfy the winning condition (\ref{wincon}), {\it i.e.},
\begin{align}
\sum_{{\bf g}\in\pounds^{\bf 0}}p({\bf g}|{\bf x}=\mathbf{0}o_Ao_Bo_C)=1.  
\end{align}
Consequently, Eq.(\ref{win0}) becomes
\footnotesize
\begin{align}
P^{\scalebox{.6}{T-DR}}_{succ}({\bf x}=\mathbf{0})&=\sum_{maj(o_A,o_B,o_C)=0}p(o_Ao_Bo_C|{\bf x}=\mathbf{0})\nonumber\\
&=\left[\frac{3^3}{4^3}+3\times\frac{3^2}{4^3}\right]=\frac{27}{32}\approx 0.84>\frac{3}{4}~.
\end{align}
\normalsize
Similarly, it can be shown that $P^{\scalebox{.6}{T-DR}}_{succ}({\bf x})=27/32,~\forall~{\bf x}\in\{0,1\}^{\times 6}$, leading to $P^{\scalebox{.6}{T-DR}}_{succ}=27/32$. Therefore, the classical causally indefinite process $\mathbb{E}^{EW}$ exhibits nontrivial advantage over the quantum bi-causal processes in T-DR task. In fact, the success establishes the genuine  multipartite nature of causal indefiniteness. Note that in the above mentioned protocol all the players are efficiently able to communicate the required information by effectively implementing the maj$(o_A,o_B,o_C)=0$ loop in Fig.(\ref{fig4}) with a high probability. One can say that effectively clockwise communication is happening between the players. This clockwise communication is also in some sense necessary, as the encoding states also have this "clockwise" property (see Eq.(\ref{T-DRencoding})) i.e. Alice needs help from Charlie, Bob needs help from Alice and Charlie needs help from Bob . In Appendix \ref{flag} we discuss an interesting variant of the T-DR task where the referee does not reveal whether they have done a clockwise or anticlockwise encoding but rather encodes this information in the distributed state itself. Interestingly, we show that even though the three players beforehand do not know whether the referee has encoded in a clockwise or anticlockwise fashion the process $\mathbb{E}^{BW}$ still can provide an advantage by effectively using both branches in  Fig.(\ref{fig4}) which is impossible to do by a definite ordered process.

\subsection{Flagged T-DR}\label{flag}
In this flagged version of T-DR task (FT-DR) referee encodes the strings ${\bf x}={\bf x_1x_2x_3}$ into
\footnotesize
\begin{align}
\rho_{AA'BB'CC'}^{\bf x}&=\frac{1}{2}\left[\ket{000}\bra{000}_{A'B'C'}\right.\nonumber\\
&\hspace{1cm}\left.\otimes(\mathcal{B}_{AC}^{\mathbf{x_1}})^{\otimes2}\otimes(\mathcal{B}_{BA}^{\mathbf{x_2}})^{\otimes2}\otimes(\mathcal{B}_{CB}^{\mathbf{x_3}})^{\otimes2}\right]\nonumber\\
&~~+\frac{1}{2}\left[\ket{111}\bra{111}_{A'B'C'}\right.\nonumber\\
&\hspace{1cm}\left.\otimes(\mathcal{B}_{AB}^{\mathbf{x_1}})^{\otimes2}\otimes(\mathcal{B}_{BC}^{\mathbf{x_2}})^{\otimes2}\otimes(\mathcal{B}_{CA}^{\mathbf{x_3}})^{\otimes2}\right].  
\end{align}
\normalsize
Winning condition for FT-DR remains same as of Eq.(\ref{wincon}). All the players perform $Z$ basis measurement on the flagged state (primed systems). If they obtain outcome `$0$', they follow the strategy of T-DR with $\mathbb{E}^{BW}$. Otherwise, Eqs.(\ref{tcorelation}) get modified as
\begin{subequations}
\begin{align}
A^{(B)}_Z\oplus B^{(A)}_Z &=x_1,~~A^{(B)}_X\oplus B^{(A)}_X =x'_1,\label{corra'}\\
B^{(C)}_Z\oplus C^{(B)}_Z &=x_2,~~B^{(C)}_X\oplus C^{(B)}_X =x'_2,\label{corrb'}\\
C^{(A)}_Z\oplus A^{(C)}_Z &=x_3,~~C^{(A)}_X\oplus A^{(C)}_X =x'_3.\label{corrc'}
\end{align}    
\end{subequations}
In this case the players encode as 
\begin{align}
\bar{o}_A=A^{(C)}_ZA^{(C)}_X,~\bar{o}_B=B^{(A)}_ZB^{(A)}_X,~\bar{o}_C=C^{(B)}_ZC^{(B)}_X.    
\end{align}
And their guesses are 
\begin{subequations}
\begin{align}
a_0&=i_A,~a_1=\overline{A^{(B)}_Z},~a'_1=\overline{A^{(B)}_X}\label{guessa'}\\
b_0&=i_B,~b_1=\overline{B^{(C)}_Z},~b'_1=\overline{B^{(C)}_X}\label{guessb'}\\
c_0&=i_C,~a_1=\overline{C^{(A)}_Z},~c'_1=\overline{C^{(A)}_X}\label{guessc'}
\end{align}   
\end{subequations}
From symmetry of $\mathbb{E}^{BW}$, it follows that  $P^{\scalebox{.6}{FT-DR}}_{succ}\approx 0.84$. While $P^{\scalebox{.6}{T-DR}}_{succ}=3/4$ can be achieved in definite causal structure, it is not the case for FT-DR task. To see this consider the case $A\prec B\prec C$.\\
(i) if outcome on flagged state is `$0$', then they can ensure a success $3/4$: Alice and Bob can respectively help Bob and Charlie to guess their respective messages correctly.\\
(ii) for `$1$' outcome on flagged state, a success of $3^2/4^2$ can be ensured. While Alice can help Charlie only, Alice and Bob have to guess theire respective messages. 

Thus on an average the success becomes 
\begin{align}
P^{\scalebox{.6}{FT-DR}}_{succ}=\frac{1}{2}\left(\frac{3}{4}+\frac{3^2}{4^2}\right)=\frac{21}{32}<\frac{3}{4}.     
\end{align}
This demonstrates that sharing $\mathbb{E}^{BW}$ allows the players to effectively communicate in clockwise or anticlockwise fashion by suitably modifying their protocols of T-DR task on their will without giving rise to casual loops. However any causally ordered process would fail miserably to do so. 
Like proposition \ref{prop8}, in this case too obtaining a nontrivial bound for bi-causal quantum processes is not straightforward.

\end{document}